\theoremstyle{plain}
\newtheorem{theorem}{Theorem}[section]
\newtheorem{proposition}[theorem]{Proposition}
\newtheorem{corollary}[theorem]{Corollary}
\newtheorem{lemma}[theorem]{Lemma}
\newtheorem*{lemma*}{Lemma}
\newtheorem{problem}[theorem]{Problem}
\theoremstyle{definition}
\newtheorem{definition}[theorem]{Definition}
\newtheorem{remark}[theorem]{Remark}
\newtheorem{example}[theorem]{Example}
\newtheorem{construction}[theorem]{Construction}
\title{Private Information Retrieval from Transversal Designs}
\date{\today}
\author{
  Julien \textsc{Lavauzelle}\\
  Laboratoire LIX, École Polytechnique, Inria \& CNRS UMR 7161\\
  Université Paris-Saclay
  \thanks{This paper appears in: IEEE Transactions on Information Theory, on pages: 1-17, DOI: \href{https://doi.org/10.1109/TIT.2018.2861747}{10.1109/TIT.2018.2861747}.}
  \thanks{It was presented in part at the Tenth International Workshop on Coding and Cryptography 2017, September 18-22, 2017, Saint-Petersburg, Russia.}
  \thanks{This work is partially funded by  French ANR-15-CE39-0013-01 \enquote{Manta}.}  
}
\newcommand\FF{\mathbb{F}}
\newcommand\ZZ{\mathbb{Z}}
\newcommand\PP{\mathbb{P}}
\renewcommand\AA{\mathbb{A}}
\newcommand\rank{\mathrm{rank}}
\newcommand\calA{\mathcal{A}}
\newcommand\calB{\mathcal{B}}
\newcommand\calG{\mathcal{G}}
\newcommand\calD{\mathcal{D}}
\newcommand\calC{\mathcal{C}}
\newcommand\calQ{\mathcal{Q}}
\newcommand\calR{\mathcal{R}}
\newcommand\calO{\mathcal{O}}
\newcommand\calT{\mathcal{T}}
\newcommand\Code{\mathrm{Code}}
\newcommand\IC{\mathrm{IC}}
\newcommand\TD{\mathrm{TD}}
\newcommand\tTD{t\text{-}\mathrm{TD}}
\newcommand\genTD[1]{#1\text{-}\mathrm{TD}}
\newcommand\AG{\mathrm{AG}}
\newcommand\PG{\mathrm{PG}}
\newcommand\OA{\mathrm{OA}}
\newcommand\parity{\mathcal{C}_{\textrm{par}}}
\newcommand\mydef{\coloneqq}
\begin{document}

\maketitle


\begin{abstract}
  Private information retrieval (PIR) protocols allow a user to retrieve entries of a database without revealing the index of the desired item. Information-theoretical privacy can be achieved by the use of several servers and specific retrieval algorithms. Most known PIR protocols focus on decreasing the number of bits exchanged between the client and the server(s) during the retrieval process. On another side, Fazeli \emph{et al.} introduced so-called \emph{PIR codes} in order to reduce the storage overhead on the servers. However, few works address the issue of the computation complexity of the servers.

  It this paper, we show that a specific encoding of the database yields PIR protocols with reasonable communication complexity, low storage overhead and optimal computational complexity for the servers. This encoding is based on incidence matrices of transversal designs, from which a natural and efficient recovering algorithm is derived. We also present several instances for our construction, which make use of finite geometries and orthogonal arrays. We finally give a generalisation of our main construction in order to resist collusions of servers.
\end{abstract}


\section{Introduction}

\subsection{Private Information Retrieval}

A private information retrieval (PIR) protocol aims at ensuring a user that he can retrieve some part $D_i$ of a remote database $D$ without revealing the index $i$ to the server(s) holding the database. For example, such protocols can be applied in medical data storage where physicians would be able to access parts of the genome while hiding the specific gene they analyse. The PIR paradigm was originally introduced by Chor, Goldreich, Kushilevitz and Sudan~\cite{ChorGKS95, ChorKGS98}.


A naive solution to the problem consists in downloading the entire database each time the user wants a single entry. But the communication complexity would then be overwhelming, so we look for PIR protocols exchanging less bits. However, Chor \emph{et al.} proved that, when the $k$-bits database is stored on a single server, a PIR protocol which leaks no information on the index $i$ (such a protocol being called \emph{information-theoretically secure}) must use $\Omega(k)$ bits of communication~\cite{ChorKGS98}. Two alternatives were then considered: restricting the protocol to computational security (initiated by Chor and Gilboa~\cite{ChorG97}), or allowing several servers to store the database. Our work focuses on the last one.

In many such PIR protocols the database is \emph{replicated} on $\ell$ servers, $\ell > 1$. Informally, the idea is that each server is asked to compute some partial information related to a random-like query sent by the user. Then the user collects all the servers' answers and retrieves the desired symbol with an appropriate algorithm. For instance, Chor \emph{et al.}~\cite{ChorKGS98} considered a smart arrangement of the database entries in a $\log(\ell)$-dimensional array, and used XOR properties to mask the index of the desired item and to retrieve the associated symbol. Their protocol features decreasing communication as a function of the number of servers: with $\ell$ servers, the communication is $\calO(\ell \log(\ell) k^{1/\log \ell})$ bits. For constant $\ell$, the authors also proposed a PIR protocol with communication $\calO(k^{1/\ell})$. A few years later, Katz and Trevisan~\cite{KatzT00} showed that any smooth locally decodable code $\calC \subseteq \Sigma^n$ of locality $\ell$ gives rise to a PIR protocol with $\ell$ servers whose communication complexity is $\calO(\ell \log(n |\Sigma|))$ --- see \cite{Yekhanin12} for a good survey on locally decodable codes (LDCs) and their applications in PIR protocols. Building on this idea, many PIR schemes (notably \cite{BeimelIKR02, Yekhanin08, Efremenko12, DvirG16}) successively decreased the communication complexity, achieving $\calO(k^{\sqrt{\log \log k /\log k}})$ with only $\ell = 2$ servers. However, only few of them tried to lighten the computational and storage cost on the server side.


By preprocessing the database, Beimel, Ishai and Malkin~\cite{BeimelIM04} were the first to address the minimization of the server storage/computation in PIR protocols. Then, initiated by Fazeli, Vardy and Yaakobi~\cite{FazeliVY15}, recent works used the concept of \emph{PIR codes} to address the storage issue. The idea is to turn an $\ell$-server replication-based PIR protocol into a more-than-$\ell$-server distributed PIR protocol with a smaller overall storage overhead. For this purpose, the user encodes the database and distributes pieces of the associated codeword among the servers, such that servers hold distinct parts of the database (plus some redundancy). Through this transformation, both communication complexity and computational cost keep the same order of magnitude, but the storage overhead corresponds to the PIR code's one, which can be brought arbitrarily close to $1$ when sufficiently many servers are used. Several recent works also address the PIR issue on previously coded databases~\cite{TajeddineR16}, and/or aim at reaching the so-called capacity of the model~\cite{SunJ17}. However, while the storage drawback seems to be solved, huge computational costs still represent a barrier to the practicality of such PIR protocols.

\subsection{Motivations and results}

As pointed out by Yekhanin~\cite{Yekhanin12}, \enquote{the overwhelming computational complexity of PIR schemes (...) currently presents the main bottleneck to their practical deployment}. Consider a public database which is frequently queried, \emph{e.g.} a database storing stock exchange prices where private queries could be very relevant. Fast retrieval is crucial is this context. Hence, one cannot afford each run of the PIR protocol to be computationally inefficient, for instance $\Omega(k)$ if $k$ is the size of the database. Therefore, a relevant goal is to build PIR protocols with sublinear computational complexity in the length of the database stored by each server.

Naively, the computational complexity of a PIR protocol could be drastically reduced if we let all possible answers to its queries to be precomputed. Of course, storing all these answers dramatically increases the needed storage, so let us focus on a construction due to Augot, Levy-dit-Vehel and Shikfa~\cite{AugotLS14} --- anterior to the PIR codes breakthrough~\cite{FazeliVY15} --- that address this issue.

The construction of Augot \emph{et al.}~\cite{AugotLS14} uses a specific family of high-rate locally decodable codes called \emph{multiplicity codes} introduced by Kopparty, Saraf and Yekhanin~\cite{KoppartySY14}. But instead of \emph{replicating} the database on $\ell$ servers ($\ell > 1$ being the locality of the codes), the authors \emph{split} an encoded version $c$ of the database $D$ into parts $c^{(1)},\dots, c^{(\ell)}$, and share these parts on the servers. The main difference with PIR codes~\cite{FazeliVY15} is that Augot \emph{et al.}'s construction does not purpose to \emph{emulate} a lighter PIR protocol with an existing one. It uses specific properties of the encoding as a way to split the database on several servers. In short, the multiplicity codes they use feature \emph{both} the privacy of the PIR protocol and the storage reduction for the servers. We refer to Section~\ref{sec:comparison} for more details on the construction.

In this work, we reconsider this \enquote{codeword support splitting} idea, and we propose a new generic framework for the construction of PIR protocols which takes into account the computational complexity issue. More precisely, the protocols we give are computationally optimal with respect to the communication complexity of the protocol, in the sense that each server needs to read \emph{only one} entry in the part of the database it holds.

Our construction is based on combinatorial structures called \emph{transversal designs}, from which we naturally derive a linear code, a partition of its support and a \emph{local} reconstruction algorithm. In practice, we give several instances of transversal designs that lead to codes with large rate, hence to PIR protocols with low storage overhead. The two first families come from incidences between points and lines in the affine (resp. projective) space. They are closely related to the classical geometric designs of $1$-flats. A third family of instances makes use of a classical transformation of so-called \emph{orthogonal arrays} of strength $2$ into transversal designs. We then proceed to a thorough study of the dimension of codes coming from \emph{MDS-like} orthogonal arrays of strength $2$. A fourth and last family of practical instances appears when showing that orthogonal arrays built from \emph{divisible codes} lead to PIR protocols with storage expansion less than $2$. We finally  prove that orthogonal arrays with strength $t > 2$ allow the construction of PIR protocols resisting to collusions of up to $t-1$ servers. We exhibit and analyzed instances of some orthogonal arrays with large strength to conclude this work.

\subsection{Organization}

We start by giving two formal definitions of PIR protocols in Section~\ref{sec:definitions}, depending on whether the database is replicated or distributed on the servers. We also present the standard construction of replication-based PIR protocols from smooth locally decodable codes. In Section~\ref{sec:designs}, we recall definitions of combinatorial structures and their associated codes. The $1$-private PIR protocols based on transversal designs are introduced in Section~\ref{sec:no-collusion-PIR}. Section~\ref{sec:explicit-constructions}  is devoted to four families of instances of the PIR construction having practical parameters. Finally, a generalisation of our construction is given in Section~\ref{sec:collusion-PIR} in order to keep up with collusions of servers, and a comparison with the PIR protocols coming from multiplicity codes is presented in Section~\ref{sec:comparison}.

\section{Definitions and related constructions}
\label{sec:definitions}

We first recall that we are only concerned with information-theoretically secure PIR protocols. In this paper, we denote by $U$ the \emph{user} (or \emph{client}) of the PIR protocol. User $U$ owns a database denoted by $D = (D_i)_{1 \le i \le k} \in \FF_q^k$, where $\FF_q$ represents the finite field with $q$ elements. Database $D$ hence contains $|D| = k \log q$ bits. We also denote by $S_1, \dots, S_{\ell}$ the $\ell$ servers involved in the PIR protocol.

Given $A$, $B$ two sets, with $|B| = n < \infty$, we denote by $A^B$ the set of $n$-tuples $a = (a_b)_{b \in B}$ of $A$-elements indexed by $B$, which can also be seen as functions from $B$ to $A$. For $T \subset B$, we also write $a_{|T} \mydef (a_t)_{t \in T}$ the restriction of the tuple $a$ to the coordinates of $T$.

\subsection{Two definitions for PIR protocols}

A vast majority of existing PIR schemes start by simply cloning the database $D$ on all the servers $S_1,\dots,S_\ell$. Then, the role of each server $S_j$ is to compute some combination of symbols from $D$, related to the query sent by $U$. This computation has a non-trivial cost, so in a certain sense, the computational complexity of the privacy of the  PIR scheme is mainly devoted to the servers.

More formally, one can define \emph{replication-based PIR protocols} as follows:
\begin{definition}[standard, or replication-based PIR protocol]
  Assume that every server $S_j$, $1 \le j \le \ell$, stores a copy of the database $D$. An $\ell$-server replication-based PIR protocol is a set of three algorithms $(\calQ, \calA, \calR)$ running the following steps on input $i \in [1,k]$:
  \begin{enumerate}
  \item \emph{Query generation:} the randomized algorithm $\calQ$ generates $\ell$ queries $(q_1, \hdots, q_{\ell}) \mydef \calQ(i)$. Query $q_j$ is sent to server $S_j$.
  \item \emph{Servers' answer:} each server $S_j$ computes an answer $a_j = \calA(q_j, D)$ and sends it back to the user\footnote{\label{note1}algorithm $\calA \mydef \calA_j$ may depend on $j$}.
  \item \emph{Reconstruction:} denote by $\mathbf{a} = (a_1,\dots,a_\ell)$ and $\mathbf{q} = (q_1,\dots,q_\ell)$. User $U$ computes and outputs $r = \calR(i, \mathbf{a}, \mathbf{q})$.
  \end{enumerate}
  The PIR protocol is said:
  \begin{itemize}
    \item \emph{correct} if $r = D_i$ when the servers follow the protocol;
    \item \emph{$t$-private} if, for every $(i, i') \in [1,k]^2$ and every $T \subseteq [1, \ell]$ such that $|T| \le t$, the distributions $\calQ(i)_{|T}$ and $\calQ(i')_{|T}$ are the same. We also say that the PIR protocol resists $t$ collusions of servers.
  \end{itemize}
  We call \emph{communication complexity} the number of bits sent between the user and the servers, and \emph{server} (resp. \emph{user}) \emph{computational complexity} the maximal number of $\FF_q$-operations made by a server in order to compute an answer $a_j$ (resp. made by $\calR$ to reconstruct the desired item).
\end{definition}

According to this definition, one sees that the servers must jointly carry the $\ell$ copies of the database, so the \emph{storage overhead} of the scheme is $(\ell-1) |D|$ bits. Moreover, since $D$ is a raw database without specific structure, the algorithm $\calA$ has no reason to be trivial and can incur superlinear computations for the servers  --- which is verified for most of current replication-based PIR protocols.

A way to reduce the computation cost of PIR protocols is to preprocess the database. Therefore we need to model PIR protocols for which the database can be encoded and distributed over the servers. From now on, let $c = (c_i)_{i \in I}$ denote \emph{an encoding of the database} $D$, \emph{i.e.} the image of $D$ by an injective map $\FF_q^k \to \FF_q^I$, with $|I| = n \ge k$. Besides, for convenience we assume that $I = [1,s] \times [1,\ell]$ and for readability we write $c_{(i_1, i_2)} = c_{i_1}^{(i_2)}$ and $c^{(j)} = (c_r^{(j)})_{r \in [1,s]}$.
\begin{definition}[distributed PIR protocol]
  \label{def:distributed-PIR}
  Assume that for $1 \le j \le \ell$, server $S_j$ holds the part $c^{(j)}$ of the encoded database. An $\ell$--server distributed PIR protocol is a set of three algorithms $(\calQ, \calA, \calR)$ running the following steps on input $i \in I$:
  \begin{enumerate}
  \item \emph{Query generation:} the randomized algorithm $\calQ$ generates $\ell$ queries $(q_1, \hdots, q_{\ell}) \mydef \calQ(i)$. Query $q_j$ is sent to server $S_j$.
  \item \emph{Servers' answer:} each server $S_j$ computes an answer $a_j = \calA(q_j, c^{(j)})$ and sends it back to the user.
  \item \emph{Reconstruction:}  denote by $\mathbf{a} = (a_1,\dots,a_\ell)$ and $\mathbf{q} = (q_1,\dots,q_\ell)$. User $U$ computes and outputs $r = \calR(i, \mathbf{a}, \mathbf{q})$.
  \end{enumerate}
\emph{Correctness} and \emph{privacy} properties are identical to those of replication-based PIR protocols. Similarly, one can also define \emph{communication} and \emph{computational complexities}, and since the database $D$ has been encoded, we finally define the \emph{storage overhead} as the number of redundancy bits stored by the servers, that is, $(s \ell - k)\log q$.
\end{definition}

In this paper, we focus on distributed PIR protocols with low computational complexity on the server side. More precisely, we build PIR protocols where the answering algorithm $\calA$ consists only in \emph{reading some symbols} of the database. Thus, our PIR protocols are computationally optimal on the server side, in a sense that, compared to the non-private retrieval, they incur no extra computational burden for the each server taken individually.

\subsection{PIR protocols from locally decodable codes}

As pointed out in the introduction, Augot \emph{et al.} \cite{AugotLS14} used a family of locally decodable codes (LDC) to design a distributed PIR scheme. LDCs are known to give rise to PIR protocols for a long time~\cite{KatzT00}, but we emphasize that the main idea from~\cite{AugotLS14} is to benefit from the fact that the encoded database can be smartly partitioned with respect to the queries of the local decoder.

Based on the seminal work of Katz and Trevisan~\cite{KatzT00}, we briefly remind how to design a PIR protocol based on a perfectly smooth locally decodable code. First, let us define (linear) locally decodable codes.
\begin{definition}[locally decodable code]
  Let $\Sigma$ be a finite set, $2 \le \ell \le k \le n$ be integers, and $\delta, \epsilon \in [0,1]$. A code $\calC : \Sigma^k \to \FF_q^n$ is $(\ell, \delta, \epsilon)$--locally decodable if and only if there exists a randomized algorithm $\calD$ such that, for every input $i \in [1,k]$ we have:
  \begin{itemize}
  \item for all $m \in \Sigma^k$ and all $y \in \FF_q^n$, if $|\{ j \in [1,n], y_j \ne \calC(m)_j \}| \le \delta n$, then
    \[
    \mathbb{P}( \calD^{(y)}(i) = m_i ) \ge 1 - \epsilon\,,
    \]
    where the probability is taken over the internal randomness of $\calD$;
  \item $\calD$ reads at most $\ell$ symbols $y_{q_1}, \dots, y_{q_\ell}$ of $y$.
  \end{itemize}
  Notation $\calD^{(y)}$ refers to the fact that $\calD$ has oracle access to single symbols $y_{q_j}$ of the word $y$. The parameter $\ell$ is called the \emph{locality} of the code. Moreover, the code $\calC$ is said \emph{perfectly smooth} if on an arbitrary input $i$, each individual query of the decoder $\calD$ is uniformly distributed over the coordinates of the word $y$.
\end{definition}

Now let us say a user wants to use a PIR protocol on a database $D \in \Sigma^k$, and assume there exists a perfectly smooth locally decodable code $\calC \subset \FF_q^n$ of dimension $k$ and locality $\ell$. Figure~\ref{fig:LDC-to-encoded-PIR} presents a distributed PIR protocol based on $\calC$.

\begin{figure}[h!]
  \setlength{\fboxsep}{6pt}
  \centering
  \fbox{\parbox{0.93\columnwidth}{
      \textbf{1) Initialization step.} User $U$ encodes $D$ into a codeword $c' \in \calC$. Each server $S_1,\dots,,S_\ell$ holds a copy of $c'$. In the formalism of Definition~\ref{def:distributed-PIR}, it means that $c^{(j)} := c'$, for $j = 1,\dots,\ell$.

      \textbf{2) Retrieving step for symbol $D_i$.} Denote by $\calD$ a local decoding algorithm for $\calC$.

      \begin{enumerate}
      \item \emph{Queries generation:} user $U$ calls $\calD$ to generate at random a query $(q_1, \dots, q_{\ell})$ for decoding the symbol $D_i$. Query $q_j$ is sent to server $S_j$.
      \item \emph{Servers' answer:} each server $S_j$ \emph{reads} the encoded symbol $a_j \mydef c'_{q_j}$. Then $S_j$ sends $a_j$ to $U$.
      \item \emph{Reconstruction:} user $U$ collects the $\ell$ codeword symbols $(c'_{q_j})_{j \in [1,\ell]}$ and feeds the local decoding algorithm $\calD$ in order to retrieve $D_i$.
      \end{enumerate}
    }
  }
  \caption{\label{fig:LDC-to-encoded-PIR}A distributed PIR protocol based on a locally decodable code $\calC$.}
\end{figure}

The main drawback of these LDC-based PIR protocols is their storage overhead, since the $\ell$ servers must store $\ell n/k = \ell/R$ times more data than the raw database ($R \mydef k/n$ represents the \emph{information rate}, or \emph{rate}, of the code). This issue becomes especially crucial as building LDCs with small locality and high rate is highly non-trivial.

The idea of Augot, Levy-dit-Vehel and Shikfa~\cite{AugotLS14} for reducing the storage overhead is to benefit from a natural partition of the support of multiplicity codes~\cite{KoppartySY14}. Assume that each codeword $c \in \calC$ can be \emph{split} into $\ell$ disjoint parts $c^{(1)}, \dots, c^{(\ell)}$, such that each coordinate $q_j$ of any possible query $(q_1,\dots,q_\ell)$ of the PIR protocol corresponds to reading some symbols on $c^{(j)}$. By sending the part $c^{(j)}$ to server $S_j$, the PIR protocol of Figure~\ref{fig:LDC-to-encoded-PIR} can be improved in order to save storage. We devote Section~\ref{sec:comparison} to more explanation on this construction, as well as to a comparison with our schemes.

Finally, one can notice that the communication complexity of LDC-based PIR protocols depends on the locality of the code, while the smoothness of the code serves their privacy. We also point out two important remarks.
\begin{enumerate}
\item Assuming a noiseless transmission and \emph{honest-but-curious} servers (\emph{i.e.} they want to discover the index of the desired symbol but never give wrong answers), one \emph{does not need} a powerful local decoding algorithm. Indeed, it should be possible to reconstruct the desired symbol $D_i$ by local decoding only one erasure on the codeword. For instance, computing a single low-weight parity-check sum should be enough.
\item Smoothness is sufficient for $1$-privacy, but we need more structure for preventing collusions of servers.
\end{enumerate}

Coupled with the fact that we want to split the database over several servers, these remarks lead us to design other kinds of encoding, which answer as close as possible the needs of private information retrieval protocols. Our construction relies on combinatorial structures, namely \emph{transversal designs}, that we recall in the upcoming section.

\section{Transversal designs and codes}
\label{sec:designs}

Let us give here the definition of transversal designs and how to build linear codes upon them. We refer to~\cite{AssmusK92},~\cite{Stinson04} and~\cite{ColbournD06} for complementary details.

\begin{definition}[block design]
  A \emph{block design} is a pair $\calD = (X, \calB)$ where $X$ is a finite set of so-called \emph{points}, and $\calB$ is a set of non-empty subsets of $X$ called the \emph{blocks}.
\end{definition}

\begin{definition}[incidence matrix]
  Let $\calD = (X, \calB)$ be a block design. An \emph{incidence matrix} $M_{\calD}$ of $\calD$ is a matrix of size $|\calB| \times |X|$, whose $(i,j)-$entry, for $i \in \calB$ and $j \in X$, is:
  \[
  \left\{
  \begin{array}{ll}
    1 & \text{ if the block } i \text{ contains the point } j,\\
    0 & \text{ otherwise.}
  \end{array}
  \right.
  \]
  The \emph{$q$-rank} of $M_{\calD}$ is the rank of $M_{\calD}$ over the field $\FF_q$.
\end{definition}

For $B \subset X$, the \emph{incidence vector} $\mathds{1}_B \in \{0,1\}^X$ is the row vector whose $x$-th coordinate is $1$ if and only if $x \in B$. Let us notice that, given a design $\calD = (X, \calB)$, one can build $M_{\calD}$ by stacking incidence vectors of blocks $B \in \calB$.

Of course, any design admits many incidence matrices, depending on the way points and blocks are ordered. However, all these incidence matrices are equal up to some permutation of their rows and columns, and, in particular, they all have the same $q$-rank. Hence, we call \emph{$q$-rank of a design} the $q$-rank of any of its incidence matrices. Moreover, from now on we consider incidence matrices of designs up to an ordering of points and blocks, and we abusively refer to \emph{the} incidence matrix $M_{\calD}$ of a design $\calD$.

\begin{example}\label{ex:AGdesign}
  Let $\AA^2(\FF_3)$ be the affine plane over the finite field $\FF_3$, and $X$ be the set consisting of its $9$ points:
  \[
  \setlength\arraycolsep{1pt}
  \begin{array}{rl}
  X = \{ &(0,0), (0,1), (0,2), (1,0), (1,1), \\
  &(1,2), (2,0), (2,1), (2,2) \,\}\,.
  \end{array}
  \]
  We define the block set $\calB$ as the set of the $12$ affine lines of $\AA^2(\FF_3)$:
  \[
    \setlength\arraycolsep{1pt}
    \begin{array}{rll}
      \calB = \{&\{ (0,0), (0,1), (0,2) \}, &\{ (1,0), (1,1), (1,2) \}, \\
                &\{ (2,0), (2,1), (2,2) \}, &\{ (0,0), (1,1), (2,2) \}, \\
                &\{ (1,0), (2,1), (0,2) \}, &\{ (2,0), (0,1), (1,2) \}, \\
                &\{ (0,0), (2,1), (1,2) \}, &\{ (1,0), (0,1), (2,2) \}, \\
                &\{ (2,0), (1,2), (0,2) \}, &\{ (0,0), (1,0), (2,0) \}, \\
                &\{ (0,1), (1,1), (2,1) \}, &\{ (0,2), (1,2), (2,2) \}\,\}\,.
    \end{array}
  \]
  The pair $\calD = (X, \calB)$ is then a block design, and its associated $(12\times9)$--incidence matrix is
  \[
  M_{\calD} =
  \left(\begin{array}{rrrrrrrrr}
1 & 1 & 1 & 0 & 0 & 0 & 0 & 0 & 0 \\
1 & 0 & 0 & 1 & 0 & 0 & 1 & 0 & 0 \\
1 & 0 & 0 & 0 & 1 & 0 & 0 & 0 & 1 \\
1 & 0 & 0 & 0 & 0 & 1 & 0 & 1 & 0 \\
0 & 1 & 0 & 1 & 0 & 0 & 0 & 0 & 1 \\
0 & 1 & 0 & 0 & 1 & 0 & 0 & 1 & 0 \\
0 & 1 & 0 & 0 & 0 & 1 & 1 & 0 & 0 \\
0 & 0 & 1 & 1 & 0 & 0 & 0 & 1 & 0 \\
0 & 0 & 1 & 0 & 1 & 0 & 1 & 0 & 0 \\
0 & 0 & 1 & 0 & 0 & 1 & 0 & 0 & 1 \\
0 & 0 & 0 & 1 & 1 & 1 & 0 & 0 & 0 \\
0 & 0 & 0 & 0 & 0 & 0 & 1 & 1 & 1
  \end{array}\right).
  \]
  A computation shows that over the field $\FF_2$, matrix $M_{\calD}$ is full-rank, while over $\FF_3$, it has only rank $6$.
\end{example}

\begin{definition}[transversal design]
  Let $s, \ell \ge 2$ and $\lambda \ge 1$ be integers. A \emph{transversal design}, denoted $\TD_{\lambda}(\ell, s)$, is a block design $(X, \calB)$ equipped with a partition $\calG = \{ G_1,\dots,G_\ell \}$ of $X$ called the set of \emph{groups}, such that:
  \begin{itemize}
  \item $|X| = \ell s$;
  \item any group in $\calG$ has size $s$ and any block in $\calB$ has size $\ell$;
  \item any unordered pair of elements from $X$ is contained either in one group and no block or in no group and $\lambda$ blocks.
  \end{itemize}
  If $\lambda = 1$, we use the simpler notation $\TD(\ell, s)$.
\end{definition}

\begin{remark}
  A block cannot be secant to a group in more than one point, otherwise the third condition of the definition would be disproved. Moreover, since the block size equals the number of groups, any block must meet any group. Hence the following holds:
  \[
  \forall (B, G) \in \calB \times \calG,\, |B \cap G| = 1\,.
  \]
  The definition also implies there must lie exactly $\lambda s^2$ blocks in $\calB$.
\end{remark}

\begin{example}
  Let $\calD = (X, \calB)$ be the block design defined in Example~\ref{ex:AGdesign}. Define $\calG$ to be any set of $3$ parallel lines from $\calB$ which partitions the point set $X$. For instance, one can consider
  \[
    \setlength\arraycolsep{1pt}
    \begin{array}{rl}
      \calG = \{ & \{ (0,0), (0,1), (0,2) \}, \\
                 & \{ (1,0), (1,1), (1,2) \}, \\
                 & \{ (2,0), (2,1), (2,2) \} \, \}\,.
    \end{array}
  \]
  Then, $\calT = (X, \calB \setminus \calG, \calG)$ is a transversal design $\TD(3, 3)$. Indeed, $\calT$ is composed of $\ell s = 9$ points, $\ell = 3$ groups of size $s = 3$ and $s^2 = 9$ blocks of size $\ell = 3$ each. Moreover, in the affine plane every unordered pair of points belongs simultaneously to a unique line, which is represented in $\calT$ either by a group or by a block. More generally, for any prime power $q$, a transversal design $\TD(q,q)$ can be built with the affine plane $\AA^2(\FF_q)$. A generalisation of this construction will be given in Subsection~\ref{subsec:affine-geometry}.
\end{example}

A simple way to build linear codes from block designs is to associate a parity-check equation of the code to each incidence vector of a block of the design. We recall that the \emph{dual code} $\calC^\perp$ of a code $\calC \subseteq \FF_q^n$ is the linear vector space consisting of vectors $h \in \FF_q^n$ such that $\forall c \in \calC, \sum_{i=1}^n c_i h_i = 0$. 

\begin{definition}[code of a design]\label{def:code-of-a-design}
  Let $\FF_q$ be a finite field, $\calD = (X, \calB)$ be a block design and $M_{\calD}$ be its incidence matrix. The code $\Code_q(\calD)$ is the $\FF_q$-linear code of length $|X|$ admitting $M_{\calD}$ as a parity-check matrix.
\end{definition}

\begin{remark}
  The code $\Code_q(\calD)$ is uniquely defined up to a chosen order of the points $X$. For different orders, the arising codes remain permutation-equivalent. Also notice that the way blocks are ordered does not affect the code.
\end{remark}

For any design $\calD$, the dimension over $\FF_q$ of $\Code_q(\calD)$ equals $|X| - \rank_q(M_{\calD})$. Since $M_\calD$ has coefficients in $\{0, 1\}$, one must notice that $\rank_q(M_{\calD}) = \rank_p(M_{\calD})$, where $p$ is the characteristic of the field $\FF_q$.

\begin{remark}
  Standard literature (\emph{e.g.}~\cite{AssmusK92}) sometimes defines $\Code_q(\calD)$ (and not $\Code_q(\calD)^{\perp}$) to be the vector space \emph{generated} by the incidence matrix of the design. We favor this convention because $\mathrm{Code}_q(\calD)$ will serve to \emph{encode} the database in our PIR scheme.
\end{remark}

\begin{example}
  The design $\calD$ from Example~\ref{ex:AGdesign} gives rise to $\calC = \mathrm{Code}_3(\calD)$, a linear code over $\FF_3$, of length $9$ and dimension $3$. A full-rank generator matrix of $\calC$ is given by:
  \[
  G = \left(\begin{array}{rrrrrrrrr}
1 & 1 & 1 & 1 & 1 & 1 & 1 & 1 & 1 \\
0 & 0 & 0 & 1 & 1 & 1 & 2 & 2 & 2 \\
0 & 1 & 2 & 0 & 1 & 2 & 0 & 1 & 2 
  \end{array}\right).
  \]
  One may notice that this code is the generalized Reed-Muller code of degree $1$ and order $2$ over $\FF_3$, that is, the evaluation code of bivariate polynomials of total degree at most $1$ over the whole affine plane $\FF_3^2$.
\end{example}

\begin{definition}[systematic encoding]\label{def:systematic-encoding}
  Let $\calC \subseteq \FF_q^n$ be a linear code of dimension $k \le n$. A systematic encoding for $\calC$ is a one-to-one map $\phi : \FF_q^k \to \calC$, such that there exists an injective map $\sigma : [1,k] \to [1,n]$ satisfying:
\[
\forall m \in \FF_q^k, \forall i \in [1,k],\, m_i = \phi(m)_{\sigma(i)}\,.
\]
The set $\sigma([1,k]) \subseteq [1,n]$ is called an \emph{information set} of $\calC$.
\end{definition}

In other words, a systematic encoding allows to view the message $m$ as a subword of its associated codeword $\phi(m) \in \calC$. For instance, it is useful for retrieving $m$ from $c$ efficiently, when the codeword $c$ has not been corrupted. A systematic encoding exists for any code $\calC$, is not necessarily unique, and can be computed through a Gaussian elimination over any generator matrix of the code. Also notice that this computation can be tedious for large codes.

\section{$1$-private PIR protocols based on transversal designs}
\label{sec:no-collusion-PIR}

In this section we present our construction of PIR protocols relying on transversal designs. The idea is that the knowledge of one point of a block of a transversal design gives (almost) no information on the other points lying on this block. The code associated to such a design then transfers this property to the coordinates of codewords. Hence, we obtain a PIR protocol which can be proven $1$-private, that is, which ensures perfect privacy for non-communicating servers. Though this protocol cannot resist collusions, we will see in Section~\ref{sec:collusion-PIR} that a natural generalisation leads to $t$-private PIR protocols with $t > 1$.

Notice that both Fazeli \emph{et al.}'s work~\cite{FazeliVY15} and ours make use of codes in order to save storage in PIR protocols. Nevertheless, we emphasize that the constructions are very different, since Fazeli \emph{et al.} emulate a PIR protocol from an existing one while we build our PIR protocols from scratch.

\subsection{The transversal-design-based distributed PIR protocol}

Let $\calT$ be a transversal design $\TD(\ell, s)$ and $n = |X| = \ell s$. Denote by $\calC = \Code_q(\calT) \subseteq \FF_q^n$ the associated $\FF_q$-linear code, and let $k = \dim_{\FF_q} \calC$. Our PIR protocol is defined in Figure~\ref{fig:no-collusion-PIR}. We then summarize the steps of the construction in Figure~\ref{fig:summary}.

\begin{figure}[h!]
  \setlength{\fboxsep}{6pt}
  \centering
  \fbox{\parbox{0.93\columnwidth}{

      \textbf{Parameters:} $\calT = (X, \calB, \calG)$ is a $\mathrm{TD}_{\lambda}(\ell, s)$; $\calC = \mathrm{Code}_q(\calT)$ has length $n = \ell s$ and dimension $k$.
      \medskip

      \textbf{1) Initialization step.}

      \begin{enumerate}
      \item \emph{Encoding.} User $U$ computes a systematic encoding of the database $D \in \FF_q^k$, resulting in the codeword $c \in \calC$.
      \item \emph{Distribution.} Denote by $c^{(j)} = c_{|G_j}$ the symbols of $c$ whose support is the group $G_j \in \calG$. Each server $S_j$ receives $c^{(j)}$, for $1 \le j \le \ell$.
      \end{enumerate}

      \textbf{2) Retrieving step for symbol $c_i$ for $i \in X$.} Denote by $j^* \in [1, \ell]$ the index of the unique group $G_{j^*}$ which contains $i$ --- that is, $c_i = c_r^{(j^*)}$ for some $r \in [1,s]$. Also denote by $\calB^*$ the subset of blocks containing $i$. The three steps of the distributed PIR protocol are:

      \begin{enumerate}
      \item \emph{Queries generation.} $U$ picks uniformly at random a block $B \in \calB^*$. For $j \ne j^*$, user sends the unique index $q_j \in B \cap G_j$ to server $S_j$. Server $S_{j^*}$ receives a random query $q_{j^*}$ uniformly picked in $G_{j^*}$. To sum up ($\xleftarrow{\$}$ stands for \enquote{picked uniformly at random in}):
        \[
        \left\{
        \begin{array}{rcll}
          \calQ(i)_{j^*} &\xleftarrow{\$} &G_{j^*}, &\text{ for } j^* \text{ s.t. } i \in G_{j^*}\\
          B             &\xleftarrow{\$} &\calB^* &\multirow{2}{*}{$\text{ for } j \ne j^*$}\\
          \calQ(i)_j    &\leftarrow      &B \cap G_j, &
        \end{array}
        \right.
        \]
      \item \emph{Servers' answer.} Each server $S_j$ (including $S_{j^*}$) reads $a_j \mydef c_{q_j}$ and sends it back to the user. That is,
        \[
        \calA(q_j, c^{(j)}) = c_{q_j}\,.
        \]
      \item \emph{Reconstruction.} Denote by $\mathbf{a} = \{ a_1, \dots, a_\ell \}$ and $\mathbf{q} = \{ q_1, \dots, q_\ell \}$. User $U$ computes
        \[
        r = \calR (i, \mathbf{a}, \mathbf{q} ) \mydef - \sum_{j \ne j^*} a_j = - \sum_{j \ne j^*} c_{q_j}
        \]
         and outputs $r$.
      \end{enumerate}
  \vskip -0.2cm
  }}
  \caption{A $1$-private distributed PIR protocol based on the $\FF_q$-linear code defined by a transversal design\label{fig:no-collusion-PIR}.}
\end{figure}

\begin{figure}[h!]
  
  \fbox{\parbox{0.93\columnwidth}{
      \[
      \xymatrixcolsep{4pc}
      \entrymodifiers={+!!<0pt,\fontdimen22\textfont2>}
      \xymatrix{
        \text{Transversal design } \TD(\ell, s) \ar[d]^-{\text{incidence matrix}} \\
        \text{TD-based linear code } \Code_q(\TD(\ell, s)) \subseteq \FF_q^{\ell s} \ar[d]^-{\text{database encoding}} \\
        \text{Distributed PIR scheme} 
      }
      \]
    }}
  \caption{\label{fig:summary}Summary of the steps leading to the construction of a transversal-design-based PIR scheme.}
\end{figure}
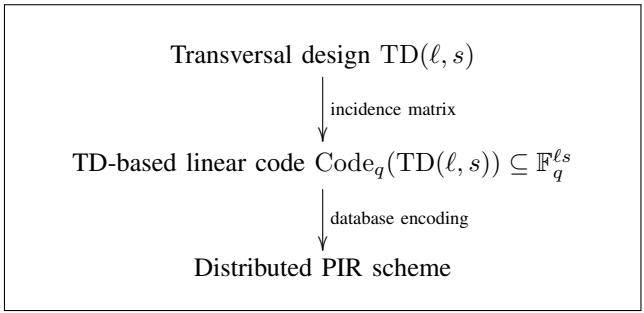

\subsection{Analysis}

We analyse our PIR scheme by proving the following:
\begin{theorem}\label{thm:1-private}
  Let $D$ be a database with $k$ entries over $\FF_q$, and $\calT = \mathrm{TD}(\ell, s)$ be a transversal design, whose incidence matrix has rank $\ell s - k$ over $\FF_q$. Then, there exists a distributed $\ell$-server $1$-private PIR protocol with:
  \begin{itemize}
  \item only one $\FF_q$-symbol to read for each server,
  \item $\ell-1$ field operations over $\FF_q$ for the user,
  \item $\ell \log(sq)$ bits of communication ($\ell \log s$ are uploaded, $\ell \log q$ are downloaded),
  \item a (total) storage overhead of $(\ell s - k) \log q$ bits on the servers.
  \end{itemize}
\end{theorem}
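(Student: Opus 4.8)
Since the retrieval protocol itself is already spelled out in Figure~\ref{fig:no-collusion-PIR}, the plan is to verify, in order, its correctness, its $1$-privacy, and each of the four claimed complexity figures. For \textbf{correctness}, I would first note that the block $B$ drawn in the query phase contains the target coordinate $i$ and, being a block of the transversal design, meets every group in exactly one point, so $B = \{i\} \cup \{q_j : j \neq j^*\}$ with $q_j = B \cap G_j$. Since $\mathds{1}_B$ is a row of the parity-check matrix $M_\calT$ of $\calC = \Code_q(\calT)$, every codeword $c$ satisfies $\sum_{x\in B} c_x = 0$, hence $c_i = -\sum_{j\neq j^*} c_{q_j} = r$ once the servers honestly return $a_j = c_{q_j}$; composing with a systematic encoding (Definition~\ref{def:systematic-encoding}) then turns ``recover the codeword symbol $c_i$'' into ``recover an arbitrary database entry''.

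For \textbf{$1$-privacy}, the key reduction is that it suffices to show that, for every fixed server index $j$, the marginal law of $q_j = \calQ(i)_j$ is uniform on $G_j$ and independent of $i$. For $j = j^*$ this is immediate from the construction. For $j \neq j^*$ I would invoke two elementary counting identities valid in any $\TD_\lambda(\ell,s)$ (here $\lambda = 1$, but the argument is identical in general): first, every point lies on exactly $\lambda s$ blocks, obtained by double-counting incident pairs $(x,B)$ with $i,x \in B$ and $x \notin G_{j^*}$ (this equals $\lambda(\ell-1)s$ on one side and $(\ell-1)$ times the number of blocks through $i$ on the other); second, any two points in distinct groups lie on exactly $\lambda$ common blocks, directly from the definition. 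These give $|\calB^*| = \lambda s$ for every $i$ (in particular $\calB^* \neq \emptyset$), and exactly $\lambda$ of the blocks of $\calB^*$ meet $G_j$ at any prescribed point of $G_j$, so $B \cap G_j$ is uniform on $G_j$. I expect this counting step to be the main (albeit light) obstacle; the rest is bookkeeping.

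For the \textbf{complexity} claims I would simply read them off the protocol: each server performs a single symbol read $a_j = c_{q_j}$; the user computes $r = -\sum_{j\neq j^*} a_j$ using one sign change and $\ell-2$ additions, i.e. $\ell-1$ operations over $\FF_q$; the user uploads one group-index ($\log s$ bits) to each of the $\ell$ servers and downloads one $\FF_q$-symbol ($\log q$ bits) from each, for $\ell\log(sq)$ bits total; and the servers jointly hold the codeword $c \in \FF_q^{\ell s}$ split along the groups, whose information rate satisfies $\dim_{\FF_q}\calC = \ell s - \rank_q(M_\calT) = \ell s - (\ell s - k) = k$ by hypothesis, so the redundancy is $\ell s - k$ symbols, i.e. $(\ell s - k)\log q$ bits of storage overhead. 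Assembling these three paragraphs yields the statement.
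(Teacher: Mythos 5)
Your proposal is correct, and the correctness and complexity parts coincide with the paper's proof. The only place where you genuinely diverge is the $1$-privacy step: the paper phrases it as a Bayesian computation, showing $\PP(i \mid q_j) = \PP(i)$ via the law of total probability and the fact that exactly $\lambda$ blocks contain a given pair of points from distinct groups. You instead verify the definition directly, proving that the marginal law of each single query $q_j$ is uniform on $G_j$ for \emph{every} input $i$ (i.e.\ the scheme is perfectly smooth), which immediately gives that $\calQ(i)_{|\{j\}}$ and $\calQ(i')_{|\{j\}}$ coincide. To do so you add the explicit count $|\calB^*| = \lambda s$ by double counting, a fact the paper leaves implicit, and then observe that each of the $s$ points of $G_j$ accounts for exactly $\lambda$ of those blocks. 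Both arguments rest on the same regularity property of the design; yours is slightly longer but matches the formal definition of $t$-privacy more literally and avoids treating $i$ as a random variable, while the paper's is shorter at the cost of an informal prior on $i$. Everything else (the parity-check identity $\sum_{x\in B}c_x=0$, the systematic encoding, the operation/communication/storage counts, and $\dim\calC = \ell s - \rank_q(M_\calT) = k$) is exactly as in the paper.
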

\begin{proof}
Recall the PIR protocol we are dealing with is defined in Figure~\ref{fig:no-collusion-PIR}.

\textbf{Correctness.} By definition of the code $\calC = \Code_q(\calT)$, the incidence vector $\mathds{1}_B$ of any block $B \in \calB$ belongs to the dual code $\calC^{\perp}$. Hence, for $c \in \calC$, the inner product $\mathds{1}_B \cdot c$ vanishes, or said differently, $\sum_{x \in B} c_x = 0$. We recall that $j^*$ represents the index of the group which contains $i$. Since the servers $S_j$, $j\ne j^*$, receive queries corresponding to the points of a block $B$ which contains $i$, we have $c_i = - \sum_{x \in B \setminus \{i\}} c_x = - \sum_{j \ne j^*} c_{q_j} $, and our PIR protocol is correct as long as there is no error on the symbols $a_j \mydef c_{q_j}$ returned by the servers.

\textbf{Security ($1$-privacy).} We need to prove that for all $j \in [1,\ell]$, it holds that $\PP(i \mid q_j) = \PP(i)$, where probabilities are taken over the randomness of $B \leftarrow \calB^*$. The law of total probability implies
\[
\begin{aligned}
  \PP(i \mid q_j) &= \PP(i \mid q_j\text{ and } i \in G_j) \,\PP(i \in G_j) \\
  &\quad\quad+ \PP( i \mid q_j \text{ and } i \notin G_j) \,\PP(i \notin G_j)\\
  &= \PP(i \mid i \in G_j) \,\PP(i \in G_j) \\
  &\quad\quad + \PP( i \mid i \notin G_j) \,\PP(i \notin G_j) \\
  &= \PP(i)\,,
\end{aligned}
\]
and the reasons why we eliminated the random variable $q_j$ in the conditional probabilities are:
\begin{itemize}
\item in the case $i \in G_j$ (that is, $j = j^*$), by definition of our PIR protocol we know that $q_j$ is uniformly random, so $q_j$ and $i$ are independent;
\item in the case $i \notin G_j$, by definition of a transversal design, there are as many blocks containing both $q_j$ and $i$ as there are blocks containing $q_j$ and any $i'$ in $X \setminus G_j$ (the number of such blocks is always $\lambda$). So once again, the value of the random variable $q_j$ is not related to $i$.
\end{itemize}

\textbf{Communication complexity.} Exactly one index in $[1,s]$ and one symbol in $\FF_q$ are exchanged between each server and the user. So the overall communication complexity is $\ell \times (\log(s) + \log(q)) = \ell \log(s q)$ bits.

\textbf{Storage overhead.} The number of bits stored on a server is $s \log q$, giving a total storage overhead of $(\ell s - k) \log q$, where $k = \dim \calC$.

\textbf{Computation complexity.} Each server $S_j$ only needs to read the symbol defined by query $q_j$, hence our protocol incurs no extra computational cost.
\end{proof}

Theorem~\ref{thm:1-private} shows that, if we want to optimize the practical parameters of our PIR scheme, we basically need to look for small values of $\ell$, the number of groups. However, one observes that the dimension $k$ of $\mathrm{Code}_q(\calT)$ strongly depends on $\ell$ and $n$, and tiny values of $\ell$ can lead to trivial or very small codes. This issue should be carefully taken into account, since instances with $k < \ell$ represent PIR protocols which are more communication expensive to use than the trivial one, which simply retrieves the whole database. Hence, it is very natural to raise the main issue of our construction:

\begin{problem}
  \label{prob:main}
  Find codes $\calC = \Code_q(\calT)$ arising from transversal designs $\calT = \TD(\ell, s)$ with few groups (small $\ell$) and large dimension $k = \dim_{\FF_q} \calC$ compared to their length $n = \ell s$. 
\end{problem}

We first give a negative result, stating that the characteristic of the field $\FF_q$ should be chosen very carefully in order to obtain non-trivial codes.

\begin{proposition}
  \label{prop:low-dim-td}
  Let $\calT = (X, \calB, \calG)$ be a $\TD_\lambda(\ell, s)$. Let $q = p^e$, $p$ prime. If $p \nmid \lambda s$, then \[
\Code_q(\calT) \subseteq \{ c \in \FF_q^{s \ell}, \forall G \in \calG, c_{|G} \in \mathrm{Rep}(s) \}\,,
\]
where $\mathrm{Rep}(s)$ represents the repetition code of length $s$. In particular, if $p \nmid \lambda s$, then $\Code_q(\calT)$ has dimension at most $\ell$.
\end{proposition}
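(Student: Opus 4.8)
The plan is to understand the rows of the incidence matrix $M_\calT$ well enough to exhibit, for each group $G \in \calG$, a combination of blocks that forces $c_{|G}$ to be constant for every codeword $c$. Fix a point $x \in G$ and a point $y \in G$ with $y \ne x$. The key combinatorial fact I would use is the standard counting identity in a transversal design: the number of blocks through a fixed point equals $r \mydef \lambda s$ (this follows from the remark in the excerpt that there are $\lambda s^2$ blocks, each meeting every group in exactly one point, so each point of a fixed group lies on $\lambda s^2 / s = \lambda s$ blocks), and the number of blocks through a fixed pair $\{x,y\}$ with $x,y$ in different groups is $\lambda$, while for $x,y$ in the same group it is $0$.

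First I would sum the incidence vectors $\mathds{1}_B$ over all blocks $B$ through $x$. Call this vector $v_x \in \FF_q^X$. For a point $z$, the $z$-coordinate of $v_x$ counts the number of blocks through both $x$ and $z$: it is $r = \lambda s$ if $z = x$; it is $\lambda$ if $z$ lies in a group different from $G$; and it is $0$ if $z \in G \setminus \{x\}$. So $v_x = \lambda s\, \mathds{1}_{\{x\}} + \lambda\, \mathds{1}_{X \setminus G}$. Doing the same for $y$ gives $v_y = \lambda s\, \mathds{1}_{\{y\}} + \lambda\, \mathds{1}_{X \setminus G}$. Both $v_x$ and $v_y$ are $\FF_q$-linear combinations of rows of $M_\calT$, hence both lie in $\calC^\perp = \mathrm{Code}_q(\calT)^\perp$. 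Therefore their difference $v_x - v_y = \lambda s\,(\mathds{1}_{\{x\}} - \mathds{1}_{\{y\}})$ also lies in $\calC^\perp$. Since $p \nmid \lambda s$, the scalar $\lambda s$ is invertible in $\FF_q$, so $\mathds{1}_{\{x\}} - \mathds{1}_{\{y\}} \in \calC^\perp$. Consequently, for every $c \in \calC = \mathrm{Code}_q(\calT)$ we get $\langle \mathds{1}_{\{x\}} - \mathds{1}_{\{y\}}, c\rangle = c_x - c_y = 0$, i.e. $c_x = c_y$. As $x,y$ were arbitrary in $G$ and $G$ arbitrary in $\calG$, this proves $c_{|G} \in \mathrm{Rep}(s)$ for all $G$, which is the claimed inclusion.

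For the dimension bound, I would simply note that the map $c \mapsto (c_{x_1}, \dots, c_{x_\ell})$, picking one representative point $x_j$ from each group $G_j$, is $\FF_q$-linear and, by the inclusion just proved, injective on $\mathrm{Code}_q(\calT)$ (a codeword is determined by its value on one point of each group, since it is constant on each group). Its image lies in $\FF_q^\ell$, so $\dim_{\FF_q}\mathrm{Code}_q(\calT) \le \ell$.

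I don't anticipate a serious obstacle here; the only point requiring care is the counting of blocks through a fixed point and through a fixed pair, which must be derived cleanly from the axioms of $\TD_\lambda(\ell,s)$ (and from the remark already recorded in the excerpt) rather than assumed. The invertibility of $\lambda s$ under the hypothesis $p \nmid \lambda s$ is then immediate, and everything else is linear-algebraic bookkeeping over $\FF_q$, using that $\mathrm{rank}_q = \mathrm{rank}_p$ so there is no subtlety about which field the $0/1$ row operations take place in.
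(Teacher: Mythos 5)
Your proposal is correct and follows essentially the same route as the paper: you form the sum of incidence vectors of blocks through a point (the paper's $a^{(x)}$, your $v_x$), compute the same coordinate values $(\lambda s, 0, \lambda)$, take the difference $v_x - v_y = \lambda s(\mathds{1}_{\{x\}} - \mathds{1}_{\{y\}})$, and invert $\lambda s$ using $p \nmid \lambda s$ to conclude $c_x = c_y$ on each group. Your explicit injectivity argument for the bound $\dim \le \ell$ is just a spelled-out version of the paper's \enquote{in particular} step.
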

\begin{proof}
  For $x \in X$, recall that $\calB_x = \{ B \in \calB, x \in B\}$, and denote by $a^{(x)} = \sum_{B \in \calB_x} \mathds{1}_B$. We know that $a^{(x)} \in \Code_q(\calT)^\perp$, since $\Code_q(\calT)^\perp$ is generated by $\{ \mathds{1}_B, B \in \calB \}$. Denote by $G_x \in \calG$ the only group that contains $x$. We see that:
\[
\left\{
\begin{array}{ll}
a^{(x)}_x = \lambda s &~ \\
a^{(x)}_i = 0 & \text{ for all } i \in G_x \setminus \{ x \} \\
a^{(x)}_j = \lambda & \text{ for all } j \in X \setminus G_x\,.
\end{array}
\right.
\]
Therefore $a^{(x)} - a^{(y)} = \lambda s(\mathds{1}_{\{x\}} - \mathds{1}_{\{y\}})$ if $x$ and $y$ lie in the same group $G$. If $p \nmid \lambda s$, then we get $\mathds{1}_{\{x\}} - \mathds{1}_{\{y\}} \in \Code_q(\calT)^\perp$. Let now 
\[
\calC = \mathrm{Span}_{\FF_q} \{ \mathds{1}_{\{x\}} - \mathds{1}_{\{y\}}, \forall x,y \in X \text{ s.t. } \{x, y\} \subset G \in \calG \}
\]
We see that $\calC^\perp = \{ c \in \FF_q^{s \ell}, \forall G \in \calG, c_{|G} \in \mathrm{Rep}(s) \}$. Therefore we obtain the expected result.
\end{proof}

In the perspective of Problem~\ref{prob:main}, the following section is devoted to the construction of transversal designs with high rate.

\section{Explicit constructions of $1$-private $\mathrm{TD}$-based PIR protocols}
\label{sec:explicit-constructions}

From now on, we denote by $\ell(k)$ the number of servers involved in a given PIR protocol running on a database with $k$ entries, and by $n(k)$ the actual number of symbols stored by all the servers. As it is proved in Theorem~\ref{thm:1-private}, these two parameters are crucial for the practicality of our PIR schemes, and they respectively correspond to the block size and the number of points of the transversal design used in the construction. In practice, we look for small values of $\ell$ and $n$ as explained in Problem~\ref{prob:main}.

In this section, we first give two classical instances of transversal designs derived from finite geometries (Subsections~\ref{subsec:affine-geometry} and~\ref{subsec:projective-geometry}), leading to good PIR parameters. We then show how \emph{orthogonal arrays} produce transversal designs, and we more deeply study a family of such arrays leading to high-rate codes. Subsection~\ref{subsec:divisible-codes} is finally devoted to another family of orthogonal arrays whose \emph{divisibility} properties ensure to give an upper bound on the storage overhead of related PIR protocols.

\subsection{Transversal designs from affine geometries}
\label{subsec:affine-geometry}

Transversal designs can be built with incidence properties between subspaces of an affine space.

\begin{construction}[Affine transversal design]
  \label{cons:td-affine}
  Let $\AA^m(\FF_q)$ be the affine space of dimension $m$ over $\FF_q$, and $H = \{H_1,\dots,H_q\}$ be $q$ hyperplanes that partition $\AA^m(\FF_q)$. We define a transversal design $\calT_A(m,q)$ as follows:
\begin{itemize}
\item the point set $X$ consists in all the points in $\AA^m(\FF_q)$;
\item the groups in $\calG$ are the $q$ hyperplanes from $H$;
\item the blocks in $\calB$ are all the $1$-dimensional affine subspaces (lines) which do not entirely lie in one of the $H_j$, $j\in [1,q]$. We also say that such lines are \emph{secant} to the hyperplanes in $H$.
\end{itemize}
\end{construction}

The design thus defined is a $\TD(q, q^{m-1})$, since an affine line is either contained in one of the $H_j$, or is $1$-secant (\emph{i.e.} has intersection of size $1$) to each of them. To complete the study of the parameters of the induced PIR protocol, it remains to compute the dimension of $\Code(\calT_A(m,q))$.

Proposition~\ref{prop:low-dim-td} first proves that if $p$ does not divide $\lambda s = q$, then the code $\Code_p(\calT_A(m,q))$ has poor dimension. Since our goal is to obtain the largest codes as possible, we choose $p$ to be, for instance, the characteristic of the field $\FF_q$.

Now notice that all blocks of $\calT_A(m,q)$ belong to the block set of the \emph{affine geometry design} $\AG_1(m, q)$ --- which is defined as the incidence structure of all points and affine lines in $\AA^m(\FF_q)$. Thus, the incidence matrix $M_{\calT_A(m,q)}$ is a sub-matrix of $M_{\AG_1(m, q)}$, which implies that $\Code_p(\AG_1(m, q)) \subseteq \Code_p(\calT_A(m,q))$ for any field $\FF_p$. In fact, equality holds as shows the following result.

\begin{proposition}
  \label{prop:equality-td-ag}
  For every $q = p^e$ and $m \ge 2$, we have
  \[
  \Code_p(\AG_1(m, q)) = \Code_p(\calT_A(m, q))\,.
  \]
\end{proposition}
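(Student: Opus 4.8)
The plan is to prove the reverse inclusion $\Code_p(\calT_A(m,q)) \subseteq \Code_p(\AG_1(m,q))$, since the forward inclusion was already established in the paragraph preceding the statement (the blocks of $\calT_A(m,q)$ are a subset of the lines of $\AG_1(m,q)$, so $M_{\calT_A(m,q)}$ is a row-submatrix of $M_{\AG_1(m,q)}$). Equivalently, working with dual codes, I would show that every parity-check row we threw away can be recovered: every incidence vector $\mathds{1}_L$ of a line $L$ entirely contained in one of the partition hyperplanes $H_j$ lies in the $\FF_p$-span of the incidence vectors $\mathds{1}_B$ of the secant lines $B \in \calB$. Concretely, it suffices to prove that for each hyperplane $H_j \in H$, the incidence vector $\mathds{1}_{H_j}$ (viewed in $\FF_p^X$) lies in $\Code_p(\AG_1(m,q))^\perp \cap \Code_p(\calT_A(m,q))^\perp$ — actually more carefully, that $\mathds{1}_L$ for $L \subseteq H_j$ is spanned by the $\mathds{1}_B$. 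I would split any such $\mathds{1}_L$ into a combination involving full-hyperplane vectors and secant-line vectors, using the combinatorics of lines through a fixed point.

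The key computational step I would carry out is the following incidence identity. Fix a point $x \in X$, lying in group $G_x = H_{j_0}$. Summing the incidence vectors of all secant lines through $x$ gives, by the transversal design property, a vector $a^{(x)} = \sum_{B \in \calB_x}\mathds{1}_B$ with $a^{(x)}_x = q^{m-1}$ (the number of secant lines through $x$, which equals $\lambda s$ with $s = q^{m-1}$), $a^{(x)}_i = 0$ for $i \in H_{j_0}\setminus\{x\}$, and $a^{(x)}_i = 1$ for $i \notin H_{j_0}$ — this is exactly the computation appearing in the proof of Proposition~\ref{prop:low-dim-td}. Since $p \mid q$ (we chose $p = \mathrm{char}\,\FF_q$), we get $a^{(x)}_x \equiv 0$, hence $a^{(x)} = \mathds{1}_{X\setminus H_{j_0}}$ in $\FF_p^X$. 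Therefore $\mathds{1}_{X\setminus H_{j_0}} \in \langle \mathds{1}_B : B \in \calB\rangle = \Code_p(\calT_A(m,q))^\perp$ for every $j_0$, and consequently (taking differences, or using that $\sum_j \mathds{1}_{X\setminus H_j} = (q-1)\mathds{1}_X \equiv -\mathds{1}_X$) the all-ones vector $\mathds{1}_X$ and every $\mathds{1}_{H_j}$ lie in $\Code_p(\calT_A(m,q))^\perp$.

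It then remains to express an arbitrary line $L \subseteq H_{j_0}$ via secant lines. Pick a point $y \notin H_{j_0}$; the lines through $y$ meeting $L$ are all secant (they are not contained in $H_{j_0}$, nor in any other $H_j$ since they meet $H_{j_0}$), they partition... more precisely, the plane $\pi$ spanned by $L$ and $y$ is a $2$-flat meeting $H_{j_0}$ in exactly $L$; summing $\mathds{1}_{B}$ over the $q$ secant lines in $\pi$ through $y$ yields $\mathds{1}_\pi + (q-1)\mathds{1}_{\{y\}} - \mathds{1}_L \cdot 0 \ldots$ — I would instead argue more cleanly: $\sum_{\text{lines } B \subseteq \pi,\, B \ni y}\mathds{1}_B = \mathds{1}_\pi + (q-1)\mathds{1}_{\{y\}}$, and among these $q+1$ lines exactly one, say $B_0$, is not secant only if $B_0 \subseteq$ some $H_j$; but every line through $y \notin H_{j_0}$ inside $\pi$ meets $L \subseteq H_{j_0}$, hence is secant. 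So all are secant, giving $\mathds{1}_\pi + (q-1)\mathds{1}_{\{y\}} \equiv \mathds{1}_\pi - \mathds{1}_{\{y\}} \in \Code_p(\calT_A(m,q))^\perp$. Since the same holds with $y$ replaced by any point of $\pi \setminus L$, differencing eliminates the singleton terms and shows $\mathds{1}_\pi$ restricted appropriately, and finally $\mathds{1}_L = \mathds{1}_\pi - \mathds{1}_{\pi \setminus L}$, where $\mathds{1}_{\pi\setminus L}$ is a sum of (cosets of) secant lines — so $\mathds{1}_L \in \Code_p(\calT_A(m,q))^\perp$. This gives $\Code_p(\AG_1(m,q))^\perp \subseteq \Code_p(\calT_A(m,q))^\perp$, hence the desired equality of the primal codes.

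The main obstacle I anticipate is the last step: cleanly showing that $\mathds{1}_L$ for a non-secant line $L$ is in the span of secant-line vectors without circular reasoning about which auxiliary lines are themselves secant. The cleanest route is probably to induct on $m$ (the case $m = 2$ is the base, where $H_j$ are themselves lines) or to work inside a single $2$-flat transversal to all the $H_j$, reducing everything to the affine plane $\AA^2(\FF_q)$ where the identity $a^{(x)} = \mathds{1}_{X \setminus G_x}$ together with the action of the translation/dilation group makes the span computation transparent. One must also be careful that $p \mid q$ is used exactly once, to kill the diagonal coefficient $q^{m-1}$, and that without it the statement fails (consistent with Proposition~\ref{prop:low-dim-td}).
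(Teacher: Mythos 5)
Your overall reduction is the same as the paper's, and sound: it suffices to show that for a line $L$ contained in a partition hyperplane $H_{j_0}$, the vector $\mathds{1}_L$ lies in $\mathrm{Span}_{\FF_p}\{\mathds{1}_B,\, B \in \calB\} = \Code_p(\calT_A(m,q))^\perp$, and the right arena is a $2$-flat $\pi$ with $\pi \cap H_{j_0} = L$. But the key in-plane computation you propose is wrong. Since the $H_j$ are pairwise disjoint hyperplanes they are parallel, so $\pi$ is partitioned by the $q$ lines $\pi \cap H_j$, all parallel to $L$; in particular, for $y \in \pi \setminus L$ the unique line of $\pi$ through $y$ parallel to $L$ is exactly $\pi \cap H_{j(y)}$, which does \emph{not} meet $L$ and \emph{is} contained in a partition hyperplane, hence is not a block. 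So the claim \enquote{every line through $y$ inside $\pi$ meets $L$, hence all are secant} fails, and the identity you build on is not available: the correct sum over the $q$ secant lines of $\pi$ through $y$ is $\mathds{1}_\pi + q\,\mathds{1}_{\{y\}} - \mathds{1}_{\pi \cap H_{j(y)}} = \mathds{1}_\pi - \mathds{1}_{\pi \cap H_{j(y)}}$ over $\FF_p$, not $\mathds{1}_\pi + (q-1)\mathds{1}_{\{y\}}$. The final step is also unjustified as stated: every secant line contained in $\pi$ meets $L$, so $\pi \setminus L$ cannot be covered by secant lines of $\pi$ --- its natural partition is by the non-secant lines $\pi \cap H_j$, $j \ne j_0$, i.e.\ exactly the kind of vectors you are trying to produce; this is the circularity you yourself flagged as the anticipated obstacle.

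The repair is precisely the paper's move: sum the secant lines of $\pi$ through a point $x \in L$ instead of an external point. Among the $q+1$ lines of $\pi$ through $x$, only $L$ is non-secant (any other contains $x \in H_{j_0}$ and a point outside $H_{j_0}$), and their sum $b^{(x)}$ satisfies $b^{(x)}_x = q \equiv 0$, $b^{(x)}_i = 0$ on $L \setminus \{x\}$, and $b^{(x)}_j = 1$ on $\pi \setminus L$, i.e.\ $b^{(x)} = \mathds{1}_\pi - \mathds{1}_L$ over $\FF_p$. Combined with the separate observation that $\mathds{1}_\pi$ is itself a sum of blocks (take the parallel class of $\pi$ in any direction other than that of $L$: each such line meets every $\pi \cap H_j$, hence every $H_j$, in one point), this yields $\mathds{1}_L \in \Code_p(\calT_A(m,q))^\perp$ and the proposition. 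Your external-point sums could also be salvaged --- they give $\mathds{1}_\pi - \mathds{1}_{\pi \cap H_j} \in \mathrm{Span}$ for $j \ne j_0$, and with $\mathds{1}_\pi \in \mathrm{Span}$ and $\mathds{1}_\pi = \sum_j \mathds{1}_{\pi \cap H_j}$ one recovers $\mathds{1}_L$ --- but the ingredient $\mathds{1}_\pi \in \mathrm{Span}$ (the parallel-class argument) is not something you established; your preliminary identity $a^{(x)} = \mathds{1}_{X \setminus H_{j_0}}$ is correct but does not substitute for it.
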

\begin{proof}
  Denote by $\calB^{(\AG)}$ the blocks of $\AG_1(m, q)$, and by $\calB^{(\calT)}$ and $\calG^{(\calT)}$ the blocks and groups of $\calT_A(m, q)$. Thanks to the previous discussion, we only need to show that for every block $B \in \calB^{(\AG)}$ contained in a group $G \in \calG^{(\calT)}$, it holds that $\mathds{1}_B \in \Code_p(\calT_A(m, q))^\perp$. For this sake, first notice that $\Code_p(\calT_A(m, q))^\perp = \mathrm{Span} \{ \mathds{1}_{B'}, B' \in \calB^{(\calT)} \}$.

Let now $G \in \calG^{(\calT)}$ and $B \in \calB^{(\AG)}$ such that $B \subseteq G$. Recall that $G$ is a hyperplane of $\AA^m(\FF_q)$, and let $P$ be a $2$-dimensional affine plane of $\AA^m(\FF_q)$ such that $P \cap G = B$. We claim that $\mathds{1}_P \in \mathrm{Span} \{ \mathds{1}_{B'}, B' \in \calB^{(\calT)} \}$. Indeed, $P$ admits a partition into affine lines which are secant to every hyperplane in $\calG$. Thus $\mathds{1}_P$ can be written as sum of the characteristic vectors of these lines.

Now let $x \in B$, and $\calB^{(\calT)}_{x, P} \mydef \{ B' \in \calB^{(\calT)},\, x \in B' \subset P \} \subseteq \calB^{(\calT)}$. Define $b^{(x)} = \sum_{B' \in \calB^{(\calT)}_{x, P}} \mathds{1}_{B'}$. It is clear that $b^{(x)} \in \mathrm{Span} \{ \mathds{1}_{B'}, B' \in \calB^{(\calT)} \}$, and we can notice that
\[
\left\{
\begin{array}{ll}
b^{(x)}_x = q = 0, &~ \\
b^{(x)}_i = 0 & \text{ for all } i \in B \setminus \{ x \}, \\
b^{(x)}_j = 1 & \text{ for all } j \in P \setminus B\,.
\end{array}
\right.
\]
In other words, $b^{(x)} = \mathds{1}_P - \mathds{1}_B$, therefore $\mathds{1}_B \in \mathrm{Span} \{ \mathds{1}_{B'}, B' \in \calB^{(\calT)} \}$.
\end{proof}

The benefit to consider $\AG_1(m, q)$ is that the $p$-rank of its incidence matrix has been well-studied. For instance, Hamada~\cite{Hamada68} gives a generic formula to compute the $p$-rank of a design coming from projective geometry. Yet, as presented in Appendix~\ref{app:hamada}, asymptotics are hard to derive from his formula for a generic value of $m$.

However, if $m = 2$, we know that $\rank_p (\AG_1(2, p^e)) = \binom{p+1}{2}^e$, which implies that 
\[
\dim(\Code_p(\calT_A(2,p^e))) = p^{2e} - \textstyle\binom{p+1}{2}^e\,.
\]
Hence we obtain the following family of PIR protocols.
\begin{proposition}
  Let $D$ be a database with $k = p^{2e} - \binom{p+1}{2}^e$ entries, $p$ a prime, $e \ge 1$. There exists a distributed $1$-private PIR protocol for $D$ with:
  \[
  \ell(k) = p^e \quad \text{ and } \quad n(k) = p^{2e}\,.
  \]
  For fixed $p$ and $k \to \infty$, we have
  \begin{equation}
    \label{eq:asymptotics-prop}
    \begin{array}{rl}
      \ell(k) &= \sqrt{k} + \Theta(k^{\frac{1}{2}+c_p}) \quad \text{ and } \\
      n(k)/k &= \frac{1}{1 - \left(\frac{1+1/p}{2}\right)^e} = 1 + \Theta(k^{c_p}) \to 1\,,
    \end{array}
  \end{equation}
  where $c_p = \frac{1}{2} \log_p(\frac{1+1/p}{2}) < 0$.
\end{proposition}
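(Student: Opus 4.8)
The statement is essentially a bookkeeping corollary of Theorem~\ref{thm:1-private}, Construction~\ref{cons:td-affine}, Proposition~\ref{prop:equality-td-ag}, and the cited $p$-rank value $\rank_p(\AG_1(2,p^e)) = \binom{p+1}{2}^e$, together with an asymptotic expansion. The plan is to first assemble the exact parameters, then derive the asymptotics. For the exact parameters, I would take $\calT = \calT_A(2, p^e)$, which by the discussion following Construction~\ref{cons:td-affine} is a $\TD(p^e, p^e)$ (so $\ell = s = p^e$ and $n = \ell s = p^{2e}$). By Proposition~\ref{prop:equality-td-ag} the code $\Code_p(\calT_A(2,p^e))$ coincides with $\Code_p(\AG_1(2,p^e))$, whose dimension is $n - \rank_p(M_{\AG_1(2,p^e)}) = p^{2e} - \binom{p+1}{2}^e = k$. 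Feeding a transversal design whose incidence matrix has rank $\ell s - k$ over $\FF_p$ into Theorem~\ref{thm:1-private} (with $q = p$) yields a distributed $1$-private $\ell$-server PIR protocol, and reading off the theorem gives $\ell(k) = p^e$ and $n(k) = p^{2e}$, as claimed.

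\textbf{Asymptotics.} For the second part I would fix $p$ and let $e \to \infty$ (equivalently $k \to \infty$). Write $\alpha \mydef \frac{1+1/p}{2} = \frac{p+1}{2p} \in (0,1)$, so that $\binom{p+1}{2}^e = \left(\frac{p(p+1)}{2}\right)^e = (p^2 \alpha)^e = p^{2e}\alpha^e$, hence $k = p^{2e}(1 - \alpha^e)$. Since $0 < \alpha < 1$ we have $\alpha^e \to 0$, so $k \sim p^{2e}$; more precisely $p^{2e} = k/(1-\alpha^e)$ and $n(k)/k = p^{2e}/k = 1/(1-\alpha^e) \to 1$. To express the error term as a power of $k$, note $\alpha^e = p^{e \log_p \alpha} = p^{2e \cdot (\frac12 \log_p \alpha)} = (p^{2e})^{c_p}$ with $c_p = \frac12\log_p\alpha < 0$; since $p^{2e} = \Theta(k)$ and raising to the fixed negative power $c_p$ is monotone, $\alpha^e = \Theta(k^{c_p})$, which gives $n(k)/k = 1/(1-\Theta(k^{c_p})) = 1 + \Theta(k^{c_p}) \to 1$. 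For $\ell(k) = p^e$: from $k = p^{2e}(1-\alpha^e)$ we get $p^e = \sqrt{k}\,(1-\alpha^e)^{-1/2} = \sqrt{k}\,(1 + \Theta(k^{c_p})) = \sqrt{k} + \Theta(k^{1/2 + c_p})$, using that $(1-x)^{-1/2} = 1 + \Theta(x)$ for $x \to 0$.

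\textbf{Main obstacle.} There is no real obstacle: the only points requiring a modicum of care are the rewriting $\binom{p+1}{2}^e = p^{2e}\alpha^e$ and the translation of the factor $\alpha^e$ into the form $\Theta(k^{c_p})$, which hinges on the clean identity $\alpha^e = (p^{2e})^{c_p}$ together with $p^{2e} = \Theta(k)$; once these are in hand, the $\Theta(\cdot)$ bookkeeping for $\ell(k)$ and $n(k)/k$ is immediate from the first-order expansions of $(1-x)^{-1}$ and $(1-x)^{-1/2}$.
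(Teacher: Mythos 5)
Your proposal is correct and follows essentially the same route as the paper: assemble the exact parameters from $\calT_A(2,p^e)$, Proposition~\ref{prop:equality-td-ag} and the known $p$-rank of $\AG_1(2,p^e)$, then expand $n(k)/k = 1/(1-\alpha^e)$ and $\ell(k)=\sqrt{k}\,(1-\alpha^e)^{-1/2}$. Your only deviation is a mild streamlining of the translation $\alpha^e=\Theta(k^{c_p})$ via the identity $\alpha^e=(p^{2e})^{c_p}$ with $p^{2e}=\Theta(k)$, where the paper instead substitutes the expansion of $\log_p k$ into the exponent; both arguments are sound and equivalent.
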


\begin{proof}
  The existence of the PIR protocol is a consequence of the previous discussion, using the family of codes $\Code_p(\calT_A(2,p^e))$. Let us state the asymptotics of the parameters. Recall we fix the prime $p$ and we let $e \to \infty$. First we have:
\begin{equation}
  \label{eq:asymptotics-1}
  \begin{aligned}
    n(k)/k &= \frac{p^{2e}}{p^{2e} - \binom{p+1}{2}^e} = \frac{1}{1 - \left(\frac{1+1/p}{2}\right)^e}\\
    &= 1 + \left(\frac{1+1/p}{2}\right)^e + \calO\left(\left(\frac{1+1/p}{2}\right)^{2e}\right)\,.
  \end{aligned}
\end{equation}
Notice that 
\[
\begin{aligned}
  \log_p k &= 2e + \log_p\left(1 - \left(\frac{1+1/p}{2}\right)^e\right) \\
  &= 2e + \calO\left(\left(\frac{1+1/p}{2}\right)^{e}\right)\,.
\end{aligned}
\]
Hence,
\[
\begin{aligned}
  \left(\frac{1+1/p}{2}\right)^e &= \left(\frac{1+1/p}{2}\right)^{\frac{1}{2}\log_p k + \calO\left(\left(\frac{1+1/p}{2}\right)^{e}\right)} \\
  &= k^{\frac{1}{2}\log_p\left(\frac{1 + 1/p}{2}\right)} \times \left(\frac{1+1/p}{2}\right)^{\calO\left(\left(\frac{1+1/p}{2}\right)^{e}\right)} \\
  &= \Theta\left(k^{c_p}\right)\,,
\end{aligned}
\]
since $\left(\frac{1+1/p}{2}\right)^{\calO\left(\left(\frac{1+1/p}{2}\right)^{e}\right)} \to 1$. Using \eqref{eq:asymptotics-1} we obtain the asymptotics we claimed on $n(k)/k$.

For $\ell(k)$, we see that $n(k) = \ell(k)^2$. Therefore, we get 
\[
\ell(k) = \sqrt{k} \sqrt{n(k)/k} = \sqrt{k} \sqrt{1 + \Theta(k^{c_p})} = \sqrt{k} + \Theta(k^{\frac{1}{2}+c_p})\,. 
\]
\end{proof}

We give in Table~\ref{tab:values1} the dimension of some codes arising from affine transversal designs. Notice that $m$ is not restricted to $2$, but we focus on codes with large, since they aimed at being applied in PIR protocols.

\begin{table}[h!]
  \[
  \begin{array}{c|c|c|c|c}
    m &    \ell = q &   n = s \ell = q^m & k = \dim \mathcal{C}  & R = k / n \\
    \hline
    2 &           8 &                 64 &                 37  & 0.578 \\
    2 &          16 &                256 &                175  & 0.684 \\
    2 &          32 &               1024 &                781  & 0.763 \\
    2 &          64 &         \num{4096} &         \num{3367}  & 0.822 \\
    2 &  \num{1024} &      \num{1048576} &       \num{989527}  & 0.944   \\
    2 &  \num{4096} &     \num{16777216} &     \num{16245775}  & 0.968   \\
    2 & \num{16384} &    \num{268435456} &    \num{263652487}  & 0.982   \\
    2 & \num{65536} &   \num{4294967296} &   \num{4251920575}  & 0.990   \\
    \hline
    3 &           8 &                512 &                139  & 0.271 \\
    3 &          16 &         \num{4096} &               1377  & 0.336 \\
    3 &          64 &       \num{262144} &       \num{118873}  & 0.453   \\
    3 &         256 &     \num{16777216} &      \num{9263777}  & 0.552   \\
    3 &  \num{1024} &   \num{1073741824} &    \num{680200873}  & 0.633   \\
    3 &  \num{8192} & \num{549755813888} & \num{400637408211}  & 0.729   \\    
    \hline
    4 &           8 &         \num{4096} &                406  & 0.099   \\
    4 &          64 &     \num{16777216} &      \num{2717766}  & 0.162   \\
    4 &         256 &   \num{4294967296} &    \num{890445921}  & 0.207   \\
    \hline
    5 &           8 &        \num{32768} &                994  & 0.030   \\
    5 &          64 &   \num{1073741824} &     \num{44281594}  & 0.041   \\
  \end{array}
  \]
  \caption{Dimension and rate of binary codes $\calC$ arising from $\calT_A(m,q)$. Remind that the rate $R$ of the code is related to the server storage overhead of the PIR protocol, and that $q = \ell$ is essentially the communication complexity and the number of servers.\label{tab:values1}}
\end{table}

Finally, for a better understanding of the parameters we can point out two PIR instances:
\begin{itemize}
\item choosing $m=2$ and $\ell = 4096$, there exists a PIR protocol on a $\simeq 2.0$ MB file with $6$ kB of communication and only $3.2 \%$ storage overhead;
\item for a $\simeq 46$ GB database ($m = 3$, $\ell = 8192$), we obtain a PIR protocol with $39$ kB of communication and $27 \%$ storage overhead.
\end{itemize}

\subsection{Transversal designs from projective geometries}
\label{subsec:projective-geometry}

The projective space $\PP^m(\FF_q)$ is defined as $(\AA^{m+1}(\FF_q) \setminus \{ {\bf 0} \}) / \sim\,$, where for $({\bf P}, {\bf Q}) \in (\AA^{m+1}(\FF_q) \setminus \{ {\bf 0} \})^2$, we have ${\bf P} \sim {\bf Q}$ if and only if there exists $\lambda \in \FF_q$ such that ${\bf P} = \lambda {\bf Q}$. A projective subspace can be defined as the zero set of a collection of linear forms over $\FF_q^{m+1}$. In particular, a projective hyperplane is the zero-set of one non-zero linear form over $\FF_q^{m+1}$.

Projective geometries are closely related to affine geometries, but contrary to them, there is no partition of the projective space into hyperplanes, since every pair of distinct projective hyperplanes intersects in a projective space of co-dimension $2$. To tackle this problem, an idea is to consider the hyperplanes $H_i$ which intersect on a fixed subspace of co-dimension $2$ (call it $\Pi_{\infty}$). Then, all the sets $H_i \setminus \Pi_{\infty}$ are disjoint, and their union gives exactly $\PP^m(\FF_q) \setminus \Pi_{\infty}$, where $\PP^m(\FF_q)$ denotes the projective space of dimension $m$ over $\FF_q$. Besides, any projective line disjoint from $\Pi_{\infty}$ is either contained in one of the $H_i$, or is $1$-secant to all of them. It results to the following construction:

\begin{construction}[Projective transversal design]
  \label{cons:td-projective}
  Let $\PP^m(\FF_q)$ and $\Pi_\infty$ defined as above. Let us define
\begin{itemize}
\item a point set $X = \PP^m(\FF_q) \setminus \Pi_{\infty}$;
\item a group set $\calG = \{ \text{projective hyperplanes } H \subset \PP^m(\FF_q),\, \Pi_{\infty} \subset H \}$;
\item a block set $\calB = \{ \text{projective lines } L \subset \PP^m(\FF_q),\, L \cap \Pi_{\infty} = \varnothing \text{ and } \forall H \in \calG,\, L \not\subset H \}$\,.
\end{itemize}
Finally, denote by $\calT_P(m, q) \mydef ( X, \calB, \calG )$.
\end{construction}

The design $\calT_P(m, q)$ is a $\TD(q+1, q^{m-1})$ and, as in the affine setting, its $p$-rank is related to that of $\PG_1(m, q)$, the classical design of point-line incidences in the projective space $\PP^m(\FF_q)$. Indeed, the incidence matrix $M$ of $\calT_P(m, q)$ is a submatrix of $M_{\PG_1(m, q)}$ from which we removed:
\begin{itemize}
\item the columns corresponding to the points in $\Pi_{\infty}$,
\item the rows corresponding to the lines not in $\calB$.
\end{itemize}
Said differently, the code associated to $\calT_P(m, q)$ contains (as a subcode) the $\Pi_{\infty}$-shortening of the code associated to $\PG_1(m, q)$. Hence $\dim \Code(\calT_P(m, q)) \ge \dim \Code(\PG_1(m, q)) - |\Pi_{\infty}|$. Contrary to Proposition~\ref{prop:equality-td-ag}, we could not prove equality, but this is of little consequence: up to using a subcode of $\Code(\calT_P(m, q))$ we can consider PIR protocols on databases with $k$ entries, where $k = \dim \Code(\PG_1(m, q)) - |\Pi_{\infty}|$.

Once again, for projective geometries Hamada's formula gets simpler for $m=2$, and leads to the following proposition. 

\begin{proposition}
  Let $D$ be a database with $k = p^{2e} + p^e - \binom{p+1}{2}^e - 1$ entries, $p$ a prime and $e \ge 1$. There exists a distributed $1$-private PIR protocol for $D$ with:
  \[
  \ell(k) = p^e + 1 \quad \text{ and } \quad n(k) = p^{2e} + p^e\,.
  \]
  Asymptotics are the same as in Equation~\eqref{eq:asymptotics-prop}.
\end{proposition}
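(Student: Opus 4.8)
The plan is to mirror the structure of the affine case, since the projective transversal design $\calT_P(m,q)$ and the associated code $\Code_p(\calT_P(m,q))$ were set up in Construction~\ref{cons:td-projective} precisely to carry over the geometric arguments. The existence of the $1$-private distributed PIR protocol follows immediately from Theorem~\ref{thm:1-private} once we identify the relevant parameters: $\calT_P(m,q)$ is a $\TD(q+1,q^{m-1})$, so $\ell = q+1$ and $n = \ell s = (q+1)q^{m-1}$, which for $m = 2$ gives $\ell(k) = q+1 = p^e+1$ and $n(k) = (p^e+1)p^e = p^{2e}+p^e$. The remaining point is to pin down $k = \dim \Code_p(\calT_P(2,p^e))$, or rather to exhibit a subcode of that dimension, which is all Theorem~\ref{thm:1-private} needs; as noted in the excerpt, up to passing to a subcode we may take $k = \dim \Code_p(\PG_1(2,p^e)) - |\Pi_\infty|$.

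First I would invoke Hamada's formula (cited as \cite{Hamada68}, and recalled for the projective plane) to get $\rank_p(M_{\PG_1(2,p^e)}) = \binom{p+1}{2}^e + 1$, the classical $p$-rank of the point-line design of the projective plane of order $p^e$. Since $\PG_1(2,p^e)$ has $p^{2e}+p^e+1$ points, this yields $\dim \Code_p(\PG_1(2,p^e)) = (p^{2e}+p^e+1) - (\binom{p+1}{2}^e + 1) = p^{2e}+p^e - \binom{p+1}{2}^e$. Here $\Pi_\infty$ is a projective subspace of codimension $2$ in $\PP^2(\FF_q)$, i.e.\ a single point, so $|\Pi_\infty| = 1$. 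Subtracting gives $k = p^{2e}+p^e - \binom{p+1}{2}^e - 1$, exactly the claimed value. The discussion preceding the proposition already established $\dim \Code_p(\calT_P(m,q)) \ge \dim \Code_p(\PG_1(m,q)) - |\Pi_\infty|$ via the shortening argument, so a subcode of the required dimension genuinely exists and feeds into Theorem~\ref{thm:1-private}.

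For the asymptotics, I would observe that the parameters are algebraically almost identical to the affine ones: $n(k) = \ell(k)^2 - \ell(k)$ and $k = n(k) - \binom{p+1}{2}^e - 1$, with $\binom{p+1}{2}^e = \bigl(\tfrac{1+1/p}{2}\bigr)^e p^{2e}$ the same dominant correction term as before. The lower-order terms $+p^e - 1$ in $k$ and $-\ell(k)$ in $n(k)$ are of strictly smaller order than the $\binom{p+1}{2}^e$-sized correction, so the same computation as in the affine proposition — writing $\log_p k = 2e + \calO\bigl((\tfrac{1+1/p}{2})^e\bigr)$ and substituting — gives $n(k)/k = 1 + \Theta(k^{c_p})$ and $\ell(k) = \sqrt{k} + \Theta(k^{1/2+c_p})$ with $c_p = \tfrac12\log_p(\tfrac{1+1/p}{2}) < 0$, matching Equation~\eqref{eq:asymptotics-prop}.

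The main obstacle, if any, is purely bookkeeping rather than conceptual: one must be careful that $\Pi_\infty$ in the plane case really is a point (codimension $2$ in dimension $2$), that Hamada's formula is being specialized correctly to $m=2$, and that the shortening/subcode inequality is applied in the right direction so that the stated $k$ is achievable. There is no genuine difficulty in the privacy or correctness claims — those are inherited verbatim from Theorem~\ref{thm:1-private} applied to $\Code_p(\calT_P(2,p^e))$ (or its subcode), since $\calT_P(2,p^e)$ is a bona fide transversal design whose incidence matrix has the asserted $p$-rank on the relevant subcode. The asymptotic estimates require only that the extra $\pm$ lower-order terms do not disturb the leading behaviour, which is immediate from their orders of magnitude.
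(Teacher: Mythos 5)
Your proof is correct and follows essentially the same route as the paper, which derives this proposition from the preceding shortening argument ($\dim \Code_p(\calT_P(m,q)) \ge \dim \Code_p(\PG_1(m,q)) - |\Pi_\infty|$, with $|\Pi_\infty| = 1$ for $m=2$) together with Hamada's formula $\rank_p(\PG_1(2,p^e)) = \binom{p+1}{2}^e + 1$, passing to a subcode to feed Theorem~\ref{thm:1-private}. Your handling of the asymptotics — noting that the extra $\pm p^e$, $\pm 1$ terms are of lower order than the $\binom{p+1}{2}^e$ correction — is exactly the intended argument.
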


In order to emphasize that the two previous constructions are asymptotically the same, we draw the rates of the codes involved in these two kinds of PIR schemes in Figure~\ref{fig:rates-projective-affine}.

\begin{figure}[h!]
  \centering
  \includegraphics{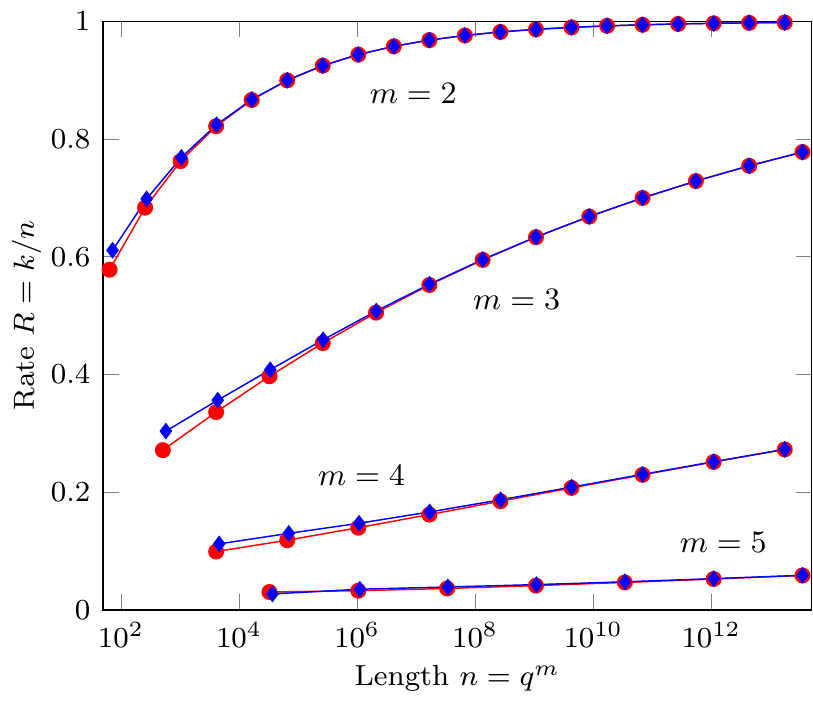}



  \caption{\label{fig:rates-projective-affine}Rate of binary codes coming from $\calT_A(m, q)$ (in red) and $\calT_P(m, q)$ (in blue). For every fixed $m$, we let $q$ grow.}
\end{figure}

\subsection{Orthogonal arrays and the \emph{incidence code} construction}
\label{subsec:oa-to-td}

In this subsection, we first recall a way to produce plenty of transversal designs from other combinatorial constructions called \emph{orthogonal arrays}. 

\begin{definition}[orthogonal array]
  \label{def:oa}
  Let $\lambda, s \ge 1$ and $\ell \ge t \ge 1$, and let $A$ be an array with $\ell$ columns and $\lambda s^t$ rows, whose entries are elements of a set $S$ of size $s$. We say that $A$ is an orthogonal array $\OA_{\lambda}(t, \ell, s)$ if, in any subarray $A'$ of $A$ formed by $t$ columns and all its rows, every row vector from $S^t$ appears exactly $\lambda$ times in the rows of $A'$. We call $\lambda$ the \emph{index} of the orthogonal array, $t$ its \emph{strength} and $\ell$ its \emph{degree}. If $t$ (resp. $\lambda$) is omitted, it is understood to be $2$ (resp. $1$). If both these parameters are omitted we write $A = \OA(\ell, s)$. 
\end{definition}

From now on, for convenience we restrict Definition~\ref{def:oa} to orthogonal arrays with no repeated column and no repeated row. Next paragraph introduces a link between orthogonal arrays and transversal designs.

\subsubsection{Construction of transversal designs from orthogonal arrays} We can build a transversal design $\TD(\ell, s)$ from an orthogonal array $\OA(\ell, s)$ with the following construction, given as a remark in~\cite[ch.II.2]{ColbournD06}.

\begin{construction}[Transversal designs from orthogonal arrays]
  \label{cons:td-from-oa}
Let $A$ be an $\OA(\ell, s)$ of strength $t=2$ and index $\lambda=1$ with symbol set $S$, $|S| = s$, and denote by $\mathrm{Rows}(A)$ the $s^2$ rows of $A$. We define the point set $X = S \times [1, \ell]$. To each row $c \in \mathrm{Rows}(A)$ we associate a block
\[
B_c \mydef \{ (c_i, i), i \in [1,\ell] \}\,,
\]
so that the block set is defined as
\[
\calB \mydef \{ B_c, c \in \mathrm{Rows}(A) \}\,.
\]
Finally, let $\calG \mydef \{ S \times \{i\}, i \in [1, \ell] \}$. Then $(X, \calB, \calG)$ is a transversal design $\TD(\ell, s)$.
\end{construction}

\begin{example}

A very simple example of this construction is given in Figure~\ref{fig:OA-to-TD}, where for clarity we use letters for elements of the symbol set $\{a,b\}$, while the columns are indexed by integers. On the left-hand side, $A$ is an $\OA_1(2,3,2)$ with symbol set $\{a, b\}$. On the right-hand side, the associated transversal design $\TD(3, 2)$ is represented as a hypergraph: the nodes are the points of the design, the \enquote{columns} of the graph form the groups, and a block consists in all nodes linked with a path of a fixed color. One can check that every pair of nodes either belongs to the same group or is linked with one path.

\begin{figure}[h!]
  \centering
  $A = 
    \left[\begin{array}{ccc}
          \rowcolor{blue!20} a & b & b \\
          \rowcolor{green!20} b & b & a \\
          \rowcolor{black!20} b & a & b \\
          \rowcolor{red!20} a & a & a 
      \end{array}\right]
  \,\, \Longrightarrow \,\,
  $
  \begin{tikzpicture}[baseline=-0.6cm, scale=0.5]
    \node (A) at (0,0) {$(a,1)$};
    \node (B) at (3,0) {$(a,2)$};
    \node (C) at (6,0) {$(a,3)$};
    \node (D) at (0,-2) {$(b,1)$};
    \node (E) at (3,-2) {$(b,2)$};
    \node (F) at (6,-2) {$(b,3)$};

    \draw[red, very thick] (A) -- (B) -- (C);
    \draw[blue, very thick] (A) -- (E) -- (F);
    \draw[green!70!black, very thick] (D) -- (E) -- (C);
    \draw[black, very thick] (D) -- (B) -- (F);
    
  \end{tikzpicture}
  
  \caption{\label{fig:OA-to-TD}A representation of the construction of a transversal design from an orthogonal array.}
\end{figure}
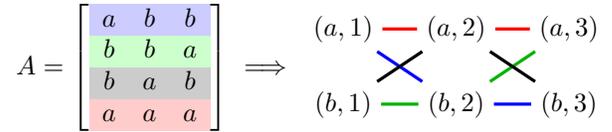

\end{example}

\begin{remark}\label{rem:code-OA}
   Listed in rows, all the codewords of a (generic) code $\calC_0$ give rise to an orthogonal array, whose strength $t$ is derived from the dual distance $d'$ of $\calC_0$ by $t = d'-1$. Notice that for linear codes, the dual distance is simply the minimum distance of the dual code, but it can also be defined for non-linear codes (see~\cite[Ch.5.\S5.]{MacWilliamsS77}). More details about the link between orthogonal arrays and codes can also be found in~\cite{ColbournD06}. For example, the orthogonal array of Figure~\ref{fig:OA-to-TD} comes from the binary parity-check code of length $3$ (by replacing $a$ by $0$ and $b$ by $1$). One can check that its dual distance is $3$ and its associated transversal design has strength $2$.
\end{remark}

Given a code $\calC_0$, we denote by $A_{\calC_0}$ the orthogonal array it defines (see Remark~\ref{rem:code-OA}) and by $\calT_{\calC_0}$ the transversal design built from $A_{\calC_0}$ thanks to Construction~\ref{cons:td-from-oa}.

\begin{example}\label{ex:TD-from-OA-from-RS}
  Let $\mathbf{x} = (x_1,\hdots,x_\ell)$ be an $\ell$-tuple of pairwise distinct elements of $\FF_q$ and denote by $\mathrm{RS}_2(\mathbf{x})$ the Reed-Solomon code of length $\ell$ and dimension $2$ over $\FF_q$ with evaluation points $\mathbf{x}$:
  \[
  \mathrm{RS}_2(\mathbf{x}) \mydef \{ (f(x_1),\hdots,f(x_\ell)),\, f \in \FF_q[X],\, \deg f < 2 \}\,.
  \]
  Then, $\mathrm{RS}_2(\mathbf{x})$ has dual distance $3$, so its codewords form an orthogonal array $A_{\mathrm{RS}_2(\mathbf{x})}  = \OA(\ell, q)$ of strength $2$. Now, one can use Construction~\ref{cons:td-from-oa} to obtain a transversal design $\calT_{\mathrm{RS}_2(\mathbf{x})} = \TD(\ell, q)$. The point set is $X = \FF_q \times [1, \ell]$, and the blocks are \enquote{labeled Reed-Solomon codewords}, that is, sets of the form $\{ (c_i, i), \, i \in [1, \ell] \}$ with $c \in  \mathrm{RS}_2(\mathbf{x})$. The $\ell$ groups correspond to the $\ell$ coordinates of the code: $G_i = \FF_q \times \{i\}$, $1 \le i \le \ell$.
\end{example}

We can finally sum up our construction by introducing the code $\Code_q(\calT_{\calC_0})$ arising from the transversal design defined by $\calC_0$. To the best of our knowledge, the construction $\calC_0 \mapsto \Code_q(\calT_{\calC_0})$ is new. We name $\Code_q(\calT_{\calC_0})$ the \emph{incidence code} of $\calC_0$, since its parity-check matrix $M_{\calT_{\calC_0}}$ essentially stores incidence relations between all the codewords in $\calC_0$. 

\begin{definition}[incidence code]
  Let $\calC_0$ be a (generic) code of length $\ell$ over an alphabet $S$ of size $s$. The \emph{incidence code} of $\calC_0$ over $\FF_q$, denoted $\IC_q(\calC_0)$, is the $\FF_q$-linear code of length $n = s \ell$ built from the transversal design $\calT_{\calC_0}$, that is:
\[
\IC_q(\calC_0) := \Code(\calT_{\calC_0})\,.
\]
Notice that the field $\FF_q$ does not need to be the alphabet $S$ of the code $\calC_0$.
\end{definition}

Incidence codes are introduced in order to design PIR protocols, as summarizes Figure~\ref{fig:summary2}. We can show that, if $\calC_0$ has dual distance more than $3$, then the induced PIR protocol is $1$-private. A generalisation is formally proved in Corollary~\ref{coro:dual-distance-t-private}.

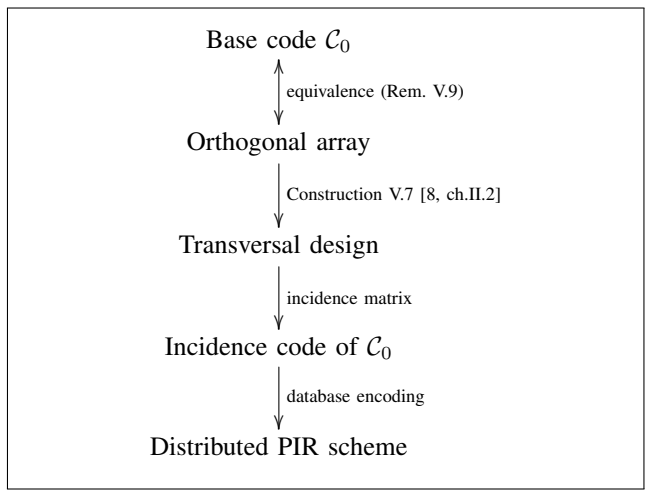
\begin{figure}[h!]
  \fbox{\parbox{0.93\columnwidth}{
      \[
      \xymatrixcolsep{4pc}
      \entrymodifiers={+!!<0pt,\fontdimen22\textfont2>}
      \xymatrix{
        \text{Base code $\calC_0$} \ar@{<->}[d]^-{\text{equivalence (Rem.~\ref{rem:code-OA})}} \\
        \text{Orthogonal array} \ar[d]^-{\text{Construction~\ref{cons:td-from-oa}~\cite[ch.II.2]{ColbournD06}}} \\
        \text{Transversal design} \ar[d]^-{\text{incidence matrix}} \\
        \text{Incidence code of $\calC_0$} \ar[d]^-{\text{database encoding}} \\
        \text{Distributed PIR scheme} \\
      }
      \]
    }}
  \caption{\label{fig:summary2}A distributed PIR scheme using the incidence code construction.}
\end{figure}

\begin{example}
  Here we provide a full example of the construction of an incidence code. Let $\calC_0$ be the full-length Reed-Solomon code of dimension $2$ over the field $\FF_4 = \{ 0, 1, \alpha, \alpha^2 = \alpha + 1 \}$. The orthogonal array associated to $\calC_0$ is composed by the following list of codewords:

  \[
  A =
  \setlength{\arraycolsep}{1pt}
  \begin{pmatrix}
    0,& 0,& 0,& 0\\
    1,& 1,& 1,& 1\\
    \alpha,& \alpha,& \alpha,& \alpha\\
    \alpha^2,& \alpha^2,& \alpha^2,& \alpha^2\\
    0,& 1,& \alpha,& \alpha^2\\
    0,& \alpha,& \alpha^2,& 1\\
    0,& \alpha^2,& 1,& \alpha\\
    1,& 0,& \alpha^2,& \alpha\\
    1,& \alpha^2,& \alpha,& 0\\
    1,& \alpha,& 0,& \alpha^2\\
    \alpha,& \alpha^2,& 0,& 1\\
    \alpha,& 0,& 1,& \alpha^2\\
    \alpha,& 1,& \alpha^2,& 0\\
    \alpha^2,& \alpha,& 1,& 0\\
    \alpha^2,& 1,& 0,& \alpha\\
    \alpha^2,& 0,& \alpha,& 1\\
  \end{pmatrix}
  \]
  Using Construction~\ref{cons:td-from-oa}, we get a transversal design $\calT_{\calC_0} = (X, \calB, \calG)$ with $16$ points ($4$ groups made of $4$ points) and $16$ blocks. Let us recall how we map a row of $A$ to a word in $\{0,1\}^{16}$. For instance, consider the fifth row:
\[
a := A_5 = (0, 1, \alpha, \alpha^2)\,.
\]
We turn $a$ into a block $B_a := \{ (0, 1), (1, 2), (\alpha, 3), (\alpha, 4) \} \in \calB$, and we build the incidence vector $\mathds{1}_{B_a}$ of the block $B_a$ over the point set $X = \{ (\beta, i), i \in [1,4], \beta \in \FF_4 \}$. Of course, in order to see $\mathds{1}_{B_a}$ as a word in  $\{0,1\}^{16}$, we need to order elements in $X$, for instance:
\[
\setlength{\arraycolsep}{2pt}
\begin{array}{rl}
  \big( &(0, 1), (1, 1), (\alpha, 1), (\alpha^2, 1), \\
        &(0, 2), (1, 2), (\alpha, 2), (\alpha^2, 2), \\
        &(0, 3), (1, 3), (\alpha, 3), (\alpha^2, 3), \\
        &(1, 4), (1, 4), (\alpha, 4), (\alpha^2, 4) \,\,\big)\,.
\end{array}
\]
Using this ordering, we get:
\[
\mathds{1}_{B_a} = \big( 1,0,0,0,0,1,0,0,0,0,1,0,0,0,0,1 \big) \in \{0,1\}^{16}\,.
\]

By computing all the $\mathds{1}_{B_a}$ for $a \in \mathrm{Rows}(A)$, we obtain the incidence matrix $M$ of the transversal design $\calT_{\calC_0}$:
  \[
  M = \left(
  \begin{array}{c:c:c:c}
    1000 & 1000 & 1000 & 1000 \\
    0100 & 0100 & 0100 & 0100 \\
    0010 & 0010 & 0010 & 0010 \\
    0001 & 0001 & 0001 & 0001 \\
    1000 & 0100 & 0010 & 0001 \\
    1000 & 0010 & 0001 & 0100 \\
    1000 & 0001 & 0100 & 0010 \\
    0100 & 1000 & 0001 & 0010 \\
    0100 & 0001 & 0010 & 1000 \\
    0100 & 0010 & 1000 & 0001 \\
    0010 & 0001 & 1000 & 0100 \\
    0010 & 1000 & 0100 & 0001 \\
    0010 & 0100 & 0001 & 1000 \\
    0001 & 0010 & 0100 & 1000 \\
    0001 & 0100 & 1000 & 0010 \\
    0001 & 1000 & 0010 & 0100 \\
  \end{array}\right)\,,
  \]
Notice that this matrix can be quickly obtained by respectively replacing entries $0$, $1$, $\alpha$ and $\alpha^2$ in the array $A$ by the binary $4$-tuples $(1000)$, $(0100)$, $(0010)$ and $(0001)$ in the matrix $M$ (of course this map depends on the ordering of $X$ we have chosen, but another choice would lead to a column-permutation-equivalent matrix, hence a permutation-equivalent code). Notice that in matrix $M$, coordinates lying in the same group of the transversal design have been distinguished by dashed vertical lines.

  Matrix $M$ then defines, over any extension $\FF_{2^e}$ of the prime field $\FF_2$, the dual code of the so-called incidence code $\IC_{2^e}(\calC_0)$. For all values of $e$, the incidence codes $\IC_{2^e}(\calC_0)$ have the same generator matrix of $2$-rank $7$, being:
  \[
  G = 
  \left(\begin{array}{c:c:c:c}
          1 0 0 1 & 0 0 0 0 & 0 0 1 1 & 1 0 1 0 \\
          0 1 0 1 & 0 0 0 0 & 0 1 1 0 & 0 0 1 1 \\
          0 0 1 1 & 0 0 0 0 & 0 1 0 1 & 0 1 1 0 \\
          0 0 0 0 & 1 0 0 1 & 0 1 0 1 & 1 1 0 0 \\
          0 0 0 0 & 0 1 0 1 & 0 0 1 1 & 0 1 1 0 \\
          0 0 0 0 & 0 0 1 1 & 0 1 1 0 & 0 1 0 1 \\
          0 0 0 0 & 0 0 0 0 & 1 1 1 1 & 1 1 1 1
  \end{array}\right)\,.
  \]

\end{example}

\subsubsection{A deeper analysis of incidence codes coming from linear MDS codes of dimension $2$} Incidence codes lead to an innumerably large family of PIR protocols --- as many as there exists codes $\calC_0$ --- but most of them are not practical for PIR protocols (essentially because the kernel of the incidence matrix is too small). To simplify their study, one can first remark that intuitively, the more blocks a transversal design, the larger its incidence matrix, and consequently, the lower the dimension of its associated code. But the number of blocks of $\calT_{\calC_0}$ is the cardinality of $\calC_0$. Hence, informally the smaller the code $\calC_0$, the larger $\IC(\calC_0)$.

We recall that a $[n, k, d]$ linear code is said to be maximum distance separable (MDS) if it reaches the Singleton bound $n+1 = k+d$. Besides, the dual code of an MDS code is also MDS, hence its dual distance is $k+1$. In this paragraph we analyse the incidence codes constructed with MDS codes of dimension $2$. Their interest lies in being the smallest codes with dual distance $3$, which is the minimal setting for defining $1$-private PIR protocols.

Generalized Reed-Solomon codes are the best-known family of MDS codes.

\begin{definition}[generalized Reed-Solomon code]
  Let $\ell \ge k \ge 1$. Let also $\mathbf{x} \in \FF_q^\ell$ be a tuple of pairwise distinct so-called \emph{evaluation points}, and $\mathbf{y} \in (\FF_q^\times)^\ell$ be the \emph{column multipliers}. We associate to $\mathbf{x}$ and $\mathbf{y}$ the \emph{generalized Reed-Solomon} (GRS) code:
  \[
  \begin{aligned}
  \mathrm{GRS}_k(\mathbf{x}, \mathbf{y}) \mydef \{ &(y_1f(x_1),\dots,y_\ell f(x_\ell)), \\
  &\quad  f \in \FF_q[X], \deg f < k \}\,.
  \end{aligned}
  \]
\end{definition}

Generalized Reed-Solomon codes $\mathrm{GRS}_k(\mathbf{x},\mathbf{y})$ are linear MDS codes of dimension $k$ over $\FF_q$, and they give usual Reed-Solomon codes when $\mathbf{y} = (1,\dots,1)$. Moreover, GRS codes are essentially the only MDS codes of dimension $2$, as states the following lemma whose proof can be found in the Appendix.

\begin{lemma}
  \label{lem:MDS-dim2}
  All $[\ell, 2, \ell-1]$ MDS codes over $\FF_q$ with $2 \le \ell \le q$ are generalized Reed-Solomon codes.
\end{lemma}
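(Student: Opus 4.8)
\textbf{Proof plan for Lemma~\ref{lem:MDS-dim2}.} The plan is to start from an arbitrary $[\ell,2,\ell-1]$ MDS code $\calC_0 \subseteq \FF_q^\ell$ and explicitly exhibit evaluation points $\mathbf{x}$ and column multipliers $\mathbf{y}$ realizing it as $\mathrm{GRS}_2(\mathbf{x},\mathbf{y})$. The key observation is that since $\calC_0$ has dimension $2$, it admits a generator matrix with two rows; I will first put this matrix into a convenient normalized form using row operations (which do not change the code) and column scalings (which change the code only by the column multipliers $\mathbf{y}$, so are harmless for what we want to prove). Concretely, I would argue that no coordinate of an MDS code of minimum distance $\ell - 1 \ge 1$ can be identically zero (a zero column would drop the distance), so every column of any generator matrix $G$ is nonzero.

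Next I would exploit the MDS property more carefully. Since the dual of an $[\ell,2,\ell-1]$ MDS code is an $[\ell,\ell-2,3]$ MDS code, any two columns of $G$ are linearly independent — equivalently, viewing the $i$-th column as a projective point $[\,a_i : b_i\,] \in \PP^1(\FF_q)$, all $\ell$ points are pairwise distinct. This is the crucial structural fact. Now I would normalize: by a change of basis of $\FF_q^2$ (a left multiplication of $G$ by an invertible $2\times 2$ matrix, which preserves the code) I can assume the first row of $G$ has all entries nonzero — since at most... well, more carefully, I want the top row to be nonvanishing so I can scale. Actually the clean route: after a change of basis I may assume that none of the projective points $[\,a_i:b_i\,]$ equals $[\,0:1\,]$, i.e. $a_i \ne 0$ for all $i$ (possible because the $\ell \le q$ points are distinct, so they miss some point of $\PP^1(\FF_q)$, and I move that missed point to $[\,0:1\,]$ by a projective transformation, i.e. a basis change). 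Then set $y_i \mydef a_i \in \FF_q^\times$ and $x_i \mydef b_i / a_i \in \FF_q$; the $x_i$ are pairwise distinct precisely because the projective points were. Reading off the two rows of $G$, the code is spanned by $(y_1, \dots, y_\ell)$ and $(y_1 x_1, \dots, y_\ell x_\ell)$, which is exactly $\mathrm{GRS}_2(\mathbf{x},\mathbf{y})$ — the evaluations of $f \in \{1, X\}$, hence of all $\deg f < 2$.

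I expect the main obstacle to be bookkeeping rather than conceptual: one must be careful that the reductions (left multiplication by $\mathrm{GL}_2$, i.e. basis change of the row space) genuinely preserve the code $\calC_0$ itself, whereas right multiplication by a diagonal matrix (column scaling) does \emph{not} preserve the code but only replaces it by an equivalent GRS code — so column scalings cannot be used in the argument; instead the column multipliers $y_i$ must be \emph{extracted} from $G$ as above, not normalized away. The other point requiring a line of justification is the step "$\ell \le q$ distinct projective points miss a point of $\PP^1(\FF_q)$": this holds because $|\PP^1(\FF_q)| = q+1 > \ell$. Everything else — that distinct $x_i$ and nonzero $y_i$ yield an MDS code, that the spanning set above is linearly independent — is immediate from the Vandermonde structure and needs no separate argument.
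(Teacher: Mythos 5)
Your proof is correct and follows essentially the same route as the paper's: pairwise independence of the generator-matrix columns (dual distance $3$) together with the count $\ell \le q < q+1 = |\PP^1(\FF_q)|$ produces a direction avoided by all columns --- your basis change making the first row nowhere zero is exactly the paper's full-weight codeword obtained from the linear form $\mu_Q$ --- after which the multipliers and evaluation points are read off coordinate-wise, just as the paper writes $\mathbf{y}=c$ and $\mathbf{x}=c^{-1}\ast u$. The only minor difference is that you explicitly verify the $x_i$ are pairwise distinct, a point the paper leaves implicit.
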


Let us study the consequences of Lemma~\ref{lem:MDS-dim2} in terms of transversal designs. We say a map $\phi: X \to X'$ is an isomorphism between transversal designs $(X, \calB, \calG)$ and $(X', \calB', \calG')$ if it is one-to-one and if it preserves the incidence relations, or in other words, if $\phi$ is invertible on the points, blocks and groups:
\[
\phi(X) = X',\quad \phi(\calB) = \calB',\quad \phi(\calG) = \calG'.
\]

\begin{lemma}
  Let $\calC, \calC'$ be two codes such that $\calC' = \mathbf{y} \ast \calC$ for some $\mathbf{y} \in (\FF_q^{\times})^{\ell}$, where $\ast$ is the coordinate-wise product of $\ell$-tuples. Recall $\calT_{\calC},\, \calT_{\calC'}$ are the transversal designs they respectively define. Then, $\calT_{\calC}$ and $\calT_{\calC'}$ are isomorphic.
\end{lemma}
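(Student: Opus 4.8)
The plan is to construct an explicit isomorphism $\phi$ between the point sets of $\calT_{\calC}$ and $\calT_{\calC'}$, induced by the coordinate-wise scaling by $\mathbf{y}$, and then verify that it maps groups to groups and blocks to blocks. Recall from Construction~\ref{cons:td-from-oa} that both designs share the point set $X = S \times [1,\ell]$ (here $S = \FF_q$ up to the natural identification), the same groups $\calG = \{ S \times \{i\}, i \in [1,\ell] \}$, and blocks $B_c = \{ (c_i, i), i \in [1,\ell] \}$ indexed by codewords $c$ of the relevant code. So first I would define
\[
\phi : \FF_q \times [1,\ell] \to \FF_q \times [1,\ell], \qquad \phi(\beta, i) = (y_i \beta, i)\,.
\]
Since each $y_i \in \FF_q^\times$ is invertible, $\beta \mapsto y_i \beta$ is a bijection of $\FF_q$ for every fixed $i$, hence $\phi$ is a bijection of $X$.

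Next I would check the three incidence-preservation conditions. For the groups: $\phi(S \times \{i\}) = \{ (y_i \beta, i) : \beta \in \FF_q \} = S \times \{i\}$, because $y_i \FF_q = \FF_q$; so $\phi(\calG) = \calG$. For the blocks: given $c \in \calC$, we have
\[
\phi(B_c) = \{ \phi(c_i, i) : i \in [1,\ell] \} = \{ (y_i c_i, i) : i \in [1,\ell] \} = \{ ((\mathbf{y} \ast c)_i, i) : i \in [1,\ell] \} = B_{\mathbf{y} \ast c}\,.
\]
Since $\calC' = \mathbf{y} \ast \calC = \{ \mathbf{y} \ast c : c \in \calC \}$, as $c$ ranges over $\calC$ the image $\mathbf{y}\ast c$ ranges over all of $\calC'$, so $\phi$ sends the block set $\calB = \{ B_c : c \in \calC \}$ bijectively onto $\calB' = \{ B_{c'} : c' \in \calC' \}$; bijectivity on blocks follows from bijectivity on points (distinct blocks have distinct point sets). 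This establishes $\phi(X) = X$, $\phi(\calB) = \calB'$, $\phi(\calG) = \calG'$, which is exactly the definition of an isomorphism of transversal designs given just before the statement.

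There is essentially no hard obstacle here — the argument is a direct unwinding of the two definitions — but the one point deserving a word of care is that $\mathbf{y}\ast(\cdot)$ must genuinely be a bijection $\calC \to \calC'$ (not merely a surjection), which is immediate since it has the two-sided inverse $\mathbf{y}^{-1}\ast(\cdot)$ with $\mathbf{y}^{-1} = (y_1^{-1},\dots,y_\ell^{-1}) \in (\FF_q^\times)^\ell$; and correspondingly that $\phi^{-1}$, given by $\phi^{-1}(\beta,i) = (y_i^{-1}\beta, i)$, is the induced isomorphism in the reverse direction. One may also remark that this lemma, combined with Lemma~\ref{lem:MDS-dim2}, shows that up to isomorphism the transversal design $\calT_{\calC_0}$ attached to a $2$-dimensional MDS code $\calC_0 = \mathrm{GRS}_2(\mathbf{x},\mathbf{y})$ depends only on the evaluation tuple $\mathbf{x}$, so that for the purpose of computing dimensions of incidence codes one may freely take $\mathbf{y} = (1,\dots,1)$ and restrict attention to ordinary Reed-Solomon codes.
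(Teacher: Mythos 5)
Your proof is correct and follows essentially the same route as the paper: both construct the map $\phi(\beta,i) = (y_i\beta, i)$, observe it is a bijection of the common point set because each $y_i$ is invertible, and check that it fixes the groups and sends $B_c$ to $B_{\mathbf{y}\ast c}$, hence maps $\calB$ onto $\calB'$. Your additional remarks on the explicit inverse and the consequence for $2$-dimensional MDS codes are accurate but not needed.
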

\begin{proof}
  Write $\calT_{\calC} = (X, \calB, \calG)$ and $\calT_{\calC'} = (X', \calB', \calG')$. From the definition it is clear that $X = X' = \FF_q \times [1, \ell]$ and $\calG = \calG' = \{ \FF_q \times \{i\}, 1 \le i \le \ell\}$. Now consider the blocks sets. We see that $\calB = \{ \{ (c_i, i), 1 \le i \le \ell\}, c \in \calC\}$ and $\calB' = \{ \{ (y_ic_i, i), 1 \le i \le \ell\}, c \in \calC\}$. Let:
  \[
  \begin{array}{rlrl}
    \phi_{\mathbf{y}} : & \FF_q \times [1, \ell] & \to     &\FF_q \times [1, \ell] \\
                     & (x, i)                 & \mapsto & (y_i x, i)
  \end{array}
  \]
  The vector $\mathbf{y}$ is $\ast$-invertible, hence $\phi_{\mathbf{y}}$ is one-to-one on the point set $X$. It remains to notice that $\phi_{\mathbf{y}}$ maps $\calG$ to itself since it only acts on the first coordinate, and that $\phi_{\mathbf{y}}(\calB)$ is exactly $\calB'$ by definition of $\calC$ and $\calC'$.
\end{proof}

\begin{proposition}
  Let $2 \le \ell \le q$ and $\calC_0$ be an $[\ell, 2, \ell - 1]_q$ linear (MDS) code. Let also $\FF_p$ be any finite field. The incidence code $\IC_p(\calC_0)$ is permutation-equivalent to $\IC_p(\mathrm{RS}_2(\mathbf{x}))$, with $\mathbf{x} \in \FF_q^{\ell}$, $x_i \ne x_j$.
\end{proposition}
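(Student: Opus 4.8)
The plan is to chain together Lemma~\ref{lem:MDS-dim2} with the isomorphism lemma just established. First I would invoke Lemma~\ref{lem:MDS-dim2}: since $\calC_0$ is an $[\ell, 2, \ell-1]_q$ MDS code with $2 \le \ell \le q$, it is a generalized Reed-Solomon code, say $\calC_0 = \mathrm{GRS}_2(\mathbf{x}, \mathbf{y})$ for some tuple $\mathbf{x} \in \FF_q^\ell$ of pairwise distinct evaluation points and some $\mathbf{y} \in (\FF_q^\times)^\ell$ of column multipliers. By the very definition of GRS codes, $\mathrm{GRS}_2(\mathbf{x}, \mathbf{y}) = \mathbf{y} \ast \mathrm{RS}_2(\mathbf{x})$, where $\ast$ denotes the coordinate-wise product of $\ell$-tuples.

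Next I would apply the preceding lemma with $\calC = \mathrm{RS}_2(\mathbf{x})$ and $\calC' = \calC_0 = \mathbf{y} \ast \calC$: it yields that the transversal designs $\calT_{\calC_0}$ and $\calT_{\mathrm{RS}_2(\mathbf{x})}$ are isomorphic, via the explicit map $\phi_{\mathbf{y}} : (x, i) \mapsto (y_i x, i)$ on the common point set $X = \FF_q \times [1, \ell]$. The final step is to transfer this design isomorphism to the associated incidence codes. An isomorphism between transversal designs $(X, \calB, \calG)$ and $(X', \calB', \calG')$ is a bijection $X \to X'$ inducing bijections $\calB \to \calB'$ and $\calG \to \calG'$; read off on incidence matrices, this means that $M_{\calT_{\calC_0}}$ is obtained from $M_{\calT_{\mathrm{RS}_2(\mathbf{x})}}$ by permuting its columns according to $\phi_{\mathbf{y}}$ (acting on the points) and its rows according to the induced action on blocks. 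Permuting the rows of a parity-check matrix leaves the code unchanged, while permuting its columns produces a permutation-equivalent code; hence $\IC_p(\calC_0) = \Code_p(\calT_{\calC_0})$ and $\IC_p(\mathrm{RS}_2(\mathbf{x})) = \Code_p(\calT_{\mathrm{RS}_2(\mathbf{x})})$ are permutation-equivalent, with permutation given by $\phi_{\mathbf{y}}$ restricted to $X$ in whichever ordering of $X$ is used to index coordinates.

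I do not expect a genuine obstacle here: the argument is a short composition of two results already available in the excerpt. The only point deserving a word of care is the passage from \enquote{isomorphic transversal designs} to \enquote{permutation-equivalent incidence codes}, which is exactly the standard observation (already recorded in the remark following Definition~\ref{def:code-of-a-design}) that $\Code_p(\calD)$ depends on $\calD$ only up to the ordering of its points and is insensitive to the ordering of its blocks; once that is invoked, the conclusion is immediate.
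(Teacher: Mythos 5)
Your argument is correct and follows essentially the same route as the paper: invoke Lemma~\ref{lem:MDS-dim2} to write $\calC_0 = \mathbf{y} \ast \mathrm{RS}_2(\mathbf{x})$, apply the preceding lemma to get the design isomorphism $\phi_{\mathbf{y}}$, and transfer it to a permutation equivalence of the incidence codes. The paper phrases the last step as an explicit coordinate permutation $\tilde{\phi}_{\mathbf{y}}$ on $\FF_p^X$ rather than via the incidence matrices, but this is the same observation you make.
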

\begin{proof}  
  Lemma~\ref{lem:MDS-dim2} shows that all $[\ell, 2, \ell-1]_q$ linear codes $\calC_0$ can be written as $\mathbf{y} \ast \mathrm{RS}_2(\mathbf{x})$ for some $\mathbf{x} \in \FF_q^{\ell}$. Moreover, with the previous notation $\phi_{\mathbf{y}}(\calT_{\mathrm{RS}_2(\mathbf{x})}) = \calT_{\mathbf{y} \ast \mathrm{RS}_2(\mathbf{x})}$, so we have $u \in \IC_p(\mathbf{y} \ast \mathrm{RS}_2(\mathbf{x}))$ if and only if $u \in \Code_p(\phi_{\mathbf{y}}(\calT_{\mathrm{RS}_2(\mathbf{x})}))$. Now, let:
  \[
  \begin{array}{rlrl}
    \tilde{\phi}_{\mathbf{y}} : & \FF_p^X           &\to     & \FF_p^X \\
                             & u = (u_x)_{x \in X} &\mapsto & (u_{\phi_{\mathbf{y}}(x)})_{x \in X} 
  \end{array}\,.
  \]
  Clearly $\tilde{\phi}_{\mathbf{y}}(\IC_p(\mathrm{RS}_2(\mathbf{x}))) = \Code_p(\phi_{\mathbf{y}}(\calT_{\mathrm{RS}_2(\mathbf{x})}))$ and $\tilde{\phi}_{\mathbf{y}}$ is a permutation of coordinates. So $\IC_p(\calC_0)$ is permutation-equivalent to $\IC_p(\mathrm{RS}_2(\mathbf{x}))$ which proves the result.
\end{proof}

In our study of incidence codes of $2$-dimensional MDS codes $\calC_0$, the previous proposition allows us to restrict our work on Reed-Solomon codes $\calC_0 = \mathrm{RS}_2(\mathbf{x})$ with $\mathbf{x}$ an $\ell$-tuple on pairwise distinct $\FF_q$-elements.

A first result proves that if $\mathbf{x}$ contains all the elements in $\FF_q$, then $\IC_q(\mathrm{RS}_2(\mathbf{x}))$ is the code which has been previously studied in subsection~\ref{subsec:affine-geometry}. More precisely,
\begin{proposition}
  The following two codes are equal up to permutation:
  \begin{enumerate}
  \item $\calC_1 = \IC_q(\mathrm{RS}_2(\FF_q))$, the incidence code over $\FF_q$ of the full-length Reed-Solomon code of dimension $2$ over $\FF_q$;
  \item $\calC_2$, the code over $\FF_q$ based on the transversal design $\calT_A(2, q)$.
  \end{enumerate}
\end{proposition}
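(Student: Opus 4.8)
The statement asserts that the incidence code $\calC_1 = \IC_q(\mathrm{RS}_2(\FF_q))$ coincides, up to coordinate permutation, with $\calC_2 = \Code_q(\calT_A(2,q))$. Since both are defined as $\Code_q$ of a transversal design, and a code of a design depends only on the design up to reordering of points, it suffices to exhibit an isomorphism of transversal designs $\calT_{\mathrm{RS}_2(\FF_q)} \cong \calT_A(2, q)$ in the sense defined just above (a bijection on points carrying blocks to blocks and groups to groups). So the whole proof reduces to writing down the right bijection between the two point sets and checking it respects the combinatorial structure.

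\textbf{Setting up the bijection.} The point set of $\calT_{\mathrm{RS}_2(\FF_q)}$ is $\FF_q \times [1,\ell]$ with $\ell = q$; index the columns by the evaluation points themselves, i.e.\ identify $[1,q]$ with $\FF_q$, so a point is a pair $(v, x) \in \FF_q \times \FF_q$ meaning ``value $v$ in the coordinate attached to evaluation point $x$''. On the other side, the point set of $\calT_A(2,q)$ is $\AA^2(\FF_q)$, and its groups are the $q$ parallel lines forming a chosen partition into hyperplanes; take the partition by vertical lines $\{x = a\}$, $a \in \FF_q$. The natural map is then simply
\[
\psi : \FF_q \times \FF_q \to \AA^2(\FF_q), \qquad (v, x) \mapsto (x, v),
\]
which is obviously a bijection. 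I would then check the three conditions: (i) $\psi$ maps groups to groups — the group $\{(v,x_0) : v \in \FF_q\}$ of $\calT_{\mathrm{RS}_2(\FF_q)}$ (fixing the coordinate $x_0$, letting the value vary) goes to $\{(x_0, v) : v\in\FF_q\}$, the vertical line $x = x_0$, which is one of the hyperplanes in $H$; (ii) $\psi$ maps blocks to blocks — a block of $\calT_{\mathrm{RS}_2(\FF_q)}$ is $B_c = \{(c_x, x) : x \in \FF_q\}$ for a codeword $c = (f(x))_{x\in\FF_q}$ with $\deg f < 2$, i.e.\ $f(x) = \alpha x + \beta$; under $\psi$ this is the set $\{(x, \alpha x + \beta) : x\in\FF_q\}$, which is exactly the non-vertical affine line of slope $\alpha$ and intercept $\beta$, and these are precisely the affine lines that are $1$-secant to every vertical line, i.e.\ the blocks of $\calT_A(2,q)$; (iii) conversely every block of $\calT_A(2,q)$ — every line secant to the partition $H$, i.e.\ every non-vertical line $y = \alpha x + \beta$ — arises this way, since $(\alpha x+\beta)_{x\in\FF_q}$ is a degree-$<2$ Reed-Solomon codeword. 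So $\psi$ is a transversal-design isomorphism, and consequently $\calC_1$ and $\calC_2$ are permutation-equivalent.

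\textbf{Where the work is.} There is essentially no hard step here — it is a matching-up of two descriptions of the same object. The one point requiring a line of justification is the identification of ``lines secant to all $H_j$'' with ``graphs of affine functions of $x$'': a line in $\AA^2(\FF_q)$ either is vertical (contained in one $H_j$) or meets each vertical line in exactly one point, in which case it is the graph of a unique affine function $x \mapsto \alpha x + \beta$; this is immediate from elementary affine geometry of the plane but is the only place the specific choice of the hyperplane partition $H$ enters. I would also remark that the identification of $\mathrm{RS}_2(\FF_q)$-codewords with affine functions on $\FF_q$ is a bijection onto the set of such functions (degree $<2$ polynomials are determined by, and realize, all affine maps), so blocks correspond bijectively with non-vertical lines and no block is missed or doubled. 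With $\psi$ an isomorphism of designs, $M_{\calT_{\mathrm{RS}_2(\FF_q)}}$ and $M_{\calT_A(2,q)}$ differ only by row and column permutations, hence have the same kernel up to that permutation, which is exactly the claim.
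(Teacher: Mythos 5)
Your proof is correct and follows essentially the same route as the paper: both exhibit an explicit point bijection between $\FF_q\times[1,q]$ and $\AA^2(\FF_q)$, check that groups map to the chosen parallel class of lines, and use the identification of $\mathrm{RS}_2(\FF_q)$ codewords with affine functions to match blocks with the lines secant to that parallel class. The only difference is cosmetic (you swap coordinates and use the vertical-line partition, the paper keeps the value in the first coordinate), which does not affect the argument.
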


\begin{proof}
  It is sufficient to show that the transversal design defined by $\calC_0 = \mathrm{RS}_2(\FF_q)$ is isomorphic to $\calT_A(2, q)$. Let us enumerate $\FF_q = \{ x_1,\dots,x_q \}$. We recall that $\calT_{\calC_0} = (X, \calB, \calG)$ where:
  \[
  \begin{aligned}
    X &= \FF_q \times [1,q], \\
    \calB &= \{ \{ (c_i, i), i \in [1,q] \}, c \in \calC_0 \}, \\
    \calG &= \{ \{ (\alpha, i), \alpha \in \FF_q \}, i \in [1,q] \}\,,
  \end{aligned}
    \]
  and that $\calT_A(2, q) = (X', \calB', \calG')$ with:
  \[
  \begin{aligned}
    X' &= \FF_q \times \FF_q, \\
    \calB' &= \{ \{ (a x_i + b, x_i), i \in [1,q] \}, (a,b) \in \FF_q^2 \} \\
    \calG' &= \{ \{ (\alpha, x_i), \alpha \in \FF_q \}, i \in [1,q] \}\,.
  \end{aligned}
  \]

  In the light of the above, one defines $\phi : X \to X',\, (\alpha, i) \mapsto (\alpha, x_i)$, which is clearly one-to-one and satisfies $\phi(\calG) = \calG'$. Moreover, a codeword $c \in \calC_0$ is the evaluation of a polynomial of degree $\le 1$ over $\FF_q$. Hence for some $(a,b) \in \FF_q^2$, we have  $c_i = a x_i + b, \forall i$. This proves that $\phi$ extends to a one-to-one map $\calB \to \calB'$, giving the desired isomorphism.
\end{proof}

It remains to study the case of tuples $\mathbf{x}$ of length $\ell < q$. First, one may notice that $\IC_q(\mathrm{RS}_2(\mathbf{x}))$ is a shortening of $\IC_q(\mathrm{RS}_2(\FF_q))$. Indeed, we have the following property:

\begin{lemma}
  Let $\calC_0$ be a linear code of length $\ell$ over $\FF_q$, and $\overline{\calC_0}$ be a puncturing of $\calC_0$ on $s$ positions. Then for all prime powers $q'$, $\IC_{q'}(\overline{\calC_0})$ is a shortening of $\IC_{q'}(\calC_0)$ on the coordinates corresponding to $s$ groups of the transversal design $\calT_{\calC_0}$.
\end{lemma}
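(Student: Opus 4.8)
The plan is to peel both incidence codes back to their defining parity-check data and to match a punctured block of $\calC_0$ with a block of $\overline{\calC_0}$ coordinate by coordinate. Fix the set $J \subseteq [1,\ell]$ of punctured positions, $|J| = s$, so that $\overline{\calC_0} \mydef \{ c_{|[1,\ell]\setminus J} : c \in \calC_0 \}$ once $[1,\ell]\setminus J$ is relabeled as $[1,\ell-s]$ through the order-preserving bijection. Recall $\calT_{\calC_0} = (X,\calB,\calG)$ with $X = \FF_q \times [1,\ell]$, $G_i \mydef \FF_q \times \{i\}$, and $B_c \mydef \{(c_i,i) : i \in [1,\ell]\}$ for $c \in \calC_0$; the $s$ groups to be removed are the $G_j$, $j \in J$, and I write $T \mydef \bigcup_{j\in J} G_j = \FF_q \times J \subseteq X$ for the corresponding block of coordinates of $\IC_{q'}(\calC_0)$, so that $X \setminus T$ is canonically the point set of $\calT_{\overline{\calC_0}}$. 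The goal is then to show that the shortening of $\IC_{q'}(\calC_0)$ at $T$ equals $\IC_{q'}(\overline{\calC_0})$, and I would do this by a direct double inclusion (this sidesteps having to invoke the shortening/puncturing duality, though that is an equally valid route).

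The single observation that makes everything work is the following: for every $c \in \calC_0$, the restriction of the incidence vector $\mathds{1}_{B_c} \in \{0,1\}^X$ to $X \setminus T$ is exactly $\mathds{1}_{B_{\bar c}} \in \{0,1\}^{X \setminus T}$, where $\bar c = c_{|[1,\ell]\setminus J}$, simply because $B_c \cap (X\setminus T) = \{(c_i,i) : i \notin J\} = B_{\bar c}$. Since $c \mapsto \bar c$ maps $\calC_0$ onto $\overline{\calC_0}$, the family $\{ \mathds{1}_{B_{\bar c}} : \bar c \in \overline{\calC_0} \}$ --- which spans $\IC_{q'}(\overline{\calC_0})^{\perp}$ --- is precisely the family of restrictions to $X\setminus T$ of the rows $\mathds{1}_{B_c}$ of $M_{\calT_{\calC_0}}$ (repetitions, which may occur when the puncturing is not injective on $\calC_0$, are irrelevant since only the row space matters).

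With this in hand the two inclusions are immediate. If $u \in \IC_{q'}(\calC_0)$ vanishes on $T$, then for every $c \in \calC_0$ one has $0 = \langle u, \mathds{1}_{B_c}\rangle = \langle u_{|X\setminus T}, \mathds{1}_{B_{\bar c}}\rangle$, so $u_{|X\setminus T} \perp \IC_{q'}(\overline{\calC_0})^{\perp}$, i.e. $u_{|X\setminus T} \in \IC_{q'}(\overline{\calC_0})$; conversely, extending any $v \in \IC_{q'}(\overline{\calC_0})$ by zeros on $T$ produces a $u$ with $\langle u, \mathds{1}_{B_c}\rangle = \langle v, \mathds{1}_{B_{\bar c}}\rangle = 0$ for all $c$, hence $u \in \IC_{q'}(\calC_0)$, $u_{|T}=0$ and $u_{|X\setminus T}=v$. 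This is exactly the statement that $\IC_{q'}(\overline{\calC_0})$ is the shortening of $\IC_{q'}(\calC_0)$ at $T = \bigcup_{j \in J} G_j$. I do not anticipate a real obstacle: the only genuine points of care are (i) checking that $\calT_{\overline{\calC_0}}$ is a bona fide transversal design whenever $\calT_{\calC_0}$ is --- which holds because puncturing $\calC_0$ amounts to shortening $\calC_0^{\perp}$ and shortening never decreases the minimum distance, so the dual distance, equivalently the strength of the associated orthogonal array, cannot drop --- and (ii) the harmless bookkeeping that identifies $X \setminus T$ with the point set of $\calT_{\overline{\calC_0}}$ compatibly with groups, handled by the order-preserving relabeling, which also explains why the conclusion is naturally read up to a permutation of coordinates.
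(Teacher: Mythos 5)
Your proposal is correct and follows essentially the same route as the paper: both identify the blocks of $\calT_{\overline{\calC_0}}$ with the restrictions $B_c \cap (X \setminus T)$ of the blocks of $\calT_{\calC_0}$, and both reduce the claim to the fact that the zero-extension of a word on $X \setminus T$ has the same inner product with $\mathds{1}_{B_c}$ as the restricted word has with $\mathds{1}_{B_{\bar c}}$ (the paper writes this as a chain of equivalences, you as a double inclusion, which is only a cosmetic difference). Your two side remarks (handling repeated punctured rows via the row space, and checking that the punctured code still yields a transversal design) are sound and slightly more careful than the paper, which glosses over both points.
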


\begin{proof}
  Without loss of generality, we can assume that $\calC_0$ is punctured on its $s$ last coordinates in order to give $\overline{\calC_0}$. Let us analyse the link between $\calT_{\overline{\calC_0}} = (\overline{X}, \overline{\calB}, \overline{\calG})$ and $\calT_{\calC_0} = (X, \calB, \calG)$. We have:
  \[
  \begin{array}{rll}
    \overline{X} &= \FF_q \times [1, \ell - s] &\subset X,\\
    \overline{\calG} &= \{ \FF_q \times \{ i \},\, i \in [1, \ell-s] \} &\subset \calG,\\
    \overline{\calB} &= \{ B \cap \overline{X},\, B \in \calB \} &~
  \end{array}
  \]
  Let $\calC = \IC_{q'}(\calC_0)$ and $\overline{\calC} = \IC_{q'}(\overline{\calC_0})$. For clarity, we index words in $\calC$ (resp. $\overline{\calC}$) by $X$ (resp. $\overline{X}$). For $\overline{c} \in \FF_{q'}^{\overline{X}}$, we define $\mathrm{ext}(\overline{c}) \mydef c \in \FF_{q'}^X$, such that $c_{|\overline{X}} = \overline{c}$ and $c_{|X\setminus \overline{X}} = 0$. By definition of code's puncturing/shortening, all we need to prove is:
  \[
  \overline{\calC} = \{ \overline{c} \in \FF_{q'}^{\overline{X}},\, \mathrm{ext}(\overline{c}) \in \calC \}.
  \]
  Remind that $\overline{\calC}$ is defined as the set of $\overline{c} \in \FF_{q'}^{\overline{X}}$ satisfying $\sum_{b \in \overline{B}} \overline{c}_b = 0$ for every $\overline{B} \in \overline{\calB}$. Hence we have:
  \[
  \begin{aligned}
    \overline{c} \in \overline{\calC} &\iff \sum_{b \in \overline{B}} \overline{c}_b = 0, \quad \forall \overline{B} \in \overline{\calB}\\
    &\iff \sum_{b \in B \cap \overline{X}} \overline{c}_b = 0, \quad \forall B \in \calB\\
    &\iff \sum_{b \in B \cap \overline{X}} \mathrm{ext}(\overline{c})_b \\
    &\quad\quad\quad + \, \sum_{b \in B \cap (X \setminus \overline{X})} \mathrm{ext}(\overline{c})_b = 0, \quad \forall B \in \calB\\
    &\iff \sum_{b \in B} \mathrm{ext}(\overline{c})_b = 0, \quad \forall B \in \calB\\
    &\iff \mathrm{ext}(\overline{c}) \in \calC
  \end{aligned}
  \]
  We conclude the proof by pointing out that $X \setminus \overline{X}$ is a union of $s$ distinct groups from $\calG$. 
\end{proof}

Despite this result, incidence codes of Reed-Solomon codes $\mathrm{RS}_2(\mathbf{x})$ remain hard to classify for $|\mathbf{x}| = \ell < q$. Indeed, for a given length $\ell < q$, some $\IC(\mathrm{RS}(\mathbf{x}))$ appear to be non-equivalent. Their dimension can even be different, as shows an exhaustive search on $\IC_{16}(\mathrm{RS}(\mathbf{x}))$ with pairwise distinct $\mathbf{x} \in \FF_q^\ell$, $q=16$ and $\ell = 5$: we observe that $48$ of these codes have dimension $24$ while the $4 320$ others have dimension $22$. Further interesting research would then be to understand the values of $\mathbf{x}$ leading to the largest codes, for a fixed length $|\mathbf{x}| = \ell$.

\subsection{High-rate incidence codes from divisible codes}
\label{subsec:divisible-codes}

In this subsection, we prove that linear codes $\calC_0$ satisfying a \emph{divisibility} condition yield incidence codes whose rate is roughly greater than $1/2$. Let us first define divisible codes.

\begin{definition}[divisibility of a code]
 Let $p \ge 2$. A linear code is $p$-divisible if $p$ divides the Hamming weight of all its codewords.
\end{definition}

A study of the incidence matrix which defines an incidence code leads to the following property.
\begin{lemma}
  \label{lem:equation-MM}
  Let $\calC_0$ be a code of length $\ell$ over a set $S$, and let $\calT$ be the transversal design associated to $\calC_0$. We denote by $M$ the incidence matrix of $\calT$, where rows of $M$ are indexed by codewords from $\calC_0$. Then we have:
  \[
  (M M^T)_{c, c'} = \ell - d(c, c')\quad \forall c,c' \in \calC_0\,,
  \]
  where $d(\cdot,\cdot)$ denotes the Hamming distance.
\end{lemma}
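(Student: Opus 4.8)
The key observation is that the entry $(MM^T)_{c,c'}$ counts the number of columns (i.e.\ points of $X$) where both the $c$-th row and the $c'$-th row of $M$ have a $1$. The plan is to unwind the definitions: the $c$-th row of $M$ is the incidence vector $\mathds{1}_{B_c}$ of the block $B_c = \{(c_i, i),\, i \in [1,\ell]\}$. So I would write
\[
(MM^T)_{c,c'} = \sum_{x \in X} \mathds{1}_{B_c}(x)\,\mathds{1}_{B_{c'}}(x) = |B_c \cap B_{c'}|\,,
\]
and the task reduces to counting $|B_c \cap B_{c'}|$.

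Next I would compute this intersection group by group. Since $\calG = \{S \times \{i\},\, i \in [1,\ell]\}$ partitions $X$, we have $|B_c \cap B_{c'}| = \sum_{i=1}^{\ell} |B_c \cap B_{c'} \cap (S \times \{i\})|$. For each fixed $i$, the block $B_c$ meets the group $S \times \{i\}$ in the single point $(c_i, i)$ and $B_{c'}$ meets it in the single point $(c'_i, i)$; hence the contribution is $1$ if $c_i = c'_i$ and $0$ otherwise. Summing over $i$, we get $|B_c \cap B_{c'}| = |\{ i \in [1,\ell] : c_i = c'_i \}| = \ell - |\{ i : c_i \neq c'_i\}| = \ell - d(c,c')$, which is exactly the claim.

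There is really no obstacle here — the statement is essentially a restatement of how the Gram matrix of $0/1$ incidence vectors encodes pairwise intersection sizes, combined with the defining property of the transversal design $\calT_{\calC_0}$ (Construction~\ref{cons:td-from-oa}) that a block meets each group in exactly one point. The only mild care needed is to be explicit that distinct indices $i$ contribute to disjoint parts of $X$ (because $\calG$ is a partition), so the per-group counts add up without overlap; once that is noted, the computation is immediate. One could also phrase it diagonal case first ($c = c'$ gives $\ell - 0 = \ell$, consistent with each block having size $\ell$) as a sanity check before the general case.
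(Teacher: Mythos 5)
Your proof is correct and is essentially the same computation as the paper's: the paper expands $(MM^T)_{c,c'}$ as $\sum_{\alpha\in S,\,i\in[1,\ell]} \mathds{1}_{c_i=\alpha}\mathds{1}_{c'_i=\alpha}$ and sums over $\alpha$ for each fixed $i$, which is exactly your group-by-group intersection count phrased with indicator functions. No differences worth noting.
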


\begin{proof}
  For clarity we adopt the notation $M[c,(\alpha,i)]$ for the entry of $M$ which is indexed by the codeword $c \in \calC_0$ (for the row), and $(\alpha, i) \in S \times [1,\ell]$ (for the column). We also denote by $\mathds{1}_{\mathcal{U}(c, i, \alpha)} \in \{0,1\}$ the boolean value of the property $\mathcal{U}$, that is, $\mathds{1}_{\mathcal{U}(c, i, \alpha)} = 1$ if and only if $\mathcal{U}(c, i, \alpha)$ is satisfied. Now, let $c, c' \in \calC_0$.
  \[
  \begin{aligned}
    (M M^T)_{c, c'} &= \sum_{\alpha \in S,\, i \in [1,\ell]} M[c, (\alpha,i)] M[c', (\alpha,i)] \\
    &= \sum_{\alpha \in S,\, i \in [1,\ell]} \mathds{1}_{c_i = \alpha} \mathds{1}_{c'_i = \alpha} \\
    &= \sum_{i=1}^{\ell} \sum_{\alpha \in S} \mathds{1}_{c_i = c'_i = \alpha} \\
    &= \sum_{i=1}^\ell \mathds{1}_{c_i = c'_i}\\
    &= \ell - d(c,c')\,.
  \end{aligned}
  \]
\end{proof}

Hence, if some prime $p$ divides $\ell$ as well as the weight of all the codewords in $\calC_0$, then the product $MM^T$ vanishes over any extension of $\FF_p$, and $M$ is a parity-check matrix of a code containing its dual. A more general setting is analyzed in the following proposition.

\begin{proposition}\label{prop:p-divisible}
  Let $\calC_0$ be a linear code of length $\ell$ over $S$, $|S| = s$. Let also $\calC = \IC_q(\calC_0)$ with $\mathrm{char}(\FF_q) = p$. Denote the length of $\calC$ by $n = \ell s$. If $\calC_0$ is $p$-divisible, then
  \[
  \calC^\perp \cap \parity \subseteq \calC\,,
  \]
  where $\parity$ denotes the parity-check code of length $n$ over $\FF_q$. In particular, we get $\dim \calC \ge \frac{n - 1}{2}$.
  
  Moreover, if $p \mid \ell$, then $\calC^\perp \subseteq \calC$ and $\dim \calC \ge \frac{n}{2}$.
\end{proposition}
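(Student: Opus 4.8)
The plan is to use Lemma~\ref{lem:equation-MM} as the computational core and translate the divisibility hypothesis into a containment of codes. Recall that $M$ denotes the incidence matrix of $\calT = \calT_{\calC_0}$, viewed over $\FF_q$, and that by definition $\calC = \IC_q(\calC_0)$ is the code with parity-check matrix $M$, so $\calC^\perp = \mathrm{Span}_{\FF_q}\{\mathds{1}_B, B \in \calB\}$ is the row space of $M$. The key observation is that $\calC^\perp \subseteq \calC$ is equivalent to $M M^T = 0$ over $\FF_q$, i.e. to every pair of rows of $M$ being orthogonal; more generally, a codeword in the row span of $M$ lies in $\calC$ as soon as it is orthogonal to every row of $M$.

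First I would compute, using Lemma~\ref{lem:equation-MM}, that $(MM^T)_{c,c'} = \ell - d(c,c')$ for all $c, c' \in \calC_0$. For the diagonal entries, $(MM^T)_{c,c} = \ell - d(c,c) = \ell$. For off-diagonal entries, since $\calC_0$ is linear, $d(c,c') = \mathrm{wt}(c - c')$ and $c - c' \in \calC_0$, so $p$-divisibility gives $p \mid d(c,c')$, hence $(MM^T)_{c,c'} \equiv \ell \pmod p$ for \emph{all} $c, c'$ (including the diagonal). Thus over $\FF_q$ we have $MM^T = \ell J$, where $J$ is the all-ones matrix indexed by $\calC_0 \times \calC_0$. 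If moreover $p \mid \ell$, then $MM^T = 0$ over $\FF_q$, so every row of $M$ is orthogonal to every row of $M$; since the rows of $M$ span $\calC^\perp$, this says exactly $\calC^\perp \subseteq \calC$. The dimension bound $\dim \calC \ge n/2$ follows from $\calC^\perp \subseteq \calC$ by the standard argument $n = \dim \calC + \dim \calC^\perp \le 2\dim\calC$.

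For the general statement (without $p \mid \ell$), I would argue that it suffices to show $\mathds{1}_B \cdot v = 0$ over $\FF_q$ for every $B \in \calB$ and every $v \in \calC^\perp \cap \parity$, which by linearity reduces to checking it on a spanning set of $\calC^\perp \cap \parity$, or more directly: take $v = \sum_{c' \in \calC_0} \mu_{c'} \mathds{1}_{B_{c'}} \in \calC^\perp$ and impose that $v$ lies in the parity-check code $\parity$, i.e. $\sum_x v_x = 0$. Each row $\mathds{1}_{B_{c'}}$ has exactly $\ell$ ones (one per group), so $\sum_x \mathds{1}_{B_{c'},x} = \ell$, giving $\sum_x v_x = \ell \sum_{c'} \mu_{c'}$; the parity condition then forces $\sum_{c'} \mu_{c'} \equiv 0 \pmod p$ when $p \nmid \ell$, and is automatic otherwise. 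Now for any $c \in \calC_0$, using $MM^T = \ell J$,
\[
\mathds{1}_{B_c} \cdot v = \sum_{c'} \mu_{c'} (MM^T)_{c,c'} = \ell \sum_{c'} \mu_{c'} = 0 \quad\text{in } \FF_q\,,
\]
so $v$ is orthogonal to every row of $M$, hence $v \in \calC$. This proves $\calC^\perp \cap \parity \subseteq \calC$. Finally, since $\parity$ has codimension $1$ in $\FF_q^n$, the space $\calC^\perp \cap \parity$ has dimension at least $\dim\calC^\perp - 1 = n - \dim\calC - 1$; combining with $\calC^\perp \cap \parity \subseteq \calC$ gives $n - \dim\calC - 1 \le \dim\calC$, i.e. $\dim\calC \ge (n-1)/2$.

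I do not expect a serious obstacle here; the one point requiring a little care is the bookkeeping in the general case — making sure the parity condition on $v$ is precisely what is needed to kill the term $\ell\sum_{c'}\mu_{c'}$ in $\FF_q$, and confirming that $\calC^\perp \cap \parity$ really is spanned by (or at least consists of) vectors of the form $\sum \mu_{c'}\mathds{1}_{B_{c'}}$ with $\sum\mu_{c'} = 0$ — but this is immediate since $\calC^\perp$ is by definition spanned by the $\mathds{1}_{B_{c'}}$ and intersecting with $\parity$ just adds the single linear constraint computed above. The dimension estimates are then routine linear algebra over $\FF_q$.
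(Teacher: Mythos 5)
Your proof is correct and follows essentially the same route as the paper: both rest on Lemma~\ref{lem:equation-MM} translated via $p$-divisibility into $MM^T = \ell J'$ over $\FF_q$, the observation that orthogonality to all rows of the parity-check matrix $M$ means membership in $\calC$, and the codimension-one count $\dim(\calC^\perp \cap \parity) \ge \dim \calC^\perp - 1$. The only (cosmetic) difference is that you check $Mv^T = \bigl(\sum_x v_x\bigr)\mathbf{1} = 0$ directly for $v \in \calC^\perp \cap \parity$, whereas the paper routes the same computation through the auxiliary matrix $M - J$ and the code it generates.
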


\begin{proof}
  Let $M$ be the incidence matrix of the transversal design $\calT_{\calC_0}$. Also denote by $J$ and $J'$ the all-ones matrices of respective size $|\calC_0| \times n$ and $|\calC_0| \times |\calC_0|$. If we assume that $\calC_0$ is $p$-divisible, then Lemma~\ref{lem:equation-MM} translates into 
  \begin{equation}
    \label{eq:MM}
  M M^T = \ell J' \mod p
  \end{equation}
  while an easy computation shows that
  \[
  M J^T = \ell J'\,.
  \]
  Hence, over $\FF_q$ we obtain
  \begin{equation}
    \label{eq:MMJ}
  M (M - J)^T = 0
  \end{equation}
  which brings us to consider the code $A$ of length $n$ generated over $\FF_q$ by the matrix $M - J$. Equation~\eqref{eq:MMJ} indicates that $A \subseteq \calC$. Let $\parity \mydef \{ c \in \FF_q^n, \sum_i c_i = 0 \}$ be the parity-check code of length $n$ over $\FF_q$. Notice that $c \in \parity \iff cJ^T = 0$ and $uJ = 0 \iff uJ' = 0$. If $p \nmid \ell$, this leads to:
  \[
  \begin{aligned}
    \calC^\perp \cap \parity &= \{ c = uM \in \FF_q^n, cJ^T = 0 \}\\
    &= \{ c = uM \in \FF_q^n, \ell u J' = 0 \} \\
    &= \{ c = uM \in \FF_q^n, u J = 0 \} \\
    &= \{ u(M - J) \in \FF_q^n, u J = 0 \} \subseteq A  \subseteq \calC \,.
  \end{aligned}
  \]
  On the other hand, if $p \mid \ell$, then equation~\eqref{eq:MM} turns into $MM^T = 0$, meaning that $\calC^\perp \subseteq \calC$.

  Finally, the first bound on the dimension comes from
  \[
  \dim \calC \ge \dim (\calC^\perp \cap \parity) \ge \dim \calC^\perp -1 = n - \dim \calC - 1\,,
  \]
  while the second one is straightforward.
\end{proof}

In terms of PIR protocols, previous result translates into the following corollary.

\begin{corollary}
  Let $p$ be a prime, and assume there exists a $p$-divisible linear code of length $\ell_0$ over $\FF_q$. Then, there exists $k \ge (\ell_0 q - 1)/2$ such that we can build a distributed PIR protocol for a $k$-entries database over $\FF_q$, and whose parameters are $\ell(k) = \ell_0$ and $n(k) = \ell_0 q \le 2k+1$.
\end{corollary}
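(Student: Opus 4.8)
The plan is to read the corollary off directly from Proposition~\ref{prop:p-divisible} (which supplies the dimension bound) and Theorem~\ref{thm:1-private} (which turns a transversal design into a PIR protocol), so the work is almost entirely bookkeeping. First I would take the symbol alphabet of $\calC_0$ to be $\FF_q$ itself, so that in the notation of Construction~\ref{cons:td-from-oa} we have $s=q$, and I would form the transversal design $\calT_{\calC_0}$ together with the incidence code $\calC \mydef \IC_q(\calC_0) = \Code_q(\calT_{\calC_0})$, which has length $n = \ell_0 q$. One point to record here: for $\calT_{\calC_0}$ to be a genuine transversal design one needs $\calC_0$ to have dual distance at least $3$ (so that its codewords form an orthogonal array of strength $2$), and if $\dim \calC_0 > 2$ one obtains a $\TD_\lambda(\ell_0, q)$ with index $\lambda = q^{\dim \calC_0 - 2}$ rather than index $1$; this is harmless, since the PIR protocol of Figure~\ref{fig:no-collusion-PIR} and the analysis behind Theorem~\ref{thm:1-private} are valid verbatim for any index $\lambda$.

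Next I would invoke Proposition~\ref{prop:p-divisible} with $p = \mathrm{char}(\FF_q)$: since $\calC_0$ is $p$-divisible, we get $\calC^\perp \cap \parity \subseteq \calC$, hence $\dim_{\FF_q} \calC \ge (n-1)/2 = (\ell_0 q - 1)/2$. I would then set $k \mydef \dim_{\FF_q} \calC$; this is precisely the integer $k$ whose existence the corollary asserts, and it satisfies $k \ge (\ell_0 q - 1)/2$. (If moreover $p \mid \ell_0$ one even gets $k \ge \ell_0 q / 2$, but the statement only needs the weaker bound.)

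Finally I would apply Theorem~\ref{thm:1-private} to $\calT_{\calC_0}$: by construction of $\calC$ its incidence matrix has $\FF_q$-rank $n - k = \ell_0 q - k$, so the theorem produces a distributed $\ell_0$-server $1$-private PIR protocol on a database with $k$ entries over $\FF_q$ in which each server reads a single $\FF_q$-symbol, giving $\ell(k) = \ell_0$ and $n(k) = \ell_0 q$. The inequality $n(k) \le 2k+1$ is then just the rearrangement of $k \ge (\ell_0 q - 1)/2$. There is no genuine obstacle in this proof; the only things to be careful about are aligning the alphabet size with $q$ so that the parameters of $\calT_{\calC_0}$ match those expected by Theorem~\ref{thm:1-private}, and taking $k$ to be exactly $\dim_{\FF_q} \calC$ (a strictly smaller database size would still give a valid PIR protocol by passing to a subcode, but could violate $n(k) \le 2k+1$).
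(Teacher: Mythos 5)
Your proposal is correct and follows exactly the route the paper intends: the corollary is stated there without proof as the immediate combination of Proposition~\ref{prop:p-divisible} (giving $k = \dim \IC_q(\calC_0) \ge (\ell_0 q-1)/2$) with Theorem~\ref{thm:1-private} applied to the design $\calT_{\calC_0}$, which is precisely your argument. Your added caveats --- reading $p$ as $\mathrm{char}(\FF_q)$, requiring dual distance at least $3$ so that $\calT_{\calC_0}$ is a genuine (possibly index-$\lambda$) transversal design and privacy holds, and noting that Theorem~\ref{thm:1-private} extends verbatim to $\lambda>1$ --- are implicit in the paper and correctly fill in what its terse statement leaves unsaid.
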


Divisible codes over small fields have been well-studied, and contain for instance the extended Golay codes~\cite[ch.II.6]{MacWilliamsS77}, or the famous MDS codes of dimension $3$ and length $q+2$ over $\FF_q$~\cite[ch.XI.6]{MacWilliamsS77}.

\begin{example}\label{ex:golay}
  The extended binary Golay code is a self-dual $[24, 12, 8]_2$ linear code. It produces a transversal design with $24$ groups, each storing $2$ points. Its associated incidence code $\Code_2(\mathrm{Golay})$ has length $n = 24 \times 2 = 48$ and dimension $\ge 24$, and by computation we can show that this bound is tight.
\end{example}

\begin{remark}\label{rem:binary-golay}
  In our application for PIR protocols, we would like to find divisible codes $\calC_0$ defined over large alphabets (compared to the code length), but these two constraints seem to be inconsistent. For instance, the binary Golay code presented in Example~\ref{ex:golay} leads to a PIR protocol with a too expensive communication cost ($24$ bits of communication for an original file of size... $24$ bits: that is exactly the communication cost of the trivial PIR protocol where the whole database is downloaded). Nevertheless, Example~\ref{ex:golay} represents the worst possible case for our construction, in a sense that the rate of $\IC_2(\mathrm{Golay}_2)$ is exactly $1/2$ (it attains the lower bound), and that each server stores $2$ bits (which is the smallest possible). Codes with better rate and/or with larger server storage capability would then give PIR protocols with relevant communication complexity. For instance, the extended ternary Golay code gives better parameters --- see Example~\ref{ex:golay3}.
\end{remark}

Divisible codes over large fields seems not to have been thoroughly studied (to the best of our knowledge), since coding theorists use to consider codes over small alphabets as more practical. We hope that our construction of PIR protocols based on divisible codes may encourage research in this direction.

\section{PIR protocols with better privacy}
\label{sec:collusion-PIR}

When servers are colluding, the PIR protocol based on a simple transversal design does not ensure a sufficient privacy, because the knowledge of two points on a block gives some information on it. To solve this issue, we propose to use orthogonal arrays with higher strength $t$.

\subsection{Generic construction and analysis}

In the previous section, classical ($t=2$) orthogonal arrays were used to build transversal designs. Considering higher values of $t$, we naturally generalize the latter as follows:

\begin{definition}[$t$-transversal designs]
  Let $\ell \ge t \ge 1$. A $t$-transversal design is a block design $\calD = (X, \calB)$ equipped with a group set $\calG = \{ G_1, \dots,  G_\ell \}$ partitioning $X$ such that:
  \begin{itemize}
  \item $|X| = s \ell$;
  \item any group has size $s$ and any block has size $\ell$;
  \item for any $T \subseteq [1,\ell]$ with $|T| = t$ and for any $(x_1,\hdots,x_t) \in G_{T_1} \times \hdots \times G_{T_t}$, there exist exactly $\lambda$ blocks $B \in \calB$ such that $\{x_1,\hdots,x_t\} \subset B$.
  \end{itemize}
  A $t$-transversal design with parameters $s, \ell, t, \lambda$ is denoted $\tTD_{\lambda}(\ell, s)$, or $\tTD(\ell, s)$ if $\lambda = 1$.
\end{definition}

Given a $t$-transversal design $\mathcal{T}$, we can build a $(t-1)$-private PIR protocol with the exactly the same steps as in section~\ref{sec:no-collusion-PIR}. First, we define the code $\calC = \Code_q(\calT)$ associated to the design according to Definition~\ref{def:code-of-a-design}, and then we follow the algorithm given in Figure~\ref{fig:no-collusion-PIR}. Since a $t$-transversal design is also a $2$-transversal design for $t \ge 2$, the analysis is identical for every PIR feature, except for the security where it remains very similar.

\textbf{Security ($(t-1)$-privacy).} Let $T$ be a collusion of servers of size $|T| \le t-1$. For varying $i \in I$, the distributions $\calQ(i)_{|T}$ are the same because there are exactly $\lambda s^{t -1 -|T|} \ge \lambda \ne 0$ blocks which contain both $i$ and the queries known by the servers in $T$.

To sum up, the following theorem holds:
\begin{theorem}
  \label{thm:t-private-pir}
  Let $D$ be a database with $k$ entries over $\FF_q$, and $\calT = \tTD(\ell, s)$ be a $t$-transversal design, whose incidence matrix has rank $\ell s - k$ over $\FF_q$. Then, there exists an $\ell$-server $(t-1)$-private PIR protocol with:
  \begin{itemize}
  \item only $1$ symbol to read for each server,
  \item $\ell-1$ field operations for the user,
  \item $\ell \log(sq)$ bits of communication,
  \item a (total) storage overhead of $(\ell s - k) \log q$ bits on the servers.
  \end{itemize}
\end{theorem}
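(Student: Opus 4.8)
The plan is to inherit essentially everything from the proof of Theorem~\ref{thm:1-private} and to isolate the only genuinely new ingredient, namely the $(t-1)$-privacy, which rests on a counting lemma tailored to the defining axiom of $t$-transversal designs.

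First I would record that the protocol under consideration is \emph{verbatim} the one of Figure~\ref{fig:no-collusion-PIR}, now instantiated with $\calC = \Code_q(\calT)$ for a $t$-transversal design $\calT = \tTD(\ell,s)$, and that a systematic encoding of $\calC$ (Definition~\ref{def:systematic-encoding}) places the $k$ database symbols in an information set, so they are recovered as codeword coordinates $c_i$. A $t$-transversal design with $t \ge 2$ is in particular an ordinary transversal design (restrict the third axiom to $|T| = 2$, with index $\lambda s^{t-2}$), hence $\mathds{1}_B \in \calC^\perp$ for every block $B \in \calB$. Therefore the correctness identity $c_i = -\sum_{x \in B \setminus \{i\}} c_x = -\sum_{j \ne j^*} c_{q_j}$ holds unchanged, and the accounting for communication ($\ell \log(sq)$ bits), user computation ($\ell - 1$ field operations), server computation (one symbol read each) and storage overhead ($(\ell s - k)\log q$ bits) is identical, word for word, to that of Theorem~\ref{thm:1-private}.

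The heart of the proof is then the privacy statement. I would first prove the counting lemma: for any $T'' \subseteq [1,\ell]$ with $r \mydef |T''| \le t$ and any tuple $(x_j)_{j \in T''}$ with $x_j \in G_j$, the number of blocks $B \in \calB$ with $\{x_j : j \in T''\} \subseteq B$ equals $\lambda s^{t-r}$. This follows by extending $T''$ to a set $T^*$ of size $t$ and double counting: each of the $s^{t-r}$ extensions of $(x_j)_{j\in T''}$ to a full $t$-tuple over the groups indexed by $T^*$ lies in exactly $\lambda$ blocks, while every block through $(x_j)_{j \in T''}$ meets each group $G_j$, $j \in T^* \setminus T''$, in exactly one point and hence realizes precisely one such extension. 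In particular, taking $r = 1$ shows $|\calB^*| = \lambda s^{t-1}$ for every point.

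With the lemma in hand, I would verify $(t-1)$-privacy by a short case analysis. Fix $i \in X$, let $j^*$ be its group index, and let $T \subseteq [1,\ell]$ with $|T| \le t-1$. If $j^* \notin T$, then $\calQ(i)_{|T} = (B \cap G_j)_{j \in T}$ for $B$ uniform in $\calB^*$, so for any admissible tuple $(x_j)_{j\in T}$ the probability $\PP(\calQ(i)_{|T} = (x_j)_j)$ is the number of blocks through $\{i\} \cup \{x_j : j \in T\}$ divided by $|\calB^*|$; since this is a set of $1 + |T| \le t$ points in pairwise distinct groups, the lemma gives $\lambda s^{t-1-|T|} / (\lambda s^{t-1}) = s^{-|T|}$, independent of $i$. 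If $j^* \in T$, write $T = \{j^*\} \cup T'$ with $j^* \notin T'$ and $|T'| \le t-2$; by construction the query $q_{j^*}$ is uniform on $G_{j^*}$ and independent of $B$, so $\calQ(i)_{|T}$ factors as the product of a uniform variable on $G_{j^*}$ with $\calQ(i)_{|T'}$, whose law is $i$-independent by the previous paragraph applied to $T'$. In both cases the distribution does not depend on $i$, which gives $(t-1)$-privacy. The only delicate point is the bookkeeping in the case $j^* \in T$ --- one must use that the query to $S_{j^*}$ is drawn independently of the chosen block, which is exactly what makes the joint law factor; everything else is routine or already established.
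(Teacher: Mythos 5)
Your proof is correct and follows essentially the same route as the paper: inherit correctness and all complexity/storage counts from Theorem~\ref{thm:1-private}, and reduce $(t-1)$-privacy to the fact that exactly $\lambda s^{t-1-|T|}$ blocks pass through $i$ together with the colluders' queries. You are in fact somewhat more careful than the paper, which asserts this block count without proof and does not explicitly separate the case $j^* \in T$ (where $q_{j^*}$ is an independent uniform draw rather than a point of the chosen block); your counting lemma and the factorization of the joint law supply precisely the details the paper leaves implicit.
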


\subsection{Instances and results}
\label{subsec:t-private-instances}

\subsubsection{$t$-transversal designs from curves of degree $\le t-1$}

Looking for instances of $t$-transversal designs, it is natural to try to generalise the transversal designs of Construction~\ref{cons:td-affine}. An idea is to turn affine lines into higher degree curves.

\begin{construction}
  \label{cons:degree-t}
Let $X$ be the set of points in the affine plane $\FF_q^2$, and $\calG = \{G_1, \dots, G_q\}$ be a partition of $X$ in $q$ parallel lines. W.l.o.g. we choose the following partition: $G_i = \{ (y, \alpha_i), y \in \FF_q\}$ for each $\alpha_i \in \FF_q$. Blocks are now defined as the sets of the form
\[
B_F = \{ (F(x), x), x \in \FF_q \},\,\text{ where } F \in \FF_q[x],\,\deg F \le t-1.
\]
\end{construction}

\begin{lemma}
  The design $(X, \calB, \calG)$ given in Construction~\ref{cons:degree-t} forms a $t$-transversal design $\tTD_1(q, q)$.
\end{lemma}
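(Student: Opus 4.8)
The claim is that Construction~\ref{cons:degree-t} yields a $t$-transversal design $\tTD_1(q,q)$ with $s = \ell = q$. Going through the definition of a $t$-transversal design, the first two conditions are immediate: the point set $X = \FF_q^2$ has $|X| = q^2 = s\ell$ points; each group $G_i = \{(y,\alpha_i) : y \in \FF_q\}$ has size $q = s$; and each block $B_F = \{(F(x),x) : x \in \FF_q\}$ has size $q = \ell$, since $F$ is a function $\FF_q \to \FF_q$ and the second coordinate $x$ ranges over all of $\FF_q$ exactly once. The only substantive point is the third condition, and I would devote the whole proof to it.

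For the third condition, fix $T \subseteq [1,\ell]$ with $|T| = t$; writing $T = \{i_1,\dots,i_t\}$ this corresponds to $t$ distinct field elements $\alpha_{i_1},\dots,\alpha_{i_t} \in \FF_q$ (the "second coordinates" of the chosen groups). Fix a target $(x_1,\dots,x_t) \in G_{i_1}\times\cdots\times G_{i_t}$; each $x_j$ has the form $(\beta_j, \alpha_{i_j})$ for some $\beta_j \in \FF_q$. I claim a block $B_F$ contains all of $x_1,\dots,x_t$ if and only if $F(\alpha_{i_j}) = \beta_j$ for every $j \in [1,t]$, i.e. $F$ interpolates the $t$ pairs $(\alpha_{i_j},\beta_j)$. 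This is just unwinding the definition of $B_F$: $(\beta_j,\alpha_{i_j}) \in B_F \iff F(\alpha_{i_j}) = \beta_j$. So I must count the number of polynomials $F \in \FF_q[x]$ with $\deg F \le t-1$ passing through $t$ prescribed points with distinct abscissae. By Lagrange interpolation there is exactly one such polynomial of degree $\le t-1$ (the interpolation map from $\{F : \deg F \le t-1\}$, a space of dimension $t$, to $\FF_q^t$ given by evaluation at the $t$ distinct points $\alpha_{i_j}$ is a linear bijection). Hence $\lambda = 1$, as required, and the design is a $\tTD_1(q,q)$.

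I would also note in passing (or leave implicit) that this forces $\calG$ to be a genuine partition, which it is by choice, and that distinct $F$'s give distinct blocks since a polynomial of degree $\le t-1 \le q-1$ is determined by its values on $\FF_q$; this guarantees $|\calB| = q^t = \lambda s^t$, consistent with the parameter count for an orthogonal array $\OA_1(t,q,q)$ — indeed the construction is exactly the $\OA$-to-$\TD$ correspondence (Construction~\ref{cons:td-from-oa}) applied to the Reed–Muller/Reed–Solomon-type code $\mathrm{RS}_t(\mathbf{x})$ of evaluation vectors of polynomials of degree $< t$. One small thing to check for cleanliness is that $t \le \ell = q$, so that $t$ distinct evaluation points exist in $\FF_q$; this is given since $\ell \ge t$ is a standing hypothesis in the definition of $t$-transversal design.

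**Main obstacle.** There is no real obstacle here: the entire content is the observation that "$t$ points with distinct $x$-coordinates lie on a unique polynomial curve of degree $\le t-1$" is Lagrange interpolation, and the rest is bookkeeping matching Construction~\ref{cons:degree-t} against the three bullet points of the definition. The only place to be slightly careful is making sure the $x$-coordinates $\alpha_{i_1},\dots,\alpha_{i_t}$ appearing as the abscissae are pairwise distinct — but this holds precisely because they index $t$ \emph{distinct} groups $G_{i_1},\dots,G_{i_t}$, and the groups are indexed by distinct field elements.
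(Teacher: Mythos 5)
Your proof is correct and follows essentially the same route as the paper's: both reduce the incidence condition to Lagrange interpolation of $t$ points with distinct abscissae (distinct because they index distinct groups), yielding a unique polynomial of degree $\le t-1$ and hence $\lambda=1$. Your additional bookkeeping (block sizes, injectivity of $F \mapsto B_F$, the $\OA$ correspondence) is accurate but not needed beyond what the paper records.
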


\begin{proof}
  The group set indeed partitions $X$ into $q$ groups, each of size $q$. It remains to check the incidence property. Let $\{G_{T_1},\dots,G_{T_t}\}$ be a set of $t$ distinct groups, and let $((y_{T_1}, x_{T_1}), \dots, (y_{T_t}, x_{T_t})) \in G_{T_1} \times \dots \times G_{T_t}$. From Lagrange interpolation theorem, we know there exists a unique polynomial $F \in \FF_q[X]$ of degree $\le t-1$ such that:
\[
F(x_{T_j}) = y_{T_j}\,\quad \forall 1 \le j \le t\,.
\]
Said differently, there is a unique block which contains the $t$ points $\{ (y_{T_j}, x_{T_j}) \}_{1 \le j \le t}$.
\end{proof}

We do not yet analyse the rank properties of these designs, since Construction~\ref{cons:degree-t} corresponds to a particular case of the generic construction given below.

\subsubsection{$t$-transversal designs from orthogonal arrays of strength $t$}

In this paragraph we give a generic construction of $t$-transversal designs, which is a simple generalisation of the way we build transversal designs with orthogonal arrays (Subsection~\ref{subsec:oa-to-td}).

\begin{construction}
  \label{cons:ttd-from-toa}
  Let $A$ be an orthogonal array $\OA_{\lambda}(t, \ell, s)$ on a symbol set $S$. Recall that the array $A$ is composed of rows $a_i = (a_{i,j})_{1 \le j \le \ell}$ for $1 \le i \le \lambda s^t$. We define the following design:
  \begin{itemize}
  \item its point set is $X = S \times [1,\ell]$;
  \item its group set is $\calG = \{ S \times \{i\}, 1 \le i \le \ell \}$;
  \item its blocks are $B_i = \{ (a_{i,j}, j), 1 \le j \le \ell \}$ for all $a_i \in \mathrm{Rows}(A)$.
  \end{itemize}
\end{construction}

\begin{proposition}
  \label{prop:t-td}
  If $A$ is an $\OA_{\lambda}(t, \ell, s)$, then the design defined with $A$ by Construction~\ref{cons:ttd-from-toa} is a $\tTD_{\lambda}(\ell, s)$.
\end{proposition}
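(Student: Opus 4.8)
The plan is to verify directly that the three defining properties of a $\tTD_{\lambda}(\ell, s)$ hold for the design $\calD = (X, \calB, \calG)$ produced by Construction~\ref{cons:ttd-from-toa}. The first two properties are immediate from the definitions of $X$, $\calG$ and $\calB$, so the real content is establishing the $t$-wise incidence property, which will follow from the orthogonality of the array $A$.

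\emph{Size conditions.} First I would observe that $|X| = |S \times [1,\ell]| = s\ell$, as required. The group set $\calG = \{ S \times \{i\}, 1 \le i \le \ell \}$ clearly partitions $X$, and each group $S \times \{i\}$ has size $|S| = s$. Each block $B_i = \{ (a_{i,j}, j), 1 \le j \le \ell \}$ has exactly $\ell$ elements since the second coordinates $j$ run over the distinct values $1, \dots, \ell$ (so the pairs are pairwise distinct regardless of the entries $a_{i,j}$). This settles the first two bullet points of the definition of a $t$-transversal design.

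\emph{The incidence property.} Fix $T = \{T_1, \dots, T_t\} \subseteq [1,\ell]$ with $|T| = t$, and fix points $x_r = (\beta_r, T_r) \in G_{T_r}$ for $1 \le r \le t$, where $\beta_r \in S$. I would then count the blocks $B_i$ such that $\{x_1, \dots, x_t\} \subseteq B_i$. By construction, $x_r \in B_i$ if and only if $a_{i,T_r} = \beta_r$; since a block meets each group in exactly one point, $\{x_1,\dots,x_t\} \subseteq B_i$ is equivalent to the single row $a_i$ of $A$ satisfying $(a_{i,T_1}, \dots, a_{i,T_t}) = (\beta_1, \dots, \beta_t)$. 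Now apply the definition of $\OA_{\lambda}(t,\ell,s)$ to the $t$ columns indexed by $T$: every row vector of $S^t$, in particular $(\beta_1,\dots,\beta_t)$, occurs in exactly $\lambda$ rows of the subarray formed by those columns. Hence there are exactly $\lambda$ blocks containing all of $x_1,\dots,x_t$, which is precisely the third condition. This completes the proof that $\calD$ is a $\tTD_{\lambda}(\ell,s)$.

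\emph{Main obstacle.} There is essentially no obstacle here: the construction is designed so that blocks correspond bijectively to rows, groups correspond to columns, and membership of a point $(\beta, i)$ in a block corresponds to an entry equalling $\beta$ in the relevant column. The only point requiring a small amount of care is the reduction from ``$t$ points lie on a common block'' to ``one row agrees with a prescribed pattern on $t$ coordinates'', which relies on the fact that the prescribed points lie in $t$ \emph{distinct} groups (guaranteeing the coordinates $T_1, \dots, T_t$ are distinct) so that the orthogonal array property applies directly. Once that observation is in place, the $\lambda$-count is read off immediately from Definition~\ref{def:oa}.
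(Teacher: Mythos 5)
Your proof is correct and follows essentially the same route as the paper's: translate membership of the $t$ chosen points in a block into the condition that the corresponding row of $A$ matches a prescribed pattern on the $t$ columns indexed by $T$, then read off the count $\lambda$ from the orthogonal array property. The only difference is that the paper makes the block--row bijection explicit (via a map $\psi$, justified by the standing convention that the array has no repeated rows), whereas you use this identification implicitly when passing from ``rows matching the pattern'' to ``blocks containing the points''.
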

\begin{proof}
 It is clear that $\calG$ is a partition of $X$ and that blocks and groups have the claimed size. Now focus on the incidence property. Let $T \subset [1,\ell]$ with $|T| = t$, and let $(x_1,\hdots,x_t) \in G_{T_1} \times \hdots \times G_{T_t}$. We need to prove that there are exactly $\lambda$ blocks $B \in \calB$ such that $\{x_1,\hdots,x_t\} \subset B$.

Consider the map from blocks in $\calB$ to rows of $A$ given by:
\[
\begin{array}{rclc}
  \psi : &\calB                                      &\to     & \mathrm{Rows}(A)\\
         & B_i = \{ (a_{i,j}, j) , 1 \le j \le \ell \} &\mapsto & (a_{i,1},\hdots,a_{i,\ell})
\end{array}
\]
Since we assumed that orthogonal arrays have no repeated row, the map $\psi$ is one-to-one. Denote by $x' = (x'_1,,\hdots,x'_t) \in S^t$ the vector formed by the first coordinates of $(x_1,,\hdots,x_t) \in X^t$. From the definition of an orthogonal array of strength $t$ and index $\lambda$, we know that $x'$ appears exactly $\lambda$ times in the submatrix of $A$ defined by the columns indexed by $T$. Hence this defines $\lambda$ preimages in $\calB$, which proves the result.
\end{proof}

\begin{remark}
  As we noticed before, Construction~\ref{cons:degree-t} is a particular case of Construction~\ref{cons:ttd-from-toa}. Indeed, a block $B_F = \{(F(x), x), x \in \FF_q \}$, with $\deg F \le t-1$ is in one-to-one correspondence with a codeword $c_F$ of a Reed-Solomon code of dimension $t$.
\end{remark}

\begin{corollary}
  \label{coro:dual-distance-t-private}
  Let $\calC_0$ be a code of length $\ell$ and dual distance $t+2 \le \ell$ over a set $S$ of size $s$. Then, $\IC_q(\calC_0)$ defines a $t$-private PIR protocol.
\end{corollary}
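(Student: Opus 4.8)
The plan is to reduce Corollary~\ref{coro:dual-distance-t-private} to the combinatorial machinery already assembled, rather than re-proving anything from scratch. First I would invoke Remark~\ref{rem:code-OA}: since $\calC_0$ has dual distance $d' = t+2$, its codewords listed in rows form an orthogonal array of strength $d'-1 = t+1$, i.e.\ an $\OA_{\lambda}(t+1,\ell,s)$ for the appropriate index $\lambda$. Then I would apply Construction~\ref{cons:ttd-from-toa} together with Proposition~\ref{prop:t-td} to conclude that $\calT_{\calC_0}$ is a $(t+1)$-transversal design $(t{+}1)\text{-}\mathrm{TD}_{\lambda}(\ell,s)$. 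Note the indexing shift: a code of dual distance $t+2$ yields strength $t+1$, hence a $(t{+}1)$-transversal design, which by Theorem~\ref{thm:t-private-pir} gives a $((t{+}1)-1) = t$-private PIR protocol. Since $\IC_q(\calC_0) = \Code_q(\calT_{\calC_0})$ by definition of the incidence code, the protocol of Figure~\ref{fig:no-collusion-PIR} run on this code is exactly the desired $t$-private PIR protocol.

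The one subtlety worth spelling out is that Proposition~\ref{prop:t-td} and the surrounding text are stated for orthogonal arrays with no repeated row (and the bijectivity of the map $\psi$ relies on this), so I would first check that the orthogonal array coming from $\calC_0$ has distinct rows. This is immediate: distinct codewords of $\calC_0$ are distinct tuples, so listing all $|\calC_0|$ codewords as rows gives an array with no repeated row. Repeated columns are harmless for the construction (they would merely force $\ell$ to be counted with multiplicity), but in fact the hypothesis $t+2 \le \ell$ combined with a nontrivial dual distance rules out degenerate situations; in any case Construction~\ref{cons:ttd-from-toa} and Proposition~\ref{prop:t-td} only require distinct rows, which we have. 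One should also observe that $\ell \ge t+2 > t+1$, so the orthogonal array of strength $t+1$ makes sense (strength at most degree), and that $s = |S|$ matches the group size, so all the size bookkeeping in the definition of a $(t{+}1)$-transversal design is satisfied.

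I do not expect a genuine obstacle here: the corollary is essentially a dictionary entry translating "dual distance $t+2$" through the chain of equivalences
\[
\text{dual distance } t+2 \;\rightsquigarrow\; \text{strength } t+1 \;\rightsquigarrow\; (t{+}1)\text{-transversal design} \;\rightsquigarrow\; t\text{-private PIR},
\]
each arrow being respectively Remark~\ref{rem:code-OA}, Proposition~\ref{prop:t-td}, and Theorem~\ref{thm:t-private-pir}. If anything requires a line of care, it is only making sure the off-by-one between dual distance, strength, transversal-design parameter, and privacy level is tracked consistently; I would state it explicitly once at the start of the proof so the reader can check it. The rest is a one-sentence assembly of the cited results, so the write-up should be only a few lines.
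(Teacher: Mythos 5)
Your proposal is correct and follows exactly the paper's own argument: dual distance $t+2$ gives an orthogonal array of strength $t+1$ (Remark~\ref{rem:code-OA}), which via Construction~\ref{cons:ttd-from-toa} and Proposition~\ref{prop:t-td} yields a $(t{+}1)$-transversal design, and Theorem~\ref{thm:t-private-pir} then gives $t$-privacy. The extra care you take about distinct rows is a reasonable (if not strictly necessary) addition to the paper's one-line proof.
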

\begin{proof}
  Let $A$ be the orthogonal array defined by $\calC_0$. We know that $A$ has strength $t+1$ (see \emph{e.g.}~\cite{MacWilliamsS77}), hence from Proposition~\ref{prop:t-td}, the associated transversal design is a $\genTD{(t+1)}(\ell, s)$. Theorem~\ref{thm:t-private-pir} then ensures that the PIR protocol induced by this transversal design is $t$-private.
\end{proof}

As in Section~\ref{sec:explicit-constructions}, if the code $\calC_0$ is divisible, then we can give a lower bound on the rate of its incidence code. We provide two examples in finite (and small) length.

\begin{example}
\label{ex:golay3}
A first example would be to consider extended Golay codes. Indeed, they are known to be divisible by their characteristic~\cite[ch.II.6]{MacWilliamsS77}, they have large dual distance, and Proposition~\ref{prop:p-divisible} then ensures their incidence codes have non-trivial rate. In Remark~\ref{rem:binary-golay}, we noticed that the binary Golay code does not lead to a practical PIR protocol due to a large communication complexity. Thus, let us instead consider the $[12, 6, 6]_3$ extended ternary Golay code, that we denote $\mathrm{Golay_3}$. It is self-dual, hence $d^\perp(\mathrm{Golay_3}) = 6$. Then, $\calC = \IC_{3^e}(\mathrm{Golay}_3)$, $e \ge 1$, has length $36$ and Proposition~\ref{prop:p-divisible} shows that $\dim \calC \ge 18$ (the bound can be proved to be tight by computation). Hence, the associated PIR protocol works on a raw file of $18$ $\FF_{3^e}$-symbols encoded into $36$, uses $12$ servers (each storing $3$ $\FF_{3^e}$-symbols) and resists any collusion of one third (\emph{i.e.} $4$) of them.

\end{example}

\begin{example}
  A second example arises from the exceptional $[q+2, 3, q]_q$ MDS codes in characteristic $2$~\cite[ch.XI.6]{MacWilliamsS77}. For instance, for $q = 4$, we obtain a $2$-private PIR protocol with $6$ servers, each storing $4$ symbols of $\FF_{2^e}$ for some $e \ge 1$. Once again, the dimension of the incidence code attains the lower bound, here $k =12$.
\end{example}

\begin{example}
  Examples of incidence codes which do not attain the lower bound of Proposition~\ref{prop:p-divisible} come from binary Reed-Muller codes of order $1$, denoted $\mathrm{RM}_2(m, 1)$. These codes are $2$-divisible since they are known to be equivalent to extended Hamming codes. They also have length $n = 2^m$ and dual distance $d^\perp = n/2$.

  For instance, $\mathrm{RM}_2(3, 1)$ provides an incidence code of dimension $k = 11 > 8$, that is, a $2$-private $8$-server PIR protocol on a database with $11$ $\FF_{2^e}$-symbols, where each server stores $2$ symbols. For $m=4$, $\mathrm{RM}_2(4, 1)$ gives a $6$-private $16$-server PIR protocol on a database with $20$ $\FF_{2^e}$-symbols, each server storing $2$ symbols. We conjecture that $\mathrm{IC}_2(\mathrm{RM}_2(m, 1))$ leads to a $(2^{m-1}-2)$-private $2^m$-server PIR protocol on a database with $2^m + m$ symbols, each server storing $2$ symbols.
\end{example}

As pointed out in Subsection~\ref{subsec:oa-to-td}, high-rate incidence codes $\calC = \IC(\calC_0)$ have the best chance to occur when the dimension of $\calC_0$ is small, since the cardinality of $\calC_0$ is the number of rows in a (non-full-rank) parity-check matrix which defines $\calC$. Besides, in order to define $t$-private PIR protocols, we need an orthogonal array of strength $t+2$, \emph{i.e.} a code $\calC_0$ with dual distance $t+2$. Conciliating both constraints, we are tempted to pick MDS codes of dimension $t+1$.

A well-known family of MDS codes is the family of Reed-Solomon codes. For $\calC_0 = \mathrm{RS}_{t+1}(\FF_q)$ and varying values of $q$ and $t$, we were able to compute the rate of $\IC(\calC_0)$, and these codes lead to $t$-private PIR protocols with communication complexity approximately $\sqrt{n}$, where $n$ is the length of the encoded database. These rates are presented in Figure~\ref{fig:rates-t-private} and as expected, the rate of our families of incidence codes decreases with $t$, the privacy parameter. Figure~\ref{fig:rates-t-private} also shows that Reed-Solomon-based instances cannot expect to reach at the same time constant information rate and resistance to a constant fraction of colluding servers.

\begin{figure}[h!]
  \centering
  \includegraphics{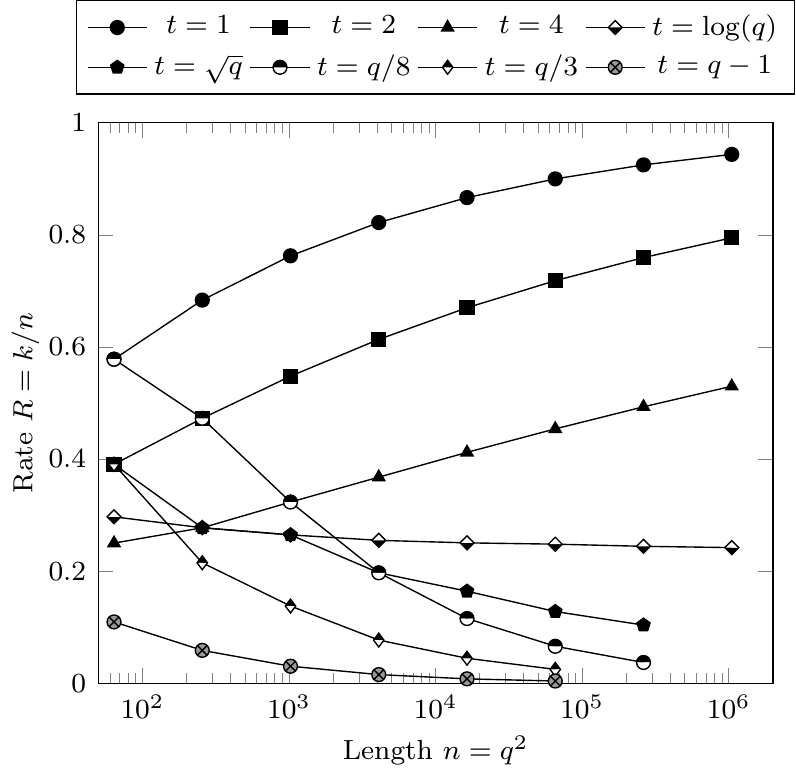}








  \caption{\label{fig:rates-t-private}Rate of incidence codes of $\calC_0$ that are used for building $t$-private PIR protocols. Codes $\calC_0$ are full-length Reed-Solomon codes of dimension $t+1$ (dual distance $t+2$) over $\FF_q$. Associated PIR protocols then need $q$ servers, each storing $q$ symbols.} 
\end{figure}

\section{Comparison with other works}
\label{sec:comparison}

Our construction fits into the model of distributed (or coded) PIR protocols, which is currently instantiated in a few schemes, notably the construction of Augot~\emph{et al.}~\cite{AugotLS14} and all the works involving the use of PIR codes initiated by Fazeli~\emph{et al}~\cite{FazeliVY15}. We recall that we aimed at building PIR protocols with very low burden for the servers, in terms of storage \emph{and} computation. While PIR codes are a very efficient way to reduce the storage overhead, they do not cut down the computation complexity of the original replication-based PIR protocol  used for the emulation.

Hence, for the sake of consistency, we will only compare the parameters of our PIR schemes with those of the multiplicity code construction presented in~\cite{AugotLS14}.

\textbf{Sketch of the construction~\cite{AugotLS14}.} Multiplicity codes $\calC$ have the property that a codeword $c \in \calC$ can be seen as the vector of evaluations of a multivariate polynomial $f_c \in \FF_q[X_1,\dots,X_m]$ and its derivatives over the space $\FF_q^m$, where $\FF_q$ denotes the finite field with $q$ elements. Every affine line of the space $\FF_q^m$ then induces linear relations between $f_c$ and its derivatives, which translates into low-weight parity-check equations for the codewords. This allows to define a local decoder for $\calC$: when trying to retrieve a symbol $D_i$ indexed by $i \in \FF_q^m$, one can pick random affine lines going through $i$ and recover $D_i$ by computing short linear combinations of the symbols associated to the evaluations of $f_c$ and their derivatives along these lines. We refer to~\cite{KoppartySY14} for more details on these codes.

Augot \emph{et al.}~\cite{AugotLS14} realized that partitioning $\FF_q^m$ into $q$ parallel hyperplanes gives rise to storage improvements. By splitting the encoded database according to these hyperplanes and giving one part to each of the $q$ servers, they obtained a huge cut on both the total storage and the number of servers, while keeping an acceptable communication complexity. Their construction requires a minor modification of the LDC-based PIR protocol of Figure~\ref{fig:LDC-to-encoded-PIR}; indeed, in the query generation process, the only server which holds the desired symbol must receive a random query. Nevertheless, the PIR scheme they built was at that time the only one to let the servers store less than twice the size of the database. Moreover, the precomputation of the encoding of the database ensures an optimal computational complexity for the servers. As noticed previously, we emphasize the significance of this feature when the database is very frequently queried.

\textbf{Parameters of the distributed PIR protocol~\cite{AugotLS14} based on multiplicity codes.} This PIR scheme depends on four main parameters: the field size $q$, the dimension $m$ of the underlying affine space, the multiplicity order $s$ and the maximal degree $d$ of evaluated polynomials. For an error-free and collusion-free setting, $d = s(q-1) -1$ is the optimal choice. Let $\sigma = \binom{m+s-1}{m}$. The associated PIR protocol uses $q$ servers to store an original database containing $\binom{d+m}{m}$ $\FF_q$-symbols, but encoded into codewords of length $q^m$, where each symbol has size $\sigma \log q$ bits. Hence the redundancy (in bits) of the scheme is:
\[
\rho = \left(\sigma q^m - \binom{s(q-1) + m - 1}{m} \right) \log q\,.
\]
Concerning the communication complexity, let us only focus on the download cost (which is often the bottleneck in practice). The multiplicity code local decoding algorithm needs to query symbols of $\sigma$ distinct lines of the space. Hence, each server must answer $\sigma$ symbols of size $\sigma \log q$ bits. Thus it leads to a download communication complexity of 
\[
\gamma =  \sigma^2 q \log q \,\,\text{ bits}.
\]

\textbf{Note about the comparison strategy.} We consider a database $D$ of size $100$ MB. Since the protocols may be initially constructed for databases of smaller size $k$, we split $D$ into $k$ chunks of size $|D|/k$. Hence, when running the PIR protocol, the user is allowed to retrieve a whole chunk, and the chunk size will be precised in our tables. For instance, in the first row of Table~\ref{tab:comp-64}, one shall understand that the user is able to retrieve $31.1$kB of the database privately, with $1.99$MB of communication, while each server only produces $1$ operation over the chunks (of size $31.1$kB) it holds. 

\begin{table*}[t!]
  \[
  \begin{array}{ccccc}
    \text{Instance}               & \text{\shortstack{download\\communication}} & \text{\shortstack{complexity\\(\#op./server)}} & \text{storage overhead} & \text{chunk size} \\
    \hline
    \calT_A(m=2, q=64)            & 1.99\,\mathrm{MB} &  1 & 22.7\,\mathrm{MB} & 31.1\,\mathrm{kB} \\
    \calT_A(m=3, q=64)            &   56\,\mathrm{kB} &  1 &  126\,\mathrm{MB} &   882\,\mathrm{B} \\
    \mathrm{Mult}(q=64, m=3, s=6) & 2.32\,\mathrm{MB} & 56 & 64.8\,\mathrm{MB} &   16\,\mathrm{B}  \\
    \mathrm{Mult}(q=64, m=4, s=2) &   15\,\mathrm{kB} &  5 &  694\,\mathrm{MB} &   13\,\mathrm{B}  \\
    \hline
  \end{array}
  \]
  \caption{\label{tab:comp-64}Comparison of some distributed PIR protocols with $64$ servers on a $100$MB initial database. Parameters of multiplicity codes have been chosen in order to obtain simultaneously low communication complexity and storage overhead.}
\end{table*}

\begin{table*}[t!]
  \[
  \begin{array}{ccccc}
    \text{Instance}               & \text{\shortstack{download\\communication}} & \text{\shortstack{complexity\\(\#op./server)}} & \text{storage overhead} & \text{chunk size} \\
    \hline
    \calT_A(m=2, q=8)            & 22.7\,\mathrm{MB}  & 1 & 76\,\mathrm{MB}  & 2.83\,\mathrm{MB}   \\
    \calT_A(m=3, q=8)            & 6.03\,\mathrm{MB} & 1  & 281\,\mathrm{MB}  & 754\,\mathrm{kB} \\
    \mathrm{Mult}(q=8, m=4, s=2) & 8.8\,\mathrm{MB} & 5 & 797\,\mathrm{MB} & 117\,\mathrm{kB} \\
    \mathrm{Mult}(q=8, m=6, s=3) & 2.86\,\mathrm{MB} & 28  & 3.24\,\mathrm{GB}  & 1.2\,\mathrm{kB} \\
    \hline
  \end{array}
  \]
  \caption{\label{tab:comp-8}Comparison of some distributed PIR protocols with $8$ servers on a $100$MB initial database. We notice that for $q=8$ servers, there only exist a few non-Reed-Muller ($s \ge 2$) multiplicity codes whose associated PIR protocols have communication complexity strictly less than the size of the original database.}
\end{table*}

Tables~\ref{tab:comp-64} and~\ref{tab:comp-8} first reveal that our PIR schemes are more storage efficient than the PIR schemes relying on multiplicity codes. Moreover, our constructions provide a better \emph{communication rate} (defined as the ratio between communication cost and chunk size), though the multiplicity code PIR protocols allow to retrieve smaller chunks (hence is more flexible). 

\begin{remark}
  Recent constructions of PIR protocols (for instance results of Sun and Jafar such that~\cite{SunJ17}) lead to better parameters in terms of communication complexity. However, we once more emphasize that we aimed at minimizing the computation carried out by the servers, which is a feature that is mostly not considered in those works.
\end{remark}

\section{Conclusion}

In this work, we have presented a generic construction of codes yielding distributed PIR protocols with optimal server computational complexity, in a sense that each server only has to read one symbol of the part of the database it stores. Our construction makes use of transversal designs, whose incidence properties ensure a natural distribution of the coded database on the servers, as well as the privacy of the queries. Our PIR protocols also feature efficient reconstructing steps since the user has to compute a simple linear combination of the symbols it receives. Finally, they require low storage for the servers and acceptable communication complexity.

We instantiated our construction with classical transversal designs coming from affine and projective geometries, and with transversal designs emerging from orthogonal arrays of strength $2$. The last construction that we call \emph{incidence code} can even be generalized, since stronger orthogonal arrays lead to PIR protocols with a better resilience to collusions.

The generality of our construction allows the user to choose appropriate settings according to the context (low storage capability, few colluding servers, \emph{etc.}). It also raises the question of finding transversal designs with the most practical PIR parameters for a given context. Indeed, while affine and projective geometries give excellent PIR parameters for the servers (low computation, low storage), there seems to remain room for improving the communication complexity and the number of needed servers.

\bibliographystyle{plain}
\bibliography{pir_final_version}{}

\appendix

\subsection{Hamada's formula}
\label{app:hamada}

Hamada~\cite{Hamada68} gives a generic formula to compute the $p$-rank of a projective geometry design $\PG_t(m,q)$, for $q = p^e$:
\[
\small
\begin{aligned}
&\rank_p(\PG_t(m, q)) \\
&\quad = \sum_{(s_0,\hdots,\,s_e) \in S} \prod_{j=0}^{e-1} \sum_{i=0}^{L(s_{j+1},s_j)} (-1)^i\textstyle \binom{m+1}{i} \textstyle\binom{m+ s_{j+1}p - s_j - ip}{m}
\end{aligned}
\]
 where $S \subset \ZZ^{e+1}$ contains elements $(s_0,\hdots,\,s_e)$ such that:
\[
\left\{
\begin{array}{l}
  s_0 = s_e\\
  t+1 \le s_j \le m+1\\
  0 \le s_{j+1} p - s_j \le (m+1)(p-1)\,,
\end{array}\right.
\]
and $L(s_{j+1},s_j) = \lfloor \frac{s_{j+1} p - s_j}{p} \rfloor$ .

The $p$-rank of the associated affine geometry design $\AG_t(m,q)$ can be derived from the projective one by:
\[
\begin{aligned}
  &\rank_p(\AG_t(m, q)) \\
  &\quad = \rank_p(\PG_t(m,q)) - \rank_p(\PG_t(m-1,q))\,.
  \end{aligned}
\]

 Despite its heavy expression, Hamada's formula can be simplified by picking very specific values of $m$, $p$ or $e$. For instance we have:
\[
\begin{aligned}
  m = 2  &: \quad \forall p, e, \rank_p\, \AG_1(2, p^e) = \textstyle\binom{p+1}{2}^e\,,\\
  e = 1  &: \quad \forall p, m,  \rank_p\, \AG_1(m, p) = p^m - \textstyle\binom{m+p-2}{m}\,.
\end{aligned}
\]
For $(m,e)  = (3,2)$, we get
\[
\forall p, \quad \rank_p\, \AG_1(3, p^2) = \left(p^3 - \textstyle\binom{p+1}{3}\right)^2 + 2 \textstyle\binom{p}{2}\binom{p+1}{3}\,,
\]
this equality being found by interpolation, since $\rank_p(\AG_1(m, p^e))$ is a polynomial of degree at most $m e$ in $p$.

\subsection{Proof of Lemma~\ref{lem:MDS-dim2}}

Let us recall the result we want to state.

\begin{lemma*}
  All $[\ell, 2, \ell-1]$ MDS codes over $\FF_q$ with $2 \le \ell \le q$ are generalized Reed-Solomon codes.
\end{lemma*}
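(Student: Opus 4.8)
The plan is to translate the statement into projective geometry over the projective line $\PP^1(\FF_q)$. Given an $[\ell, 2, \ell-1]_q$ MDS code $\calC$, I would fix a generator matrix $G \in \FF_q^{2 \times \ell}$ and denote its columns by $v_1, \dots, v_\ell \in \FF_q^2$. First I would recall the standard characterisation: a code of dimension $2$ with generator matrix $G$ is MDS if and only if every two columns of $G$ are linearly independent (equivalently, no $v_i$ is zero and no two of them are proportional). This says exactly that the $\ell$ points $[v_1], \dots, [v_\ell]$ of $\PP^1(\FF_q)$ are pairwise distinct.

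Next, since $|\PP^1(\FF_q)| = q+1$ and $\ell \le q$, there is a projective point $P_\infty \in \PP^1(\FF_q)$ distinct from all the $[v_i]$. I would then pick $A \in \mathrm{GL}_2(\FF_q)$ mapping $P_\infty$ to $[0:1]$; since left multiplication by an invertible matrix does not change the row space, $AG$ is another generator matrix of the \emph{same} code $\calC$, and now no column $A v_i$ lies on $[0:1]$, i.e.\ each $A v_i$ has nonzero first coordinate. Writing $A v_i = (y_i, z_i)^T$ with $y_i \in \FF_q^\times$ and setting $x_i \mydef z_i / y_i$, we get $A v_i = y_i(1, x_i)^T$. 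As $A$ is a projective transformation it preserves distinctness of projective points, so the points $[1 : x_i] = [A v_i]$, hence the $x_i$, are pairwise distinct elements of $\FF_q$.

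Finally I would observe that the two rows of $AG$ are precisely $(y_1, \dots, y_\ell)$ and $(y_1 x_1, \dots, y_\ell x_\ell)$, so every codeword has the form $a\,(y_i)_i + b\,(y_i x_i)_i = (y_i f(x_i))_i$ with $f = a + bX$, $\deg f < 2$; this is exactly $\mathrm{GRS}_2(\mathbf{x}, \mathbf{y})$ for $\mathbf{x} = (x_1, \dots, x_\ell)$ and $\mathbf{y} = (y_1, \dots, y_\ell) \in (\FF_q^\times)^\ell$, whence $\calC = \mathrm{GRS}_2(\mathbf{x}, \mathbf{y})$. There is no real obstacle beyond this bookkeeping; the only place the hypothesis is used is the counting step $\ell \le q < q+1$, which is what lets us avoid the point $[0:1]$ and is precisely why full-length ($\ell = q+1$) doubly-extended Reed–Solomon codes lie outside the scope of the lemma.
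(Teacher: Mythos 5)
Your proof is correct and is essentially the paper's argument in projective dress: the point $P_\infty$ avoided thanks to $\ell \le q < |\PP^1(\FF_q)|$ plays the role of the paper's vector $Q$, the first row of $AG$ is the paper's full-weight codeword $c = (\mu_Q(P_i))_i$, and your $x_i = z_i/y_i$ is the paper's $c^{-1} \ast u$. If anything, your write-up is slightly more careful, since you explicitly verify that the evaluation points $x_i$ are pairwise distinct, a point the paper leaves implicit.
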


\begin{proof}
  First we know that GRS codes are MDS.

  Let $\calC$ be an $[\ell, 2, \ell-1]_q$ code with $2 \le \ell \le q$. Since $\calC$ is MDS, it has dual distance $d^\perp = 3$, and we claim there exists a codeword $c \in \calC$ with Hamming weight $\ell$. Indeed, let $G = (P_1, \dots, P_\ell)$ be a generator matrix of $\calC$, where $P_i \in \FF_q^2$ is written in column. Notice that each point $P_i$ is non-zero (otherwise $d^\perp = 1$) and $0, P_i, P_j$ are not on the same line for $i \ne j$ (otherwise $d^\perp = 2$). Moreover codewords in $\calC$ are simply evaluations of bilinear maps $\mu: \FF_q^2 \to \FF_q$ over $(P_1,\dots,P_\ell)$:
\[
\calC = \{ (\mu(P_1), \dots, \mu(P_\ell)), \mu \in \mathcal{L}(\FF_q^2, \FF_q) \}\,,
\]
and the $P_i$'s are not all on the same line (otherwise, $\dim \calC \le 1$). 

Since $\ell \le q$, there exists $Q = (Q_0, Q_1) \in \FF_q^2 \setminus \{0\}$ such that $Q$ does not lie in the (vector) line defined by any of the $P_i$'s. Let now $\mu_Q(X,Y) = Q_1X - Q_0Y$: it is a non-zero bilinear form which must vanish on a line of $\FF_q^2$, and since $\mu_Q(Q) = 0$, it vanishes on the one spanned by $Q$. To sum up, for every $i \in [1, \ell]$, we have $\mu_Q(P_i) \ne 0$. Hence, $c = (\mu_Q(P_1),\dots,\mu_Q(P_\ell))$ belongs to $\calC$ and has Hamming weight $\ell$.
  
  Let now $u \in \calC$ such that $\{c, u\}$ spans $\calC$. We denote by $c \ast u$ the coordinate-wise product $(c_1 u_1,\dots,c_{\ell}u_{\ell})$ and by $\mathbf{1}$ the all-one vector of length $\ell$. Then $c = \mathbf{1} \ast c$ and  $u = c \ast (c^{-1} \ast u)$, where $c^{-1}$ is the coordinate-wise inverse of $c$ through $\ast$. Hence, the code $\calC$ can be written $c \ast \calC'$ where $\calC'$ has $G' = \binom{\mathbf{1}}{c^{-1} \ast u}$ as generator matrix. It means that $\calC$ is the GRS code with evaluation points $\mathbf{x} = c^{-1} \ast u$, multipliers $\mathbf{y} = c$ and dimension $2$.
\end{proof}

\section*{Acknowledgments}

The author would like to thank Françoise Levy-dit-Vehel and Daniel Augot for their valuable comments, and more specifically the first collaborator for her helpful guidance all along the writing of the paper.


\end{document}